\definecolor{dullmagenta}{rgb}{0.4,0,0.4}   
\definecolor{darkblue}{rgb}{0,0,0.4}
\titlespacing{\section}{0pt}{20pt}{8pt}
\titlespacing{\subsection}{0pt}{7pt}{4pt}
\newcommand{\cnot}{{\sc C-not}}
\newcommand{\qw}[1][-1]{\ar @{-} [0,#1]}
\newcommand{\qwx}[1][-1]{\ar @{-} [#1,0]}
\newcommand{\gate}[1]{*+<.6em>{#1} \POS ="i","i"+UR;"i"+UL **\dir{-};"i"+DL **\dir{-};"i"+DR **\dir{-};"i"+UR **\dir{-},"i" \qw}   
\newcommand{\control}{*!<0em,.025em>-=-<.2em>{\bullet}}
\newcommand{\ctrl}[1]{\control \qwx[#1] \qw}
\newcommand{\targ}{*+<.02em,.02em>{\xy ="i","i"-<.39em,0em>;"i"+<.39em,0em> **\dir{-}, "i"-<0em,.39em>;"i"+<0em,.39em> **\dir{-},"i"*\xycircle<.4em>{} \endxy} \qw}
\newcommand{\push}[1]{*{#1}}
\newcommand{\gategroup}[6]{\POS"#1,#2"."#3,#2"."#1,#4"."#3,#4"!C*+<#5>\frm{#6}}
\newcommand{\lstick}[1]{*!R!<.5em,0em>=<0em>{#1}}
\newcommand{\Qcircuit}{\xymatrix @*=<0em>}
\newcommand{\comu}[2]{[#1, #2]}
\newcommand{\comuCirc}[3]{[#1, #2]_{#3}}
\renewcommand{\rho}{\varrho}
\renewcommand{\phi}{\varphi}
\def\id{\mathbbm 1}
\def\x{1.5} 
\def\y{0.9} 
\newcommand{\TRightI}[1]{{T^{\textnormal{forward}}_{#1}}}
\newcommand{\MLeft}{{M^{\textnormal{backward}}}}
\newcommand{\TRight}{{T^{\textnormal{forward}}}}
\newcommand{\TLeft}{{T^{\textnormal{backward}}}}
\newcommand{\CRight}{{C^{\textnormal{forward}}}}
\newcommand{\CLeft}{{C^{\textnormal{backward}}}}
\newcommand{\MRight}{{M^{\textnormal{forward}}}}
\newtheorem{theorem}{Theorem}[section]
\newtheorem{lemma}[theorem]{Lemma}
\theoremstyle{definition}
\newtheorem{definition}[theorem]{Definition}
\newtheorem{rmk}[theorem]{Remark}
\begin{document}

\title{\LARGE  Exact and practical pattern matching\\for quantum circuit optimization}

\author[1,2]{Raban Iten\thanks{These authors contributed equally to this work.}\thanks{Email: \href{mailto:itenr@itp.phys.ethz.ch}{itenr@itp.phys.ethz.ch}}}
\author[1]{Romain Moyard$^*$\thanks{Email: \href{mailto:rmoyard@ethz.ch}{rmoyard@ethz.ch}}}
    \author[1]{Tony Metger\thanks{Email: \href{mailto:tmetger@ethz.ch}{tmetger@ethz.ch}}}
\author[1,2]{David Sutter\thanks{Email: \href{mailto:dsu@zurich.ibm.com}{dsu@zurich.ibm.com}}}
\author[2]{Stefan Woerner\thanks{Email: \href{mailto:wor@zurich.ibm.com}{wor@zurich.ibm.com}}}
\affil[1]{Institute for Theoretical Physics, ETH Zurich, Switzerland}
\affil[2]{IBM Quantum, IBM Research -- Zurich, Switzerland}

\date{}

\maketitle
  
\begin{abstract}
Quantum computations are typically compiled into a circuit of basic quantum gates.
Just like for classical circuits, a quantum compiler should optimize the quantum circuit, e.g.~by minimizing the number of required gates.
Optimizing quantum circuits is not only relevant for improving the runtime of quantum algorithms in the long term, but is also particularly important for near-term quantum devices that can only implement a small number of quantum gates before noise renders the computation useless.
An important building block for many quantum circuit optimization techniques is pattern matching, where given a large and a small quantum circuit, we are interested in finding all maximal matches of the small circuit, called pattern, in the large circuit, considering pairwise commutation of quantum gates.

In this work, we present a classical algorithm for pattern matching that provably finds all maximal matches in time polynomial in the circuit size (for a fixed pattern size). Our algorithm works for both quantum and reversible classical circuits.
We demonstrate numerically that our algorithm, implemented in the open-source library Qiskit, scales considerably better than suggested by the theoretical worst-case complexity and is practical to use for circuit sizes typical for near-term quantum devices.
Using our pattern matching algorithm as the basis for known circuit optimization techniques such as template matching and peephole optimization, we demonstrate a significant ($\sim$30\%) reduction in gate count for random quantum circuits, and are able to further improve practically relevant quantum circuits that were already optimized with state-of-the-art techniques.

\end{abstract}

\clearpage

\section{Introduction}

Quantum computers promise computational advantages over their classical counterparts for various problems such as finding prime factors~\cite{shorAlgo}, solving linear equations~\cite{HHL09}, or finding elements in a database~\cite{groverAlgo}.
However, the experimental implementation of a quantum computer is a challenging task and in the near future, we can only expect to have access to noisy intermediate scale quantum (NISQ) devices~\cite{preskill2018quantum}.
These devices suffer from two main restrictions: firstly, the number of qubits is limited to $\sim 100$, and secondly these qubits are noisy, meaning that the quantum information stored in them degrades over time --- an effect called decoherence \cite{nielsenChuang_book}.

Decoherence limits the number of quantum gates that can be applied in a quantum circuit, and hence limits the complexity of computations that can be performed on near-term devices. 
Hence, it is crucial to optimize the quantum circuit executed on the device.
One approach is to try to reduce the number of quantum gates, i.e., to find an equivalent way of decomposing a quantum computation into quantum gates (from a fixed universal gate set) that requires less gates than the original circuit.
While finding the optimal circuit is a \textsf{QMA}-hard problem\footnote{\textsf{QMA} is the quantum analog of the classical complexity class \textsf{MA}, which in turn is the probabilistic analog of \textsf{NP}.}~\cite{janzing2003identity}, various practical approaches have been suggested for simplifying quantum circuits that might not lead to optimal circuits, but nonetheless can significantly lower the implementation cost (see for example~\cite{maslov_quantum_2006,Prasad:2006:DSA:1216396.1216399,mathias,Rahman:2014:AQT:2711453.2629537,Duncan2019GraphtheoreticSO}).
Beyond reducing the gate count, a number of more device-specific approaches have been investigated. 
For example, it can be beneficial to replace common sequences of gates by equivalent (and possibly longer) sequences that mitigates errors, e.g. if on a particular experimental device, implementing one gate introduces much more noise than implementing another.
Additionally, if the device consists of multiple groups of qubits on different chips, implementing operations involving qubits on different chips might be costly, so the quantum circuit should be optimized to minimize the number of such inter-chip gates.

In this work, we address an essential building block underlying many of these methods, namely the ability to find patters (or sub-circuits) in a large quantum circuit.
This enables a classical pre-processor to find patterns in a given quantum circuit and replace them with optimized variants that minimize gate counts or mitigate errors.
More precisely, we are interested in finding all matches of a pattern under pairwise commutation of gates, i.e., considering all possible orderings of quantum gates that arise by repeatedly swapping commuting gates in the circuit. The fact that different quantum gates may or may not commute is what makes this task challenging.
Suppose we are given a (potentially large) quantum circuit $C$ and a \emph{pattern}, i.e., another (small) quantum circuit $T$, expressed in terms of an arbitrary fixed set of gates.
Our goal is to find all maximal matches of the pattern $T$ in the circuit $C$, i.e., all instances of a maximally-sized sub-circuit of $T$ being equal to a sub-circuit in $C$, up to pairwise commutation of gates.
In particular, we do not require that the whole pattern $T$ can be matched to a sub-circuit in $C$, which is called a \emph{complete match}. 
If no complete match exists, we want to find the largest possible sub-circuit of $T$ that does have a match in the circuit $C$.

We give the first practical algorithm for this task that provably finds all matches of a pattern and whose worst-case complexity scales polynomially as a function of the circuit size (for a constant-sized pattern). 
Specifically, the time complexity of our algorithm as a function of the number of gates $|T|$ and qubits $n_T$ in the pattern and gates $|C|$ and qubits $n_C$ in the circuit scales as 
\begin{align} \label{eq_runntime}
 \mathcal{O}\left(  |C|^{|T|+3} |T|^{|T| + 4}\, n_C^{n_T-1} \right) \, .
\end{align}

A typical application is to find all matches of a small (constant-sized) pattern in a very large circuit $|C|$.  While in that case, our algorithm scales polynomially in principle, the degree $|T| + 3$ of the polynomial can be quite large.
In numerical experiments on random quantum circuits, we find that our algorithm scales considerably better for a fixed pattern size, achieving a roughly quadratic scaling in the number of gates in the circuit $C$ for a fixed number of qubits $n_C$ (\cref{sec:numerics_intro}).
In addition, we present heuristics that can significantly improve the runtime of the algorithm, at the cost of no longer being guaranteed to find all maximal matches (\cref{sec:heuristics_intro}). In particular, these heuristics address the potentially costly scaling $n_C^{n_T - 1}$ in the number of qubits $n_C$ and $n_T$ in the circuit and pattern, respectively.
The trade-off between finding all matches and lower runtime can be handled by the choice of parameters that regulate the heuristics. We numerically analyze how these heuristics impact the runtime and number of found matches. 



\subsection{Previous work} \label{sec:prev_work}

In this section, we discuss previous work on pattern matching in circuits and explain the additional difficulties one encounters in the quantum setting compared to the classical one (see also Section~\ref{sec:possible_difficulties} for a detailed description of arising difficulties).
Pattern matching for classical circuits is well studied (see for example~\cite{watanabe_new_1983,luellau_technology_1984,boehner_logex-automatic_1988,ohlrich_subgemini:_1993}) and has found many applications in the context of computer-aided design (see~\cite{ohlrich_subgemini:_1993} and references therein). 
If all gates in a circuit commute, pattern matching is straightforward: we can simply check whether all gates in the pattern can be found in the circuit.
The other extreme case is to assume that none of the gates in a circuit commute (apart from the trivial commutations of gates acting on different wires in the circuit).
For (non-reversible) classical circuits, this is a realistic assumption.
This case was reduced to the  \emph{subgraph isomorphism problem} in~\cite{ohlrich_subgemini:_1993}, which is an \textsf{NP}-complete problem (for a variable size of the subgraph)~\cite{Cook:1971:CTP:800157.805047}, so we cannot expect to find a general polynomial-time algorithm for this problem.
However, for a fixed size of the pattern a polynomial-time algorithm for pattern matching is possible since for a fixed size of the subgraph, the subgraph isomorphism problem can be solved efficiently~\cite{ullmann76}.

\paragraph{Pattern matching for reversible and quantum circuits.}
The quantum case (and the case of reversible circuits, to which our results also apply) considered here lies in between the two extremes of fully commuting and fully non-commuting gates: some of the quantum gates commute, while others do not.
This introduces an additional difficulty compared to the fully non-commuting case since in comparing the pattern and the circuit, we must also consider all possible re-orderings of gates in both the pattern and circuit and see whether one of these re-orderings yields a match.
Translated to the picture of subgraph isomorphism, where the circuit is represented as a directed acyclic graph and one tries to find the subgraph representing the pattern, this corresponds to solving the subgraph isomorphism problem with the additional difficulty that some of the vertices are allowed to be interchanged.
The rules for interchanging vertices are derived from the commutation of the quantum gates correspoding to these vertices.
As far as we know, this more complex task has not yet been studied in the general context of graph matching algorithms.

In~\cite{Maslov_reversible_logic, Rahman:2014:AQT:2711453.2629537}, heuristic pattern matching algorithms were introduced. The algorithm presented in~\cite{Maslov_reversible_logic} was then applied in~\cite{Maslov_Toffoli,maslov_quantum_2006} for reversible logic synthesis and achieves low runtimes. In~\cite{mathias}, a pattern matching algorithm is presented that provably finds all matches. It is based on mapping the circuit to a satisfiability modulo theory problem and applying a specific solver to it. Moreover, it is shown that finding all the matches indeed helps to significantly reduce the gate counts further compared to heuristic approaches. Unfortunately, this improvement comes in trade-off with the runtime of the algorithm. The algorithm is not efficient, i.e., its worst-case time complexity grows exponentially in $|C|$, and its practical runtimes are significantly higher than for the heuristic approaches (see~\cite{mathias} for a comparison with~\cite{Maslov_reversible_logic}).\footnote{Unfortunately, we could not get access to the code for the algorithm in~\cite{mathias} and hence were unable to perform a detailed numerical comparison. However, a rough comparison suggests that our algorithm scales significantly better than the one suggested in~\cite{mathias}.}
A simple combinatorial algorithm was pointed out to us by Robin Kothari~\cite{kothari_private}. Here, the idea is to loop trough all possible assignments of gates in the pattern with gates in the circuit, and then (efficiently) check if the assignment leads to a valid match. For completeness, we outline this algorithm in \cref{app:combinatorial_algo}.
This algorithm has a similar worst-case complexity as ours, but performs much worse in practice because the process of trying all assignments cannot be terminated early, whereas our algorithm typically runs much faster than its worst-case complexity suggests (see \cref{sec:numerics_intro}).

\paragraph{Circuit optimization from pattern matching.}
One of the main applications of pattern matching in quantum circuits is to reduce the gate count of the circuit.
The idea for optimizing quantum circuits using a pattern matching algorithm was introduced in~\cite{Maslov_reversible_logic} based on the rewriting rules found in~\cite{iwama02}.  
For this, one considers the special case where the pattern implements the identity operation. 
In the context of circuit optimization, such patterns are often called \emph{templates}.
More formally, in~\cite{maslov03} a template $T$ is defined as a sequence of unitary gates $U_i$ such that  $U_{|T|}\dots U_1= \id$.

To see how a pattern matching algorithm can be applied to optimize a circuit, assume that the pattern matching algorithm finds the gate sequence $U_a \dots U_b$ in a circuit $C$ for some $1 \leq a \leq b \leq |T|$ that matches the template, i.e., that after suitably interchanging commuting gates in the circuit and the template, the gate sequence $U_a\dots U_b$ appears in this order in the commuted circuit (with no other gates in between).\footnote{As explained above, a pattern matching algorithm does not require the complete pattern to be present in the circuit, but can find a maximal subsequence of gates in the template that matches a part of the circuit.}
 Since the full template implements the identity operator and since each unitary $U_i$ has an inverse $(U_i)^{-1}=U_i^{\dagger}$, we find an alternative representation of this gate sequence as $U_a\dots U_b= U_{a-1}^{\dagger}\dots U_1^{\dagger} U_{|T|}^{\dagger}\dots U_{b+1}^{\dagger}$. If this gate sequence has lower implementation cost than the original one, we may replace the found match with the gate sequence $U_{a-1}^{\dagger} \dots U_1^{\dagger}U_{|T|}^{\dagger}\dots U_{b+1}^{\dagger}$ in the circuit. 
 This also shows why it is important to find maximal matches: if the gate sequence $U_a \dots U_b$ is as long as possible, then the equivalent expression $U_{a-1}^{\dagger}\dots U_1^{\dagger} U_{|T|}^{\dagger}\dots U_{b+1}^{\dagger}$ contains as few gates as possible, so that we save as many gates as possible by replacing the former with the latter.
Indeed, in~\cite{mathias} it was shown that using an exact algorithm like ours for pattern matching can reduce the gate count of reversible circuits further by up to 28\% compared to the heuristic algorithms~\cite{Maslov_reversible_logic}.
 Another circuit optimization technique, called peephole optimization~\cite{prasad_data_2006, kliuchnikov_optimization_2013}, is also based on pattern matching and is described in \cref{sec:peephole}.


\subsection{Pattern matching algorithm for quantum circuits} \label{sec:algo_intro}
In this section, we give an overview of our pattern matching algorithm and show how it works on a concrete example. The detailed description of the algorithm, including pseudocode, a proof that all maximal matches are found, and an analysis of the worst-case time complexity can be found in~\cref{sec:temp_match}.
We start with a brief description of the quantum circuit model and a canonical form of quantum circuits that we use in our work.

\paragraph{Quantum circuit model.} 
In the circuit model of quantum computation, information carried in qubit wires is modified by quantum gates which mathematically are described by unitary operations.  Examples of gates include the \cnot{} gate, Toffoli gate, and single-qubit gates such as the Pauli-$X$, Pauli-$Z$, and Hadamard $H$ gate.
While we pick a fixed gate set for our example below, our algorithm works for an arbitrary gate set; in fact, it suffices just to know which gates commute and which do not, without actually knowing the unitary implemented by each gate.

It is convenient to represent quantum circuits diagrammatically.
Each qubit is represented by a wire and gates are shown using a variety of symbols. Conventionally time flows from left to right.
The $i$-th gate in a circuit $C$ is denoted by $C_i$.
For example, the circuit in \cref{fig_canonical_form_circuit1} shows a \cnot{} gate controlled on qubit 1 and acting on qubit 2, followed by another \cnot{} gate, a Pauli-$Z$ gate, and finally a Toffoli gate controlled on qubits 2 and 3 and acting on qubit 1. 



\begin{figure}[t]
\centering
\begin{subfigure}[b]{0.3\textwidth}
\centering
\scalebox{0.8}{
$$
\Qcircuit @C=0.4em @R=0.4em {
&& & & 1& 2& 3&4 & \\
\lstick{\textnormal{qubit 1}} && & \qw  & \ctrl{1} & \ctrl{2} &\gate{Z}& \targ & \qw  \\
\lstick{\textnormal{qubit 2}} && & \qw  & \targ & \qw  & \qw  & \ctrl{-1}  & \qw  \\
\lstick{\textnormal{qubit 3}} && & \qw  & \qw  & \targ & \qw  & \ctrl{-2} & \qw  
}
$$
}
\subcaption{Circuit $C$}
\label{fig_canonical_form_circuit1}
\end{subfigure}
\begin{subfigure}[b]{0.3\textwidth}
\centering
\scalebox{0.8}{
$$\Qcircuit @C=0.4em @R=0.4em {
& & 3& 2& 1&4 & \\
& \qw   &\gate{Z}  & \ctrl{2} & \ctrl{1} & \targ & \qw  \\
& \qw & \qw  & \qw  & \targ & \ctrl{-1}  & \qw  \\
& \qw & \qw  & \targ & \qw  & \ctrl{-2} & \qw  
}
$$
}
\subcaption{Commuted circuit $C$}
\label{fig_canonical_form_circuit2}
\end{subfigure}
\begin{subfigure}[b]{0.3\textwidth}
\centering
\scalebox{0.6}{
\begin{tikzpicture}
\node[circle,draw] at (0,0) (1) {1};
 \node[circle,draw] at (0,-\y) (2) {2};
 \node[circle,draw] at (0,-2*\y) (3) {3};
 \node[circle,draw] at (\x,-2*\y) (4) {4};
 \draw[->] (1) -- (4);
  \draw[->] (2) -- (4);
    \draw[->] (3) -- (4);
\end{tikzpicture}}
\subcaption{Canonical form of circuit $C$}
\label{fig_canonical_form_caconical_rep}
\end{subfigure}
\caption{An example of a quantum circuit (a), and equivalent quantum circuit obtained by interchanging commuting gates (b), and its canonical form (c), which is independent of the ordering of pairwise commuting gates chosen in the circuit picture.}
\label{fig_canonical_form}
\end{figure}

\paragraph{Canonical form of quantum circuits}
The circuit representation of a quantum computation is usually not unique because various gates may commute. For example, the two circuits represented in Figure~\ref{fig_canonical_form_circuit1} and Figure~\ref{fig_canonical_form_circuit2} implement the same computation.
For some applications, it is desirable to work with a representation of quantum circuits that does not change under pairwise commutation of gates. 
Such a representation, called the \emph{canonical form}, was introduced in~\cite{Rahman:2014:AQT:2711453.2629537} and we will use it extensively in this work.
The canonical form of a quantum circuit is a directed acyclic graph with the following two properties: firstly, vertices in the graph correspond to individual gates in the circuit. We can index all the gates in the circuit (in some fixed order) and label the graph vertices with these indices.
Secondly, the graph has an edge $i \to j$ from vertex $i$ to vertex $j$ if by repeatedly interchanging commuting gates in the circuit, one can bring gate $i$ immediately to the left of gate $j$, but gates $i$ and $j$ themselves do not commute. In other words, in the commuted circuit gate $j$ immediately follows gate $i$, but one cannot change the order of gates $i$ and $j$.
An example for the canonical form is given in \cref{fig_canonical_form_caconical_rep}.
Using the canonical form is not strictly necessary for our algorithm to be efficient, but it will simplify its description and lower the runtime for some subroutines. 
For any circuit its canonical form can be computed efficiently with an algorithm whose worst-case complexity scales quadratically in the number of gates (see Section~\ref{sec:canonical_form}).



\paragraph{Overview of algorithm \texttt{PatternMatch}.} \label{sec:overview_TempMatch}

Given a pattern $T$ and a circuit $C$, we want to find all maximal matches of $T$ in $C$, i.e., all instances where a maximally-sized sub-circuit of $T$ equals a sub-circuit of $C$, taking into account qubit reordering and swapping of commuting gates.
To account for qubit reordering, we simply loop over all assignments of qubits in the pattern to qubits in the circuit.
For a fixed ordering of qubits, we start by picking a \emph{starting gate} from the pattern that matches a gate in the circuit. 
Again, we loop over all possible starting gates and describe the algorithm for a fixed choice of starting gate.

For intuition, it is helpful to consider the circuit reordered in the way that allows for the largest possible match.
Then, we can think of the circuit as consisting of three regions: an unmatched region to the left, a matched region in the middle, and an unmatched region to the right.
On a high level, the algorithm then needs to decide whether to keep a gate in the matched middle region, or to use the commutation relations to push it out to the unmatched side regions.
The difficulty of the problem stems from the fact that this decision depends on all other gates, so the algorithm cannot simply consider one gate at a time.

We can use the starting gate to partition the other gates in the pattern into two parts $\TLeft$ and $\TRight$.
Here, $\TLeft$ contains all the gates in the pattern that can be commuted to the left of the starting gate, and $\TRight$ contains the remaining gates (in the order in which they appear in the circuit, so the first gate in $\TRight$ is the starting gate). We write $T \simeq (\TLeft, \TRight)$ to stress that the two circuits $T$ and $(\TLeft, \TRight)$ implement the same computation, however, the gate order might be different.
Translated to the graph picture, $\TRight$ contains all successors of the starting gate, i.e., all vertices that can be reached from the starting gate by a directed path.
Similarly, we can partition the circuit $C \simeq (\CLeft, \CRight)$.

Our algorithm \texttt{PatternMatch} (\cref{algo:TempMatch}) proceeds in two steps: first, it executes a subroutine called \texttt{ForwardMatch} (\cref{algo:ForwardMatch}) that finds a match of the forward-part $\TRight$ of the pattern in the forward-part $\CRight$ of the circuit, followed by a subroutine \texttt{BackwardMatch} (\cref{algo:BackwardMatch}) that incorporates the gates in $\TLeft$.

The subroutine \texttt{ForwardMatch} essentially matches gates greedily, i.e., it greedily decides whether to include gates in the middle matched part part of the circuit, or push them to the right unmatched part.
Because of the special structure of the forward-part, one can show that a greedy strategy is indeed optimal (see \cref{lem_forwardMatch}).

The subroutine \texttt{BackwardMatch} needs to decide which of the gates from $\TLeft$ to include in the match.
This is a more difficult task because adding gates from $\TLeft$ to the match may require ``unmatching'' some of the matches found by \texttt{ForwardMatch}.
Hence, \texttt{BackwardMatch} needs to make a tradeoff between how many gates to include from the backward direction, and how many matches in the forward direction to destroy for this purpose.
To do so, \texttt{BackwardMatch} builds a tree of possible options and then finds the optimal one among them.

\paragraph{Example for algorithm \texttt{PatternMatch}.} \label{sec_example}
To demonstrate our pattern matching algorithm, we present an example, shown in ~\cref{fig_initial_setting}. 
We stress that the algorithm \texttt{PatternMatch} works for any gate set, not just the one used in the example.
The full algorithm is given in~\cref{sec:pseudo_code}.


As explained above, we need to loop over all possible assignments of qubits in the pattern to qubits in the circuit and all starting gates.
For concreteness, consider the case where the starting gate is chosen as shown in~\cref{fig_initial_setting}, and qubits $(1, 2, 3, 4, 5)$ in the pattern are assigned to qubits $(4, 5, 6, 7, 8)$ in the circuit.
Using \cref{algo:CreateCanonicalForm}, we create the canonical forms $G^{T}$ and $G^{C}$ of the pattern $T$ and the circuit $C$, respectively.
We denote by $G^T_i$ the $i$-th vertex in the graph $G^T$, which is identified with the gate $T_i$, and analogously for $G^C_i$.
Then, we can execute the two main subroutines, \texttt{ForwardMatch} and \texttt{BackwardMatch}.



\begin{figure}[ht!]
\centering
\begin{subfigure}[b]{0.4\textwidth}
\tiny
\[
\Qcircuit @C=0.5em @R=0.35em{
&& 1 & 2 & \push{(3,\!4)} & \push{(5,\!6)} & 7 & 8 & 9 & 10 & 11 & 12 \\ 
\\
\lstick{\textnormal{qubit 1}} & \qw& \targ& \qw & \gate{Z} &\ctrl{1} & \qw  & \qw  & \targ & \qw  & \qw  & \ctrl{3} & \qw  \\
\lstick{\textnormal{qubit 2}}& \qw&\qw& \qw \qw&  \qw  & \targ  & \ctrl{1} & \gate{X} & \ctrl{-1} & \gate{X} & \ctrl{1} & \qw  & \qw  \\
\lstick{\textnormal{qubit 3}} & \qw & \qw  & \qw&\targ  & \qw &\targ   & \qw & \qw  & \qw  & \targ & \qw  & \qw  \\
\lstick{\textnormal{qubit 4}} & \qw  & \ctrl{-3} &   \qw & \qw & \ctrl{1} & \qw & \qw  & \qw  & \qw &\qw & \targ & \qw  \\
\lstick{\textnormal{qubit 5}} & \qw& \qw  & \gate{X} & \ctrl{-2} & \targ  & \qw& \qw& \qw& \qw& \qw& \qw& \qw
 \gategroup{3}{3}{6}{3}{.5em}{.}
 \gategroup{3}{5}{7}{11}{.8em}{--}
}
\] 
\subcaption{Pattern $T$}
\label{fig_Ex1}
\end{subfigure}
\qquad
\begin{subfigure}[b]{0.5\textwidth}
\tiny
\[
 \Qcircuit  @C=0.5em @R=0.4em
{
&& 1 & 2 & 3 & 4 & 5 & 6 & 7 & 8 & 9 & 10 & \push{(11,\!12)} & 13 & \push{(14,\!15)} & 16 & 17 & 18 & 19 & \push{(20,\!21)}  \\
\\
\lstick{\textnormal{qubit 1}} & \qw& \qw  & \qw & \qw & \qw & \qw   & \qw & \qw  & \qw  & \qw  & \ctrl{5} & \qw  & \qw & \qw  & \qw  & \qw  & \qw  & \qw   & \qw  & \qw \\
\lstick{\textnormal{qubit 2}} & \qw  & \qw & \qw & \qw & \qw & \qw  & \ctrl{3} & \qw  & \qw  & \qw  & \qw  & \qw & \qw & \targ & \ctrl{1} & \targ  & \qw  & \qw  & \qw  & \qw  \\
\lstick{\textnormal{qubit 3}}& \qw& \qw & \qw & \qw  & \qw  & \qw  & \qw  & \qw  & \qw  & \qw  & \qw  & \qw  & \qw  & \qw& \targ & \qw   & \qw  & \qw  & \qw  & \qw  \\
\lstick{\textnormal{qubit 4}}& \qw  & \qw & \qw & \qw & \qw & \qw  & \qw  & \targ & \ctrl{1}  & \qw  & \qw& \gate{Z} & \targ & \ctrl{-2} & \qw  & \ctrl{-2} & \ctrl{2} & \ctrl{3} & \gate{X} & \qw  \\
\lstick{\textnormal{qubit 5}}& \qw & \qw & \qw & \qw & \qw & \qw & \targ & \qw  & \targ & \ctrl{1}  & \qw  & \gate{X}  & \ctrl{-1} & \gate{X} & \qw  & \qw  & \qw  & \qw    & \ctrl{1} & \qw  \\
\lstick{\textnormal{qubit 6}}& \qw & \qw  &\targ &\qw&\targ&\qw & \qw  & \qw  & \qw  & \targ & \targ  & \qw  & \qw  & \qw  & \qw& \qw  & \targ & \qw  & \targ & \qw  \\
\lstick{\textnormal{qubit 7}}& \qw &\ctrl{1} &\qw&\ctrl{1} &\ctrl{-1} &\ctrl{1} & \qw  & \ctrl{-3} & \qw& \qw & \qw  & \qw & \qw & \qw& \qw  & \qw   & \qw  & \targ & \qw  & \qw  \\
\lstick{\textnormal{qubit 8}}& \qw  &\targ& \ctrl{-2} &\targ&\ctrl{-1} &\targ& \qw  &\qw  &\qw &\qw  & \qw  &\qw  & \qw & \qw  & \qw   & \qw  & \qw  &\qw & \qw& \qw   \\
\\
&&& 4& 6& &&& 1& 5 & 7 && \push{(3, \! 8)} & 9 & 10 &&&&& 11
        \gategroup{6}{9}{9}{9}{.5em}{.}
		\gategroup{8}{4}{10}{5}{.6em}{--}        
        \gategroup{6}{10}{8}{11}{.6em}{--}
        \gategroup{6}{13}{7}{14}{.6em}{--}
        \gategroup{7}{15}{7}{15}{.6em}{--}
        \gategroup{7}{20}{8}{20}{.6em}{--}
    } 
\]
\subcaption{Circuit $C$}
\label{fig_circuit_start} 
\end{subfigure}
\caption{Pattern $T$ and circuit $C$ used for our example. Indices assigned to the gates are shown above each circuit (with some gates drawn in parallel for readability). In our example, we consider the assignment of pattern qubits $(1, 2, 3, 4, 5)$ to circuit qubits $(4, 5, 6, 7, 8)$ and the starting gate highlighted by a dotted box. The maximal match that our algorithm finds is highlighted by dashed boxes. The indices below the circuit $C$ show which gates from the pattern are assigned to gates in the circuit.
}
\label{fig_initial_setting}
\end{figure}


\begin{figure}[ht!]
\centering
\begin{subfigure}[b]{0.4\textwidth}
\centering
\scalebox{0.5}{
\begin{tikzpicture}
\node[circle,draw,ultra thick] at (\x,0) (3) {3};
\node[circle,draw,fill=lightgray,ultra thick] at (0,-\y) (1) {1};
\node[circle,draw] at (0,-2*\y) (2) {2};
\node[circle,draw,ultra thick] at (\x,-\y) (5) {5};    
\node[circle,draw] at (\x,-2*\y) (4) {4};
\node[circle,draw,ultra thick] at (2*\x,-\y) (7) {7};    
\node[circle,draw] at (2*\x,-2*\y) (6) {6};
\node[circle,draw,ultra thick] at (3*\x,-\y) (8) {8}; 
\node[circle,draw,ultra thick] at (4*\x,-\y) (9) {9}; 
\node[circle,draw,ultra thick] at (5*\x,-\y) (10) {10}; 
\node[circle,draw,ultra thick] at (5*\x,-2*\y) (12) {12}; 
\node[circle,draw,ultra thick] at (6*\x,-\y) (11) {11}; 
\draw[->] (1) -- (3);
\draw[->] (1) -- (5);
\draw[->] (2) -- (4);
\draw[->] (3) -- (9.north west);
\draw[->] (5) -- (7);
\draw[->] (7) -- (8);
\draw[->] (8) -- (9);
\draw[->] (9) -- (10);
\draw[->] (10) -- (11);
\draw[->] (4) -- (6);
\draw[->] (6) -- (12);
\draw[->] (9) -- (12);
\end{tikzpicture}}
\subcaption{Canonical form of $T$}
\label{fig_canonical_T}
\end{subfigure}
\begin{subfigure}[b]{0.55\textwidth}
\centering
\scalebox{0.5}{
\begin{tikzpicture}
\node[circle,draw] at (0,-1) (1) {1};
\node[circle,draw] at (0,-2*\y) (6) {6};
\node[circle,draw,fill=lightgray,ultra thick] at (0,-3*\y) (7) {7};
\node[circle,draw] at (0,-4*\y) (10) {10};
\node[circle,draw] at (\x,-1) (2) {2};    
\node[circle,draw,ultra thick] at (\x,-3*\y) (8) {8};
\node[circle,draw,ultra thick] at (\x,-4*\y) (11) {11};
\node[circle,draw] at (2*\x,-1) (3) {3};
\node[circle,draw,ultra thick] at (2*\x,-3*\y) (9) {9};
\node[circle,draw] at (3*\x,-1) (4) {4};
\node[circle,draw,ultra thick] at (3*\x,-3*\y) (12) {12};
\node[circle,draw] at (4*\x,-1) (5) {5};
\node[circle,draw,ultra thick] at (4*\x,-3*\y) (13) {13};
\node[circle,draw,ultra thick] at (5*\x,-2*\y) (14) {14};
\node[circle,draw,ultra thick] at (5*\x,-3*\y) (15) {15};
\node[circle,draw,ultra thick] at (5*\x,-4*\y) (18) {18};
\node[circle,draw,ultra thick] at (5*\x,-1*\y) (19) {19};
\node[circle,draw,ultra thick] at (6*\x,-2*\y) (16) {16};
\node[circle,draw,ultra thick] at (6*\x,-3*\y) (21) {21};
\node[circle,draw,ultra thick] at (7*\x,-2*\y) (17) {17};
\node[circle,draw,ultra thick] at (8*\x,-2*\y) (20) {20};
\draw[->] (1) -- (2);
\draw[->] (2) -- (3);
\draw[->] (3) -- (4);
\draw[->] (4) -- (5);
\draw[->] (5) -- (19);
\draw[->] (6) -- (9.north west);
\draw[->] (7) -- (8);
\draw[->] (7) -- (11);
\draw[->] (8) -- (9);
\draw[->] (9) -- (12);
\draw[->] (12) -- (13);
\draw[->] (11) -- (13.south west);
\draw[->] (13) -- (14);
\draw[->] (13) -- (15);
\draw[->] (13) -- (18);
\draw[->] (13) -- (19);
\draw[->] (14) -- (16);
\draw[->] (16) -- (17);
\draw[->] (17) -- (20);
\draw[->] (15) -- (21);
\draw[->] (18.south east) -- (20);
\draw[->] (19) -- (20.north west);
\end{tikzpicture}}
\subcaption{Canonical form of $C$}
\label{fig_canonical_C} 
\end{subfigure}
\caption{Canonical forms of the pattern $T$ and the circuit $C$ from~\cref{fig_initial_setting}. The starting gates are marked in gray. The forward parts $\TRight$ and $\CRight$ are highlighted with bold outlines.}
\label{fig_canonical_ex}
\end{figure}

\textbf{Step 1: \texttt{ForwardMatch}.} The subroutine~\texttt{ForwardMatch} (\cref{algo:ForwardMatch}) greedily matches gates in the forward parts of the pattern and circuit (see \cref{lem_forwardMatch} for the proof that the greedy strategy is optimal).
The algorithm starts by initializing a list  $\mathit{MatchedVertexList}=(G_7^C)$ of matched vertices that have direct successors left to consider for matching.  The attribute $\mathit{SuccessorsToVisit}$ of $G_7^C$ is then set equal to $(G_8^C,G_{11}^C)$, since these are the direct successors of the starting vertex $G_7^C$. 

First, the vertex with lowest label in $G_7^C.\mathit{SuccessorsToVisit}$, i.e., the vertex $G^C_8$, is considered for matching (and removed from the list $\mathit{SuccessorsToVisit}$). 
The only gates in $T$ that $G_8^C$ could be matched with are the direct successors of the starting gate $G_1^T$, i.e., the gates $G_3^T$ and $G_5^T$. These gates in the pattern are found by the algorithm \texttt{FindForwardCandidates} (\cref{algo:FindForwardCandidates}).  Note that the algorithm \texttt{FindForwardCandidates} excludes direct successors that can not be moved next to the last matched gate in the pattern (since we require that the matched gates can be moved next to each other).
We find the match $T_5 = C_8$ and we set the attribute $\mathit{matchedWith}$ of vertex $G_8^C$ equal to $G^T_5$ (see the code after line~\ref{ForwardMatch_matching_case} in~\cref{algo:ForwardMatch}). 
Now we also need to consider all direct successors of $G^C_8$ as candidates for a match in the next round, so we set the attribute $\mathit{SuccessorsToVisit}$ of $G^C_8$ equal to $(G^C_{9})$ and add the vertex $G_8^C$ to the list of matched vertices $\mathit{MatchedVertexList}$. 

In the second round of the while-loop in~\cref{algo:ForwardMatch}, we pick the the vertex with the smallest index in all the lists $\mathit{SuccessorsToVisit}$, in this case $G^C_{9}$ (and we remove $G^C_{9}$ from $G^C_{8}.\mathit{SuccessorsToVisit}$). We can match $G^C_8$ with $G^T_5$  and add the direct successors of $G^T_5$ as possible $\mathit{SuccessorsToVisit}$ for the next round.
Repeating this procedure, we subsequently match $C_{11} = T_3, \;C_{12} = T_8 $, and $C_{13} = T_9$.

Then, vertex $G^C_{14}$ (as a direct successor of vertex $G^C_{13}$) is considered for matching with direct successors of the vertex $G^T_9$. 
However, neither of the direct successors of $G^T_9$ (i.e., $G^T_{10}$ and $G^T_{12}$) can be matched with $G^C_{14}$ because the corresponding gates in the circuit differ. 
Therefore, we must \emph{right-block} vertex $G^C_{14}$.
Recall that we can think of the circuit as being composed of a left unmatched part, a middle matched part, and a right unmatched part.
In this picture, right-blocking $G^C_{14}$ means permanently moving the gate $C_{14}$ to the right unmatched part.
In doing so, we necessarily need to push all gates to the right of $C_{14}$ that do not commute with $C_{14}$ to the right unmatched part, too.
In the graph picture, this corresponds to also right-blocking all successors of $G^C_{14}$, i.e., $G^C_{16}, G^C_{17}, G^C_{20}$. These gates will not be considered for matching in the future and remain unmatched.

The remaining rounds of~\texttt{ForwardMatch} proceed analogously: $G^C_{15}$ and $G^T_{10}$ are matched, $G^C_{18}$ is right-blocked (and all its successors are already right-blocked), $G^C_{19}$ and $G^T_{12}$ are matched, and $G^C_{21}$ is matched with $G^T_{11}$. At this point, all vertices in $\CRight$ are either matched or right-blocked, as a result~\texttt{ForwardMatch} terminates.
The state after finishing~\texttt{ForwardMatch} is shown in \cref{fig_forward_ex}.


\begin{figure}[htb]
\centering
\begin{subfigure}[b]{0.4\textwidth}
\centering
\scalebox{0.5}{
\begin{tikzpicture}
\node[circle,draw,ultra thick,fill=lime] at (\x,0) (3) {3};
\node[circle,draw,fill=lightgray,ultra thick] at (0,-\y) (1) {1};
\node[circle,draw] at (0,-2*\y) (2) {2};
\node[circle,draw,ultra thick,fill=lime] at (\x,-\y) (5) {5};    
\node[circle,draw] at (\x,-2*\y) (4) {4};
\node[circle,draw,ultra thick,fill=lime] at (2*\x,-\y) (7) {7};    
\node[circle,draw] at (2*\x,-2*\y) (6) {6};
\node[circle,draw,ultra thick,fill=lime] at (3*\x,-\y) (8) {8}; 
\node[circle,draw,ultra thick,fill=lime] at (4*\x,-\y) (9) {9}; 
\node[circle,draw,ultra thick,fill=lime] at (5*\x,-\y) (10) {10}; 
\node[circle,draw,ultra thick,fill=lime] at (5*\x,-2*\y) (12) {12}; 
\node[circle,draw,ultra thick,fill=lime] at (6*\x,-\y) (11) {11}; 
\draw[->] (1) -- (3);
\draw[->] (1) -- (5);
\draw[->] (2) -- (4);
\draw[->] (3) -- (9.north west);
\draw[->] (5) -- (7);
\draw[->] (7) -- (8);
\draw[->] (8) -- (9);
\draw[->] (9) -- (10);
\draw[->] (10) -- (11);
\draw[->] (4) -- (6);
\draw[->] (6) -- (12);
\draw[->] (9) -- (12);
\end{tikzpicture}}
\end{subfigure}
\begin{subfigure}[b]{0.55\textwidth}
\centering
\scalebox{0.5}{
\begin{tikzpicture}
\node[circle,draw] at (0,-1) (1) {1};
\node[circle,draw] at (0,-2*\y) (6) {6};
\node[circle,draw,fill=lightgray,ultra thick] at (0,-3*\y) (7) {7};
\node[circle,draw] at (0,-4*\y) (10) {10};
\node[circle,draw] at (\x,-1) (2) {2};    
\node[circle,draw,ultra thick,fill=lime] at (\x,-3*\y) (8) {8};
\node[circle,draw,ultra thick,fill=lime] at (\x,-4*\y) (11) {11};
\node[circle,draw] at (2*\x,-1) (3) {3};
\node[circle,draw,ultra thick,fill=lime] at (2*\x,-3*\y) (9) {9};
\node[circle,draw] at (3*\x,-1) (4) {4};
\node[circle,draw,ultra thick,fill=lime] at (3*\x,-3*\y) (12) {12};
\node[circle,draw] at (4*\x,-1) (5) {5};
\node[circle,draw,ultra thick,fill=lime] at (4*\x,-3*\y) (13) {13};
\node[circle,draw,ultra thick,fill=black,text=white] at (5*\x,-2*\y) (14) {14};
\node[circle,draw,ultra thick,fill=lime] at (5*\x,-3*\y) (15) {15};
\node[circle,draw,ultra thick,fill=black,text=white] at (5*\x,-4*\y) (18) {18};
\node[circle,draw,ultra thick,fill=lime] at (5*\x,-1*\y) (19) {19};
\node[circle,draw,ultra thick,fill=black,text=white] at (6*\x,-2*\y) (16) {16};
\node[circle,draw,ultra thick,fill=lime] at (6*\x,-3*\y) (21) {21};
\node[circle,draw,ultra thick,fill=black,text=white] at (7*\x,-2*\y) (17) {17};
\node[circle,draw,ultra thick,fill=black,text=white] at (8*\x,-2*\y) (20) {20};
\draw[->] (1) -- (2);
\draw[->] (2) -- (3);
\draw[->] (3) -- (4);
\draw[->] (4) -- (5);
\draw[->] (5) -- (19);
\draw[->] (6) -- (9.north west);
\draw[->] (7) -- (8);
\draw[->] (7) -- (11);
\draw[->] (8) -- (9);
\draw[->] (9) -- (12);
\draw[->] (12) -- (13);
\draw[->] (11) -- (13.south west);
\draw[->] (13) -- (14);
\draw[->] (13) -- (15);
\draw[->] (13) -- (18);
\draw[->] (13) -- (19);
\draw[->] (14) -- (16);
\draw[->] (16) -- (17);
\draw[->] (17) -- (20);
\draw[->] (15) -- (21);
\draw[->] (18.south east) -- (20);
\draw[->] (19) -- (20.north west);
\end{tikzpicture}}
\end{subfigure}
\caption{State after~\texttt{ForwardMatch}. The starting vertices are marked in grey, the matched vertices in green, and the right-blocked vertices in black.}
\label{fig_forward_ex}
\end{figure}

\textbf{Step 2: \texttt{BackwardMatch}.} 
The subroutine \texttt{BackwardMatch} (\cref{algo:BackwardMatch}) decides which gates from $\TLeft$ to include in the match. As explained above, the main difficulty here is that including gates from $\TLeft$ may require unmatching gates in the forward direction, so the optimal trade-off between matches in the forward and backward direction has to be found. 

We store all vertices in $G^C$ that have not been matched and are not blocked in a sorted list $\mathit{GateIndices}=(10,6,5,4,3,2,1)$ and start by considering the largest one, i.e., vertex $G^C_{10}$, for matching. The indices 2,4,6 in the pattern are then found by \texttt{FindBackwardCandidates}, i.e., we consider all unmatched gates in the pattern as candidates for a next match (in backwards direction).\footnote{If we would for example match $T_2$, we would have to unmatch $T_{12}$, because $T_6$ could not be matched later on in the backward matching process. Our algorithm handles this case correctly, however, in the considered example, such a case does not occur.}
Since $T_2,T_4,T_6 \neq C_{10}$, we cannot match $G^C_{10}$, so the gate $C_{10}$ will be permanently moved to the left unmatched part of the circuit.\footnote{In fact, since $C_{11}$ commutes with all other gates in the circuit, we could equivalently move it to the right unmatched part, i.e., right-block it.} We call this \emph{left-blocking}. Analogously to right-blocking, we also need to left-block all predecessors of a left-blocked vertex (but in the case of $G^C_{10}$, there are none). Similarly, we left-block $G^C_6$.

Then, vertex $G^C_5$ is considered for matching. We find that it matches the candidate vertex $G^T_6$.
Now, we have two options, shown in \cref{fig_backward_ex}: we could either match $G^C_5$, i.e., include the gate $C_5$ in the middle matched part of the circuit. 
Alternatively, we could push the gate $C_5$ all the way to the right unmatched part of the circuit, i.e., right-block  $G^C_5$.\footnote{In principle, we could also left-block $G^C_5$, and hence also block all of its predecessors. However, we prove in~\cref{lem_backwardsMatch} that this case can be ignored as long as matching $G^T_6$ does not destroy other matches (see line~\ref{backwardsMatch_check_if_matches_destroyed} in~\cref{algo:BackwardMatch})}
The second option has the disadvantage that we need to block $G^C_5$ and all its successors, even ones that have already been matched during \texttt{ForwardMatch}.
However, it might enable us to match more of the predecessors of $G^C_5$.
At this stage, we cannot yet decide which option will result in more matches overall, and we keep track of both in a tree of options (called \textit{MatchingScenarios} in~\cref{algo:BackwardMatch}). 


\begin{figure}[htb]
\centering
\begin{subfigure}[b]{0.48\textwidth}
\centering
\scalebox{0.5}{
\begin{tikzpicture}
\node[circle,draw] at (0,-1) (1) {1};
\node[circle,draw,fill=blue,text=white] at (0,-2*\y) (6) {6};
\node[circle,draw,fill=lightgray,ultra thick] at (0,-3*\y) (7) {7};
\node[circle,draw,fill=blue,text=white] at (0,-4*\y) (10) {10};
\node[circle,draw] at (\x,-1) (2) {2};    
\node[circle,draw,ultra thick,fill=lime] at (\x,-3*\y) (8) {8};
\node[circle,draw,ultra thick,fill=lime] at (\x,-4*\y) (11) {11};
\node[circle,draw] at (2*\x,-1) (3) {3};
\node[circle,draw,ultra thick,fill=lime] at (2*\x,-3*\y) (9) {9};
\node[circle,draw] at (3*\x,-1) (4) {4};
\node[circle,draw,ultra thick,fill=lime] at (3*\x,-3*\y) (12) {12};
\node[circle,draw,fill=orange] at (4*\x,-1) (5) {5};
\node[circle,draw,ultra thick,fill=lime] at (4*\x,-3*\y) (13) {13};
\node[circle,draw,ultra thick,fill=black,text=white] at (5*\x,-2*\y) (14) {14};
\node[circle,draw,ultra thick,fill=lime] at (5*\x,-3*\y) (15) {15};
\node[circle,draw,ultra thick,fill=black,text=white] at (5*\x,-4*\y) (18) {18};
\node[circle,draw,ultra thick,fill=lime] at (5*\x,-1*\y) (19) {19};
\node[circle,draw,ultra thick,fill=black,text=white] at (6*\x,-2*\y) (16) {16};
\node[circle,draw,ultra thick,fill=lime] at (6*\x,-3*\y) (21) {21};
\node[circle,draw,ultra thick,fill=black,text=white] at (7*\x,-2*\y) (17) {17};
\node[circle,draw,ultra thick,fill=black,text=white] at (8*\x,-2*\y) (20) {20};
\draw[->] (1) -- (2);
\draw[->] (2) -- (3);
\draw[->] (3) -- (4);
\draw[->] (4) -- (5);
\draw[->] (5) -- (19);
\draw[->] (6) -- (9.north west);
\draw[->] (7) -- (8);
\draw[->] (7) -- (11);
\draw[->] (8) -- (9);
\draw[->] (9) -- (12);
\draw[->] (12) -- (13);
\draw[->] (11) -- (13.south west);
\draw[->] (13) -- (14);
\draw[->] (13) -- (15);
\draw[->] (13) -- (18);
\draw[->] (13) -- (19);
\draw[->] (14) -- (16);
\draw[->] (16) -- (17);
\draw[->] (17) -- (20);
\draw[->] (15) -- (21);
\draw[->] (18.south east) -- (20);
\draw[->] (19) -- (20.north west);
\end{tikzpicture}}
\subcaption{Option 1: match vertex $G^C_5$}
\end{subfigure}
\begin{subfigure}[b]{0.48\textwidth}
\centering
\scalebox{0.5}{
\begin{tikzpicture}
\node[circle,draw] at (0,-1) (1) {1};
\node[circle,draw,fill=blue,text=white] at (0,-2*\y) (6) {6};
\node[circle,draw,fill=lightgray,ultra thick] at (0,-3*\y) (7) {7};
\node[circle,draw,fill=blue,text=white] at (0,-4*\y) (10) {10};
\node[circle,draw] at (\x,-1) (2) {2};    
\node[circle,draw,ultra thick,fill=lime] at (\x,-3*\y) (8) {8};
\node[circle,draw,ultra thick,fill=lime] at (\x,-4*\y) (11) {11};
\node[circle,draw] at (2*\x,-1) (3) {3};
\node[circle,draw,ultra thick,fill=lime] at (2*\x,-3*\y) (9) {9};
\node[circle,draw] at (3*\x,-1) (4) {4};
\node[circle,draw,ultra thick,fill=lime] at (3*\x,-3*\y) (12) {12};
\node[circle,draw,fill=black,text=white] at (4*\x,-1) (5) {5};
\node[circle,draw,ultra thick,fill=lime] at (4*\x,-3*\y) (13) {13};
\node[circle,draw,ultra thick,fill=black,text=white] at (5*\x,-2*\y) (14) {14};
\node[circle,draw,ultra thick,fill=lime] at (5*\x,-3*\y) (15) {15};
\node[circle,draw,ultra thick,fill=black,text=white] at (5*\x,-4*\y) (18) {18};
\node[circle,draw,ultra thick,fill=black,text=white] at (5*\x,-1*\y) (19) {19};
\node[circle,draw,ultra thick,fill=black,text=white] at (6*\x,-2*\y) (16) {16};
\node[circle,draw,ultra thick,fill=lime] at (6*\x,-3*\y) (21) {21};
\node[circle,draw,ultra thick,fill=black,text=white] at (7*\x,-2*\y) (17) {17};
\node[circle,draw,ultra thick,fill=black,text=white] at (8*\x,-2*\y) (20) {20};
\draw[->] (1) -- (2);
\draw[->] (2) -- (3);
\draw[->] (3) -- (4);
\draw[->] (4) -- (5);
\draw[->] (5) -- (19);
\draw[->] (6) -- (9.north west);
\draw[->] (7) -- (8);
\draw[->] (7) -- (11);
\draw[->] (8) -- (9);
\draw[->] (9) -- (12);
\draw[->] (12) -- (13);
\draw[->] (11) -- (13.south west);
\draw[->] (13) -- (14);
\draw[->] (13) -- (15);
\draw[->] (13) -- (18);
\draw[->] (13) -- (19);
\draw[->] (14) -- (16);
\draw[->] (16) -- (17);
\draw[->] (17) -- (20);
\draw[->] (15) -- (21);
\draw[->] (18.south east) -- (20);
\draw[->] (19) -- (20.north west);
\end{tikzpicture}}
\subcaption{Option 2: right-block vertex $G^C_5$}
\label{fig_E11}
\end{subfigure}
\caption{Grey denotes the starting vertex, green marks vertices matched in \texttt{ForwardMatch}, orange marks vertices matched in \texttt{BackwardMatch} (these will never be blocked again), black marks right-blocked vertices, and blue marks left-blocked vertices. We have two options for vertex $5$ that could lead to a maximal match:
we can match vertex $G^C_5$ with vertex $G^T_5$ (option 1); or we right-block $G^C_5$ and all its successors, including the previously matched vertex $G^C_{19}$.}
\label{fig_backward_ex}
\end{figure}

Considering one option at a time, if we matched the vertex $G^C_5$, one finds that no further gates can be matched (without right-blocking some of the predecessors of $G^C_5$, which would also block $G^C_5$; this we do not have to consider, since it would lead to the same scenario as the non-matching case already added to the tree \textit{MatchingScenarios}). Hence, we could match 9 gates in total in this scenario.

On the other hand, if we do not match vertex $G^C_5$, we proceed as follows: no match can be found for $G^C_4$. We could either left- or right-block it. Since all successors of $G^C_4$ are already blocked, but its predecessors are not, right-blocking is the better option. We then see that $G^C_3$ can be matched with $G^T_6$, and $G^C_2$ with $G^T_4$. No match can be found for $G^C_1$, and since it has no predecessors, but matched successors, left-blocking it is the best option. This way, we can match 10 gates in total in this matching scenario, which turns out to be the maximal match (over all qubit assignments and starting gates). The maximal match in the canonical form is shown in ~\cref{fig_full_match}, and in ~\cref{fig_circuit_start} for the circuit picture.

\begin{figure}[htb]
\centering
\begin{subfigure}[b]{0.4\textwidth}
\centering
\scalebox{0.5}{
\begin{tikzpicture}
\node[circle,draw,ultra thick,fill=lime] at (\x,0) (3) {3};
\node[circle,draw,fill=lightgray,ultra thick] at (0,-\y) (1) {1};
\node[circle,draw] at (0,-2*\y) (2) {2};
\node[circle,draw,ultra thick,fill=lime] at (\x,-\y) (5) {5};    
\node[circle,draw,fill=orange] at (\x,-2*\y) (4) {4};
\node[circle,draw,ultra thick,fill=lime] at (2*\x,-\y) (7) {7};    
\node[circle,draw,fill=orange] at (2*\x,-2*\y) (6) {6};
\node[circle,draw,ultra thick,fill=lime] at (3*\x,-\y) (8) {8}; 
\node[circle,draw,ultra thick,fill=lime] at (4*\x,-\y) (9) {9}; 
\node[circle,draw,ultra thick,fill=lime] at (5*\x,-\y) (10) {10}; 
\node[circle,draw,ultra thick] at (5*\x,-2*\y) (12) {12}; 
\node[circle,draw,ultra thick,fill=lime] at (6*\x,-\y) (11) {11}; 
\draw[->] (1) -- (3);
\draw[->] (1) -- (5);
\draw[->] (2) -- (4);
\draw[->] (3) -- (9.north west);
\draw[->] (5) -- (7);
\draw[->] (7) -- (8);
\draw[->] (8) -- (9);
\draw[->] (9) -- (10);
\draw[->] (10) -- (11);
\draw[->] (4) -- (6);
\draw[->] (6) -- (12);
\draw[->] (9) -- (12);
\end{tikzpicture}}
\label{fig_backward_T}
\end{subfigure}
\begin{subfigure}[b]{0.55\textwidth}
\centering
\scalebox{0.5}{
\begin{tikzpicture}
\node[circle,draw,fill=blue,text=white] at (0,-1) (1) {1};
\node[circle,draw,fill=blue,text=white] at (0,-2*\y) (6) {6};
\node[circle,draw,fill=lightgray,ultra thick] at (0,-3*\y) (7) {7};
\node[circle,draw,fill=blue,text=white] at (0,-4*\y) (10) {10};
\node[circle,draw,fill=orange] at (\x,-1) (2) {2};    
\node[circle,draw,ultra thick,fill=lime] at (\x,-3*\y) (8) {8};
\node[circle,draw,ultra thick,fill=lime] at (\x,-4*\y) (11) {11};
\node[circle,draw,fill=orange] at (2*\x,-1) (3) {3};
\node[circle,draw,ultra thick,fill=lime] at (2*\x,-3*\y) (9) {9};
\node[circle,draw,fill=black,text=white] at (3*\x,-1) (4) {4};
\node[circle,draw,ultra thick,fill=lime] at (3*\x,-3*\y) (12) {12};
\node[circle,draw,fill=black,text=white] at (4*\x,-1) (5) {5};
\node[circle,draw,ultra thick,fill=lime] at (4*\x,-3*\y) (13) {13};
\node[circle,draw,ultra thick,fill=black,text=white] at (5*\x,-2*\y) (14) {14};
\node[circle,draw,ultra thick,fill=lime] at (5*\x,-3*\y) (15) {15};
\node[circle,draw,ultra thick,fill=black,text=white] at (5*\x,-4*\y) (18) {18};
\node[circle,draw,ultra thick,fill=black,text=white] at (5*\x,-1*\y) (19) {19};
\node[circle,draw,ultra thick,fill=black,text=white] at (6*\x,-2*\y) (16) {16};
\node[circle,draw,ultra thick,fill=lime] at (6*\x,-3*\y) (21) {21};
\node[circle,draw,ultra thick,fill=black,text=white] at (7*\x,-2*\y) (17) {17};
\node[circle,draw,ultra thick,fill=black,text=white] at (8*\x,-2*\y) (20) {20};
\draw[->] (1) -- (2);
\draw[->] (2) -- (3);
\draw[->] (3) -- (4);
\draw[->] (4) -- (5);
\draw[->] (5) -- (19);
\draw[->] (6) -- (9.north west);
\draw[->] (7) -- (8);
\draw[->] (7) -- (11);
\draw[->] (8) -- (9);
\draw[->] (9) -- (12);
\draw[->] (12) -- (13);
\draw[->] (11) -- (13.south west);
\draw[->] (13) -- (14);
\draw[->] (13) -- (15);
\draw[->] (13) -- (18);
\draw[->] (13) -- (19);
\draw[->] (14) -- (16);
\draw[->] (16) -- (17);
\draw[->] (17) -- (20);
\draw[->] (15) -- (21);
\draw[->] (18.south east) -- (20);
\draw[->] (19) -- (20.north west);
\end{tikzpicture}}
\label{fig_backward_C} 
\end{subfigure}
\caption{Maximal match found by \texttt{PatternMatch} with colours as in \cref{fig_backward_ex}.}
\label{fig_full_match}
\end{figure}




\subsection{Heuristics} \label{sec:heuristics_intro}
The two most computationally expensive parts of our pattern matching algorithm are looping over all possible assignments of pattern qubits to circuit qubits, and building the tree of options that need to be considered by \texttt{BackwardMatch}.
We can speed up these parts of our algorithm using heuristics, at the cost of no longer being guaranteed to find all maximal matches. 
To speed up the assignment of pattern qubits to circuit qubits, note that the choice of starting gate already fixes some qubits, namely the ones on which the starting gate acts. 
If we assume that in addition to the starting gate, a small number of gates preceding or following the starting gate are also part of the maximal match, then we can use these gates to fix further qubits.
To reduce the number of options considered by \texttt{BackwardMatch}, we can repeatedly prune the tree of options and only keep the ones that have led to the highest number of matches until that point, reducing the number of considered options significantly.
Details on both heuristics are given in \cref{sec:heuristics_full}.

Our heuristics are parameterized by ``quality parameters'', meaning that one can adjust the trade-off between classical runtime and output quality depending on how much classical computation one is willing to perform to optimize the quantum circuit.
For small circuits on near-term devices, it may be worth running the algorithm without heuristics to be guaranteed optimal results, whereas for larger circuits in the future, these heuristics will increase the practical utility of our results. In contrast, known heuristic algorithms~\cite{Maslov_reversible_logic, Rahman:2014:AQT:2711453.2629537} provide less control over this trade-off, since they are not based on an exact algorithm like ours.


\subsection{Numerical experiments} \label{sec:numerics_intro}
In this section, we give an overview of numerical experiments demonstrating that our algorithm indeed runs significantly faster in practice than suggested by the worst-case complexity in \cref{eq_runntime} (see Figure~\ref{fig:numerics_scaling_C}). This makes it the first practical algorithm with a provable guarantee of finding all maximal matches. We demonstrate the use of our algorithm for optimizing quantum circuits in Table~\ref{tab:optimization_random}. 
Further numerical results regarding the scaling of our algorithm in the number of gates and qubits, and its performance for optimizing benchmark circuits can be found in Section~\ref{sec:circuit_opt}. 

\begin{figure}[ht]
\hspace{3mm}
\begin{subfigure}[b]{0.3\textwidth}
\centering
\scalebox{0.85}{
$$
    \Qcircuit @C=1.0em @R=0.0em @!R {
	 	\lstick{ {q}_{0} :  } & \gate{X} & \qw & \qw & \qw & \ctrl{1} & \qw & \qw\\
	 	\lstick{ {q}_{1} :  } & \qw & \targ & \ctrl{1} & \gate{X} & \ctrl{1} & \qw & \qw\\
	 	\lstick{ {q}_{2} :  } & \gate{X} & \ctrl{-1} & \targ & \qw & \targ & \qw & \qw\\
	 }
$$
}
\subcaption{Pattern}
\label{fig:pattern}
\end{subfigure}
\hfill
\begin{subfigure}[b]{0.6\textwidth}
\centering
\includegraphics[width=1\textwidth]{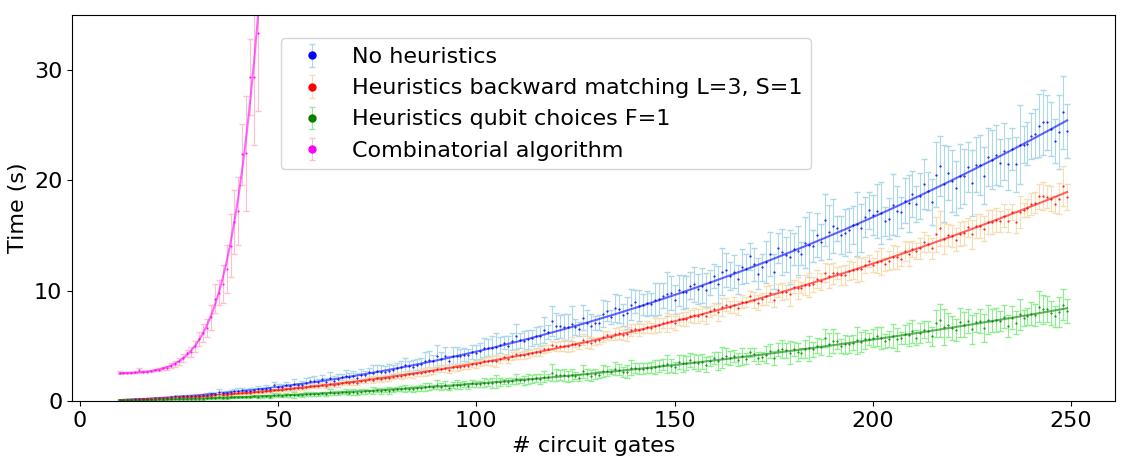}
\subcaption{Runtimes dependent on circuit size $|C|$}
\label{fig:comparison_2_mean}
\end{subfigure}
\caption{ The figure shows the runtimes of \texttt{PatternMatch} for finding maximal matches of the pattern~\ref{fig:pattern} ($n_T=3$, $|T|=6$) in randomly generated circuits with $|C| \in [10,250]$ gates on $n_C=6$ qubits consisting of X, \cnot{} and Toffoli gates. The gates as well as the qubits the gates act on are chosen uniformly at random. For each circuit size, 15 random quantum circuits are generated. We run the \texttt{PatternMatch} algorithm on these 15 circuits (taking as input the canonical form, though creating the canonical form itself scales roughly linearly and only adds approximately 10\% to the runtime -- see \cref{sec:numerics_canonical_form} for details).
We plot the mean and the standard deviation of the runtimes of these 15 circuits. 
The least-squares fit of a quadratic polynomial to the runtimes for the exact algorithm (without heuristics) is $3.84 \cdot 10
^{-4}(|C|^2 + 17.34|C| - 93.25)$. If we were to include a cubic term in the fit, the ratio of the cubic to the quadratic coefficient would 1/6000, so the scaling of the algorithm is indeed approximately quadratic in the circuit size. 
We also show the runtimes of the combinatorial algorithm described in \cref{app:combinatorial_algo} to demonstrate that while the worst-case complexities of both algorithms are similar, ours scales much better in practice. 
Further, we run our algorithm with the two heuristics described in \cref{sec:heuristics_intro}, where $L, S$, and $F$ are the ``quality parameters'' mentioned in \cref{sec:heuristics_intro}.
We find that the heuristics improve the runtime by roughly a constant factor, but do not affect the scaling in practice. 
Using the \texttt{BackwardMatch} heuristics causes the algorithm to miss 0.75 gates per match on average (for matches that are longer than half the pattern).
Using the qubit heuristics causes the algorithm to miss (on average) 39\% of matches that are longer than half the pattern.}
\label{fig:numerics_scaling_C}
\end{figure}


Running the algorithm \texttt{PatternMatch} described in the previous section has a worst-case time complexity $\mathcal{O}\left(  |C|^{|T|+3} |T|^{|T| + 4}\, n_C^{n_T-1} \right)$, where $|C|$ and $n_C$ are the number of gates and qubits in the circuit, and $|T|$ and $n_T$ are the number of gates and qubits in the pattern (see~\cref{sec_complexity} for the detailed analysis). 
In practical applications, one usually tries to find a fairly small pattern of perhaps 5-10 gates on two or three qubits in a potentially very large circuit.
For such a small constant-sized pattern, we therefore have a worst-case runtime of $\mathcal{O}( \text{poly}(|C|,n_C))$, where the degree of the polynomial in $|C|$ may be around 10, i.e., very large for practical applications.
Fortunately, it turns out that our algorithm runs significantly faster in practice than suggested by the worst-case analysis (Figure~\ref{fig:numerics_scaling_C}), with an almost quadratic scaling in the circuit size $|C|$.
On a high level, this is due to the fact that for a typical quantum circuit, the forward part of the circuit (introduced in \cref{sec:algo_intro}) is relatively large.
Therefore, the majority of gates can be matched by the efficient subroutine \texttt{ForwardMatch}, and only a relatively small number of gates needs to be considered by the more inefficient subroutine \texttt{BackwardMatch}. Further, the tree of considered matching scenarios in \texttt{BackwardMatch} is usually much smaller than the one considered in the worst-case, since our algorithm \texttt{BackwardMatch} carefully selects the scenarios that could lead to maximal matches.  
We also find that the heuristics introduced in \cref{sec:heuristics_intro} lead to a roughly constant-factor improvement in runtime (see Figure~\ref{fig:numerics_scaling_C} for details).

We can use our pattern matching algorithm for circuit optimization by template matching as described in \cref{sec:prev_work}.
\cref{tab:optimization_random} gives an overview of the results: we find that for random circuits, we can achieve an approximately 30\% reduction in gate count, both for the exact algorithm and using the qubit heuristics.
We give a more detailed analysis of the use of our algorithm for circuit optimization in \cref{sec:numerics_full}.


\begin{table}[t]
\centering
\scalebox{0.72}{
\begin{tabular}{c|cc|ccc|ccc}
\toprule
{} & \multicolumn{2}{c|}{Before template Matching} & \multicolumn{3}{c|}{After template matching,} & \multicolumn{3}{c}{After template matching,} \\
{} & \multicolumn{2}{c|}{} & \multicolumn{3}{c|}{without heuristics for qubit choice} & \multicolumn{3}{c}{heuristics for qubit choice, $L=1$} \\\midrule
 circuit name & \# gate in circuit $|C|$ & \# qubits $n_C$ & \# gate in circuit $|C|$ & gain (\%) & time (s) & \# gate in circuit $|C|$ & gain (\%) & time (s)\\ \toprule
$Random_{100}$ & 100 & 10 & 70 & 30.0 & 75.75 & 70 & 30 & 42.89 \\
$Random_{300}$ & 300 & 10 & 187 & 37.6 & 388.43 & 188 & 37.3 & 203.08\\
$Random_{500}$ & 500 & 10 & 351 & 29.80 & 1265.11 & 351 & 29.80 & 582.87\\
$Random_{700}$ & 700 & 10 & 494 & 29.4 & 2624.69 & 496 & 29.1 & 1146.27\\
\bottomrule
\end{tabular}
}
\caption{The table shows the results of applying template matching based on \texttt{PatternMatch} with 36 templates given in~\cite{Maslov_2007} (the number $n_T$ of qubits in the templates is ranging from 2 to 4 and the number $|T|$ of gates from 2 to 9) on random circuits with $n_C=10$ qubits and number of gates $|C|$ ranging from 100 to 700. All templates and circuits consist of X, \cnot{} and Toffoli gates. The middle and the right columns show how much the gate count can be reduced using template matching based on \texttt{PatternMatch} without and with heuristics for choosing the qubits with exploration length $L=1$ (defined in Section~\ref{sec:heuristics_full}), respectively. The ratio of $X$, \cnot{} and Toffoli gates that can be saved is roughly 70/25/5 for most of the circuits, i.e., the majority of savings stems from single-qubit $X$ gates.} 
\label{tab:optimization_random}
\end{table}

\subsection{Discussion}
Pattern matching is an important building block for many quantum circuit optimization techniques.
In this paper, we presented the first algorithm that provably finds all maximal matches of a pattern in a quantum or reversible classical circuit, while still scaling efficiently in practice. 
This provides a step towards overcoming the current trade-off between fast heuristic algorithms that might miss maximal matches, and exact algorithms that find all matches, but are highly impractical to run.
Overcoming the limitations of heuristic algorithm is relevant for practical applications: as shown in~\cite{mathias}, finding all maximal matches can help to reduce the gate count of reversible circuits further by up to 28\% compared to the heuristic algorithms~\cite{Maslov_reversible_logic}. 
More concretely, in this work we showed that our matching algorithm can also be used to further simplify circuits that were already optimized with several state-of-the-art techniques (see Table~\ref{tab:optimization_bench}).
 We also show  that our algorithm can be used for peephole optimization~\cite{prasad_data_2006, kliuchnikov_optimization_2013} in Section~\ref{sec:peephole}, where instead of searching for a fixed pattern $T$ in a circuit $C$, we can search for the longest connected parts in $C$ such that every gate only acts on a small subset of the qubits. We demonstrate that we can find sequences of gates (shown in Figure
~\ref{fig:circuit_chain}) that would have been missed by previous algorithms~\cite{prasad_data_2006, kliuchnikov_optimization_2013}.

We analyzed the worst-case complexity of the matching algorithm and showed that it scales polynomially in the circuit size for a fixed pattern size, although the degree of the polynomial does scale linearly with the pattern size.
We showed that in practical applications, the scaling derived in the theoretical worst-case analysis is far too pessimistic: our numerics suggest that in practice, our algorithm scales quadratically in the circuit size for a fixed number of qubits in the circuit and a fixed pattern size.
In addition to our exact algorithm, we also provide heuristics to further reduce the runtime of our algorithm. 
In contrast to previous heuristic algorithms ~\cite{Maslov_reversible_logic, Rahman:2014:AQT:2711453.2629537}, our heuristics are simplifications of our exact algorithm, where the degree of simplification is paramterized by a ``quality parameter''.
This parameter allows us to precisely control the trade-off between finding all maximal matches in the exact algorithm, and faster runtimes provided by the heuristics. 
We note that the heuristics we currently use are very simple and leave it as future work to find more sophisticated heuristics that might offer a better trade-off between runtime and output quality, especially for the choice of qubit assignment.



There are a number of other ways one could generalize our algorithm or further increase its practical utility.
Firstly, we only consider pairwise commutation relations in this work. In general, it could happen that in a circuit $C=(C_1,C_2,C_3)$, no gates commute pairwise, but the unitary corresponding to $(C_1,C_2)$ could commute with the unitary corresponding to $C_3$. Hence, one could bring the circuit $C$ into the form $(C_3,C_1,C_2)$, which could help matching in principle.
However, multiplying gates to check commutation relations is computationally expensive and we leave it for future work to find an efficient and practical implementation of this.

Secondly, to increase the utility of pattern matching algorithms such as ours for circuit optimization, it would be beneficial to find more useful templates, i.e., patterns that evaluate to the identity operation.
While universal decomposition schemes such as the one given in~\cite{iten_quantum_2016} can in principle produce whole parameter groups of templates, it remains to be investigated how useful these templates are in practice, and whether there are ways of systematically generating useful templates.

Thirdly, in addition to template matching, one could also apply our algorithm to circuit optimization schemes that use patterns that do not require a fixed a qubit assignment, since this would eliminate the need for looping over possible qubit assignments in our algorithm. 
For example, one could search for longest sequences of gates that only contain gates in the Clifford group, and then simplify the found circuits using re-synthesizing methods as the ones introduced in~\cite{kliuchnikov_optimization_2013}. 

Finally, extending our work in a graph-theoretic direction, it might also be possible to generalize our approach to the subgraph isomorphism problem under more general vertex rewriting rules, instead of the specific ones based on gate commutation that we use in our work.

\paragraph{Organisation.}
After introducing notation in Section~\ref{sec:notation}, we describe an algorithm that creates the canonical form of quantum circuits in Section~\ref{sec:canonical_form}. In Section~\ref{sec:possible_difficulties} we discuss the difficulties for efficient pattern matching due to the fact that quantum gates may commute. 
We then present and analyze the matching algorithm in Section~\ref{sec:temp_match}. The pseudocode is given in Section~\ref{sec:pseudo_code}. We discuss the correctness and complexity of the algorithm in Section~\ref{sec_correctness} and Section~\ref{sec_complexity}.
Details on the heuristics introduced in Section~\ref{sec:heuristics_intro} are discussed in Section~\ref{sec:heuristics_full}, and all our numerical results are given in Section~\ref{sec:numerics_full}. 

\paragraph{Acknowledgements.} We thank Roger Colbeck, Ali Javadi-Abhari, Vadym Kliuchnikov,  Dmitri Maslov,  Mathias Soeken for helpful discussions. We would like to thank Luca Mondada for working on a pattern matching algorithm as a semester thesis at ETH Z\"urich~\cite{mondada18}.
We thank Robin Kothari for pointing out the simple combinatorial algorithm (see~\cref{app:combinatorial_algo}) used as a benchmark in this work.
RI and DS acknowledges support from the Swiss National Science Foundation through SNSF project No.~200020-165843 and through the National Centre of Competence in Research \textit{Quantum Science and Technology} (QSIT). 
IBM, the IBM logo, and ibm.com are trademarks of International Business Machines Corp., registered in many jurisdictions worldwide. Other product and service names might be trademarks of IBM or other companies. The current list of IBM trademarks is available at \url{https://www.ibm.com/legal/copytrade}.


\section{Preliminaries}
Before giving a formal description of our algorithm in \cref{sec:temp_match}, in this section we introduce some necessary notation and give details on the canonical form of quantum circuits (introduced in \cref{sec:algo_intro}). Additionally, we provide some intuition regarding possible difficulties one may encounter in pattern matching for quantum circuits in \cref{sec:possible_difficulties}.

\subsection{Notation} \label{sec:notation}
We write a circuit $C$ as a gate list  $C=(C_1,\dots,C_{|C|})$, where we assume that the gates are indexed from left to right (with the order of commuting gates chosen arbitrarily).
The unitary performed by the circuit is given by $U=U_{|C|}U_{|C|-1}\dots U_1$, where $U_i$ denotes the unitary corresponding to the gate $C_i$. 
A gate can be any description of a unitary operation together with an ordered list of qubits it acts on, e.g., the gate \cnot{}$(1,4)$ represents a \cnot{}{} gate controlled on the qubit with label 1 and acting on the qubit with label 4. 
If two gates $A$ and $B$ perform the same operation, albeit on qubits with different labels, we write $A \cong B$, e.g., \cnot{}$(1,4)\cong\, $\cnot{}$(2,1)$.
 If the unitaries that represent the circuits $C$ and $D$ are equal up to a global phase shift (which is a scalar multiplication of the matrix with $\mathrm{e}^{\mathrm{i} x}$ for $x\in \mathbb{R}$), we say that the two circuits are represented by the same operator, and we write $C \simeq D$.
The concatenation of two circuits $C=(C_1,\dots,C_{|C|})$ and $D=(D_1,\dots,D_{|D|})$ is denoted by $(C,D) = (C_1,\dots,C_{|C|},D_1,\dots,D_{|D|})$.\footnote{It should always be clear for the context if we mean a tuple of two circuits or the concatenation of them.}
We denote the commutator of the unitaries corresponding to two gates $A$ and $B$ by $\comu{A}{B}$. Moreover, we write $\comuCirc{i}{j}{C}=0$ if and only if $i=j$ or if we can pairwise commute gates in the circuit $C$ such that the order of the gates $C_i$ and $C_j$ is interchanged, i.e., if $i<j$ the gate $C_j$ can be moved before the gate $C_i$ and vice versa for the case $j<i$.
The set $S_n=\{(1,2,3,\dots,n),(2,1,3,\dots,n),\dots \}$ denotes the symmetric group of order $n$, i.e., the set of all possible permutations of $(1,2,\dots,n)$.

The canonical form of quantum circuits introduced in \cref{sec:algo_intro} represents circuits as directed acyclic graphs (DAGs). For a DAG, we denote the set of successors of a vertex $v_i$ in such a graph $G$ by \texttt{Succ}$(v_i,G)$, i.e., \texttt{Succ}$(v_i,G)$ contains all the vertices $v_j$ for which there is a (forward directed) path from vertex $v_i$ to vertex $v_j$. On the other hand, we denote the set of predecessors of a vertex $v_i$ in a graph $G$ by \texttt{Pred}$(v_i,G)$, i.e., the set  \texttt{Pred}$(v_i,G)$ contains all the vertices $v_j$ such that there is a path from vertex $v_j$ to vertex $v_i$. The direct successors and predecessors are the ones that are connected through only one edge to the considered vertex $v_i$, and we call these sets \texttt{DirectSucc}$(v_i,G)$ and \texttt{DirectPred}$(v_i,G)$, respectively.

\subsection{Canonical form for quantum circuits}\label{sec:canonical_form}

The canonical form of a quantum circuit was introduced in \cref{sec:algo_intro}.
An algorithm that constructs the canonical representation of any quantum circuit $C$ with time complexity $\mathcal{O}\left(|C|^2\right)$ was given in~\cite{Rahman:2014:AQT:2711453.2629537} and we describe it in Algorithm~\ref{algo:CreateCanonicalForm} for completeness. 
Furthermore, we prove some basic properties of the canonical form in \cref{lem:Independence_CanonicalForm} and \cref{lem:Property_CanonicalForm}.

\begin{algorithm}[H] 
  \caption{ \texttt{CreateCanonicalForm}: Creates the canonical form of a quantum circuit}
  \label{algo:CreateCanonicalForm}
   \begin{algorithmic}[1]
   \State Input: Quantum circuit $C$  with $|C|$ gates   
   \State{Initialize an empty directed acyclic graph $G$}
   \For{$j \in \{1,2,\dots,|C|\}$}
    \State{Set the attribute \textit{isReachable} to true for all vertices in $G$}
   \State{Add a vertex with label $j$ to the graph $G$}
   \For{$i \in \{j-1,j-2,\dots,1\}$}
   \If{$G_i$.\textit{isReachable}  and $\comu{C_i}{C_j}\neq 0$} \label{algo:create_canonical_if}
   \State{Add a (directed) edge from vertex $G_i$ to vertex $G_j$ in $G$}
   \For{\textit{preDec}$\in$\texttt{Pred}($G_i,G)$}
   \State{\textit{preDec}.$\mathit{isReachable}\leftarrow \mathit{false}$}
    \EndFor
   \EndIf
   \EndFor
   \EndFor
\State Output: the canonical form $G$ of the circuit $C$ 
   \end{algorithmic}
\end{algorithm}

\begin{rmk} \label{rmk:constant_access_to_successors}
We can store the graph in an adjacency-list format, where for every vertex we store a list of all its predecessors and a list of all its successors. This is useful because we get constant time access to the list of all possible successors (or predecessors) of a vertex in $G^C$. Moreover, we can sort the vertices in these lists by their labels, as is done in Algorithm~\ref{algo:InitializeSuccessors}. To estimate the worst-case complexity of \cref{algo:InitializeSuccessors}, note that every vertex $v$ can have at most $C$ direct successor $v_i$, each of which can have at most $|C|$ successors. 
Merging $|C|$ ordered lists $(v_i).\textit{Successors}$, each of length at most $|C|$, in a pairwise manner has complexity $\mathcal{O}(|C|^2)$.
Repeating this for the $C$ vertices $v$ in $G^C$, we get an overall time complexity of $\mathcal{O}(|C|^3)$.
\end{rmk}

\begin{algorithm}[H] 
  \caption{ \texttt{InitializeSuccessors}: Sets an attribute \textit{Successors} of each vertex equal to the list of its successors (ordered according to their labels)}
  \label{algo:InitializeSuccessors}
   \begin{algorithmic}[1]
   \State Input: Canonical form $G$  of a quantum circuit
   \State{Initialize an attribute  $\mathit{Successors}\leftarrow\mathit{null}$  for all vertices in $G$}
   \For{$i \in \{|G|,|G|-1,\dots,1\}$}
   \State{ $(G_i).\mathit{Successors} \leftarrow$ (Sorted) merge of $\texttt{DirectSucc}(G_i,G) \cup \{ v.\mathit{Successors} \; | \; v \in \texttt{DirectSucc}(G_i,G) \}$}
   \EndFor
      \end{algorithmic}
\end{algorithm}


In the following, we will show some properties of the canonical form, which will turn out useful for proving the correctness of our matching algorithm. 

\begin{lemma}[Independence on pairwise gate commutations] \label{lem:Independence_CanonicalForm}
Consider two circuit $C \simeq C'$, where $C'$ is obtained from $C$ by swapping commuting gates, but keeping the gate labels from circuit $C$. Then, the canonical form of $C$ and $C'$ are given by the same directed acyclic graph.
\end{lemma}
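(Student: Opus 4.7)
The plan is to show that the edge set of $G^C$ is an invariant of the pairwise-commutation equivalence class of $C$, from which the lemma follows immediately since $C$ and $C'$ share the vertex set $\{1,\ldots,|C|\}$ via the common labeling. Concretely, I will argue that the edges produced by \cref{algo:CreateCanonicalForm} are exactly the cover relations of the partial order $\prec_C$ on $\{1,\ldots,|C|\}$ defined by
\[
  i \prec_C j \iff \text{$C_i$ precedes $C_j$ in $C$ and } [i,j]_C \neq 0.
\]
Transitivity of $\prec_C$ is immediate from the definition of $[\cdot,\cdot]_C$: if $[i,j]_C\neq 0$ and $[j,k]_C\neq 0$, then in every reordering of $C$ obtained by pairwise commutations, $C_i$ precedes $C_j$ and $C_j$ precedes $C_k$, so $C_i$ precedes $C_k$, giving $[i,k]_C\neq 0$.

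The characterization above can be proved by induction on $j$, showing that at the end of the $j$-th outer iteration of \cref{algo:CreateCanonicalForm}, the set of edges $i\to j$ is exactly the set of cover predecessors of $j$ in the restriction of $\prec_C$ to $\{1,\ldots,j\}$. The \emph{isReachable} bookkeeping is what realizes the cover condition: after the algorithm adds an edge $i''\to j$ while scanning $i''$ from $j-1$ downwards, it flags all strict ancestors of $i''$ in the already-constructed graph as unreachable. By the inductive hypothesis, those ancestors are exactly the labels $m$ with $m\prec_C i''$; since $i''\prec_C j$, each such $m$ already satisfies $m\prec_C i''\prec_C j$ and therefore is not a cover predecessor of $j$, so excluding it is correct. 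Conversely, any label $i$ that \emph{is} a cover predecessor of $j$ cannot be excluded: otherwise there would exist $i'$ with $i<i'<j$, $[C_{i'},C_j]\neq 0$, and $i\prec_C i'$, but together with $i'\prec_C j$ this contradicts the assumption that $i$ covers $j$.

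Next I would establish that the poset $\prec_C$ is an invariant of the pairwise-commutation equivalence class. The condition $[i,j]_C\neq 0$ depends only on the set of reorderings of $C$ reachable by pairwise commutations, which coincides with that of $C'$ by hypothesis, so $[i,j]_C=[i,j]_{C'}$ for every pair $(i,j)$. For the "precedes" part it suffices to consider a single swap: if $C'$ is obtained from $C$ by exchanging two adjacent commuting gates $C_a$ and $C_{a+1}$, then the relative order of any pair $(C_i,C_j)$ with $\{i,j\}\neq\{a,a+1\}$ is unaffected, while for the swapped pair we have $[C_a,C_{a+1}]=0$, so $[a,a+1]_C=0$ and this pair contributes no edge to either graph. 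Iterating along a sequence of adjacent commuting swaps connecting $C$ to $C'$ yields $\prec_C=\prec_{C'}$, and hence the Hasse diagrams, which are $G^C$ and $G^{C'}$, agree.

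The main obstacle is the induction in the first step: one must track the \emph{isReachable} flag along ancestor chains with enough care to verify that it deactivates exactly the non-cover predecessors of $j$ at the moment they are scanned, and nothing more. The remaining ingredients, in particular the invariance of $[i,j]_C$ and of the "precedes" relation under a single adjacent commuting swap, are then essentially syntactic.
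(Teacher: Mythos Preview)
Your approach differs from the paper's and is more conceptual: you identify $G^C$ as the Hasse diagram of the partial order $\prec_C$ and then argue that $\prec_C$ depends only on the pairwise-commutation equivalence class. The paper instead traces \cref{algo:CreateCanonicalForm} directly on $C$ versus $C'$, reducing to the first position where they differ and arguing that the displaced gate acquires the same incoming edges in both runs. Your route is cleaner and essentially yields \cref{lem:Property_CanonicalForm} as a byproduct, whereas the paper proves the two lemmas separately (and in the opposite order).

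There is, however, a real gap in your induction for the Hasse-diagram characterization. For a cover predecessor $i$ of $j$ to receive an edge, it is not enough that $i$ remain \emph{isReachable} (which you do argue); the algorithm also tests $[C_i,C_j]\neq 0$, i.e.\ that the two gates fail to commute as unitaries. The cover relation only gives $[i,j]_C\neq 0$, a strictly weaker statement. The missing implication ``$i$ is covered by $j$ in $\prec_C$ $\Rightarrow$ $[C_i,C_j]\neq 0$'' does hold, but proving it amounts to showing that $\prec_C$ coincides with the transitive closure of the direct non-commutation relation $\{(a,b): a<b,\ [C_a,C_b]\neq 0\}$; this is essentially the content of \cref{lem:Property_CanonicalForm} and needs its own argument (for instance, by showing that the valid reorderings of $C$ are exactly the linear extensions of that latter relation). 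Your identification of ``the main obstacle'' as the \emph{isReachable} bookkeeping misses this point. Relatedly, the claim that the flag ``deactivates exactly the non-cover predecessors of $j$'' is not accurate: a label $i$ that is incomparable to $j$ in $\prec_C$ is never deactivated, yet it receives no edge precisely because $[C_i,C_j]=0$ there. Once the missing implication is supplied, the remaining direction (edge $\Rightarrow$ cover) follows by picking a $\prec_C$-maximal intermediate element and applying the forward direction to it.
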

\begin{proof}
In the construction in  Algorithm~\ref{algo:CreateCanonicalForm}, we only add edges for non-commuting gates.
The order of these non-commuting gates must be the same in $C$ and $C'$, so one can see that one ends up with the same canonical form (however, the vertices might have been added to the graph in different orders).
Let us prove this in more detail. Assume that $i$ is the smallest index such that $C'_i \neq C_i$. Then, there is a gate $C_j$ with $j>i$ such that $C'_i=C_j$ and $C_j$ can be commuted to the position $i$ in $C$ (without moving the gates $C_1,\dots,C_i$). In other words, we have $[C_j,C_k]=0$ for $k \in \{i,\dots,j-1\}$. We have to show that when adding the vertex $G^{C}_j$ to the canonical form in Algorithm~\ref{algo:CreateCanonicalForm}, we add the same edges from $G
^C_1,\dots,G^C_{i-1}$ to $G^C_j$ as the edges in the canonical form $G^{C'}$ from $G^{C'}_1,\dots,G^{C'}_{i-1}$ to $G^{C'}_i$ and that there are no edges added between $G
^C_i,\dots,G^C_{j-1}$ and  $G^{C}_j$. This is evident from line~\ref{algo:create_canonical_if} in the algorithm, since $[C_j,C_k]=0$ for $k \in \{i,\dots,j\}$. The proof can then be completed by induction.
\end{proof}

\begin{lemma} [Necessary and sufficient condition for interchanging gates]\label{lem:Property_CanonicalForm} 
Given a circuit $C$ with a canonical form $G$ and two indices $i<j$, then the following two statements are equivalent:
\begin{enumerate}
\item $\comuCirc{i}{j}{C}=0\, ,$
\item $G_j \notin \textnormal{\texttt{Succ}}(G_i,G)\, .$
\end{enumerate}
\end{lemma}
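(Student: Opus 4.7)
The plan is to prove the two directions separately, leveraging \cref{lem:Independence_CanonicalForm} for the easier direction and a constructive topological-sort argument for the harder one.

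For the implication $[C_i, C_j]_C = 0 \Rightarrow G_j \notin \texttt{Succ}(G_i,G)$, I would argue by contrapositive, exploiting a structural invariant of \cref{algo:CreateCanonicalForm}: whenever the algorithm processes the vertex for position $p$, it adds edges only from vertices at positions strictly smaller than $p$ to the new vertex. Consequently, every edge in the produced graph goes from a smaller-position vertex to a larger-position vertex (with respect to the circuit given as input). Now suppose $[C_i,C_j]_C = 0$, so there is a reordering $C' \simeq C$ (preserving labels) in which $C_j$ precedes $C_i$. By \cref{lem:Independence_CanonicalForm}, the canonical form of $C'$ is the same graph $G$, and by the invariant just noted applied to $C'$, every directed edge of $G$ respects the position order in $C'$. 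Hence any directed path in $G$ from $G_i$ to $G_j$ would force $i$ to appear before $j$ in $C'$, contradicting our choice of $C'$.

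For the reverse implication $G_j \notin \texttt{Succ}(G_i, G) \Rightarrow [C_i,C_j]_C = 0$, I would first establish the auxiliary claim that every topological ordering of $G$ is reachable from $C$ by pairwise commutations. To prove this auxiliary claim by induction on $|C|$, I take any topological ordering $C''$, let $v$ be its first vertex, and let $k$ be the position of $v$ in $C$. The key observation is that $v$ has no predecessors in $G$, and I would show this forces $[C_m, C_v] = 0$ for every $m < k$: otherwise, when \cref{algo:CreateCanonicalForm} processes position $k$, the innermost loop would add an edge $G_m \to G_v$ either directly (if $G_m.\mathit{isReachable}$ is true) or transitively via some vertex whose \emph{isReachable} flag caused $G_m$'s to be cleared, giving $v$ a predecessor in $G$ in either case. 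Therefore $C_v$ commutes with every gate before it in $C$ and can be moved to position $1$ by pairwise swaps; applying the induction hypothesis to the remaining $|C|-1$ gates completes the sub-lemma. Finally, since $G$ is acyclic and contains no path from $G_i$ to $G_j$, there exists a topological ordering of $G$ in which $j$ precedes $i$; by the sub-lemma this ordering is reachable from $C$ via pairwise commutations, witnessing $[C_i,C_j]_C = 0$.

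The main obstacle I anticipate is the sub-lemma in the reverse direction, specifically the step asserting that a topologically first vertex $v$ commutes with every earlier gate in $C$. The subtlety is that the \emph{isReachable} bookkeeping in \cref{algo:CreateCanonicalForm} can suppress a direct edge $G_m \to G_v$ even when $[C_m, C_v] \neq 0$, so I cannot just read off commutation from the edge set; I must track how suppression always produces a transitive path, guaranteeing $v$ inherits $G_m$ as a predecessor through some chain. Once this bookkeeping is handled carefully, the rest of the argument is routine induction and topological sorting.
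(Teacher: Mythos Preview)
Your proof is correct and takes a genuinely different route from the paper's.

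For the direction $[i,j]_C = 0 \Rightarrow G_j \notin \texttt{Succ}(G_i,G)$, the paper argues somewhat informally that ``edges indicate that gates do not commute'' and hence a path from $G_i$ to $G_j$ obstructs the reordering. Your argument via \cref{lem:Independence_CanonicalForm} is cleaner and fully rigorous: feed a reordered circuit with $C_j$ before $C_i$ to \cref{algo:CreateCanonicalForm}, observe that every edge respects input-position order, and invoke invariance of $G$.

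For the converse, the paper proves the contrapositive $[i,j]_C \neq 0 \Rightarrow G_j \in \texttt{Succ}(G_i,G)$ by induction on the number of vertices already inserted, directly tracking the \textit{isReachable} bookkeeping to locate the largest $k$ with $[C_k,C_{n+1}] \neq 0$ and $[i,k]_C \neq 0$, and then arguing that $G_k$ is still reachable when the edge $G_k \to G_{n+1}$ is considered. You instead establish the stronger auxiliary fact that \emph{every} topological ordering of $G$ is realizable by pairwise swaps, and then simply exhibit one placing $j$ before $i$. This buys you a more structural characterization of the canonical form (it encodes exactly the set of realizable gate orders), at the cost of one extra step you glossed over: when you ``apply the induction hypothesis to the remaining $|C|-1$ gates'', you are implicitly using that the canonical form of $C$ with $C_v$ deleted equals $G$ with $G_v$ deleted. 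This holds precisely because $G_v$ has no predecessors, so its processing in \cref{algo:CreateCanonicalForm} never clears any other vertex's \textit{isReachable} flag; the verification is routine but should be made explicit in a full write-up. Your anticipated obstacle (that suppression of a direct edge $G_m \to G_v$ still yields a transitive predecessor of $v$) is correctly identified and correctly handled.
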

\begin{proof}
From the construction in Algorithm~\ref{algo:CreateCanonicalForm} it is clear that if $G_j \in $ \texttt{Succ}$(G_i,G)$, we have that the gate $C_j$ can not be moved before the gate $C_i$ and hence $\comuCirc{i}{j}{C}\neq 0$, because the edges indicate that gates do not commute. It thus remains to show that  $\comuCirc{i}{j}{C}\neq0$ implies $G_j \in \texttt{Succ}(G_i,G)$. We show the claim by induction over the number of vertices in the graph $G$, where we build up the graph by adding the vertices in increasing order according to their labels.
The claim is clear for a graph consisting of two vertices, since we add an edge between them in Algorithm~\ref{algo:CreateCanonicalForm} if and only if the two corresponding gates in the circuit $C$ do not commute. 
Let us now consider the graph consisting of vertices $G^C_1,\dots,G^C_n$ and assume that the claim is true for all possible pairs of indices $i<j$ with $i,j \in \{1,2,\dots,n\}$. We have to show the claim for the graph consisting of vertices $G^C_1,\dots,G^C_{n+1}$. Since the vertex $G^C_{n+1}$ has no effect for the statement for indices $i<j$ with $i,j \in \{1,2,\dots,n\}$, we have left to show the statement for indices $j=n+1$ and an arbitrary $i<n+1$.
Clearly,  $\comuCirc{i}{n+1}{C}\neq0$ implies that there exists an index $i \leq k \leq n$ with $\comu{C_{n+1}}{C_k}\neq 0$ and with $\comuCirc{i}{k}{C} \neq 0$ if $k \neq i$. Let us choose the largest such index $k$. 
Then, by the induction assumption,  $G_k \in \texttt{Succ}(G_i,G) \cup \{G_i \}$. By the construction in Algorithm~\ref{algo:CreateCanonicalForm}, the vertex $G_k$ is still accessible if it is visited in the inner loop with respect to $i=n+1$ in the outer loop, and hence, an edge from $G_k$ to $G_{n+1}$ is added and since   $G_k \in \texttt{Succ}(G_i,G) \cup \{G_i \}$, we find   $G_{n+1} \in \texttt{Succ}(G_i,G)$. Indeed, to see that the vertex $G_k$ is still accessible when it is visited in Algorithm~\ref{algo:CreateCanonicalForm}, assume by contradiction that it would not be accessible. Then, $G_k$ must be a predecessor of a vertex $G_{k'}$ with $i\leq k<k'<n+1$ such that $\comu{G_{k'}}{G_{n+1}}\neq 0$. Since  $G_k \in \texttt{Succ}(G_i,G) \cup \{G_i \}$, this would imply that $G_{k'} \in  \texttt{Succ}(G_i,G)$. By the induction assumption, this means $\comuCirc{i}{k'}{C} \neq 0$. However, we assumed that $k$ is the largest index with these properties, and hence this leads to $k=k'$, which contradicts $k<k'$.
\end{proof}

\subsection{Difficulties for matching quantum circuits} \label{sec:possible_difficulties}
In this section, we give some intuition regarding possible difficulties a pattern matching algorithm for quantum circuits must overcome.
These difficulties mainly arise since we want to find all maximal matches considering pairwise commutation of gates, and from the fact that only some quantum gates commute, while others do not.
The following list may help the reader understand the structure of the pattern matching algorithm given in Section~\ref{sec:temp_match}, which handles all of these problems efficiently.

\begin{enumerate}
\item \textbf{Maximal match requires reordering of gates.} The simplest problem that appears due to commuting gates is illustrated in Figure~\ref{fig_tempMatch}.
If we just start matching the first gate of the pattern with the first gate of the circuit, we need to assign the second qubit of the pattern with the third one of the circuit, and hence the third gate of the pattern will not match. However, clearly the two circuits could be fully matched by commuting the first two gates in the circuit.\footnote{We recall that the target as well as the control nodes of different \cnot{}{} gates commute. However a target node does not commute with a control node.} 
\begin{figure}[htb]
\centering
\begin{subfigure}[b]{0.35\textwidth}
\[
\Qcircuit @C=0.4em @R=0.4em {
&&1&2&3\\\\
& \qw  & \ctrl{1} & \ctrl{2} & \targ & \qw  \\
& \qw  & \targ & \qw  & \qw  & \qw  \\
& \qw  & \qw  & \targ & \ctrl{-2} & \qw  
}
\]
\subcaption{Pattern}
\end{subfigure}
\begin{subfigure}[b]{0.35\textwidth}
\[\Qcircuit @C=0.4em @R=0.4em {
&&2&1&3\\ \\
& \qw  & \ctrl{2} & \ctrl{1} & \targ & \qw  \\
& \qw  & \qw  & \targ & \qw  & \qw  \\
& \qw  & \targ & \qw  & \ctrl{-2} & \qw  
}
\]
\subcaption{Circuit}
\end{subfigure}
\caption{Ordering problem when matching a pattern with a circuit. The numbers above the pattern label its gates. The numbers above the circuit which gate from the pattern should be matched with which gate from the circuit to achieve a maximal match.}
\label{fig_tempMatch}
\end{figure}


\item \textbf{Greedy strategy is not optimal.} \label{problems:additional_gates} It may happen that matching some gates early in the matching process will block other gates later in the matching process. Hence, a straightforward greedy approach is not necessarily optimal. Let us consider the following pattern and circuit depicted in Figure~\ref{fig_tempMatch2}.
If we match the first two gates, the third gate will not match. Furthermore, it is not possible to commute the matched gates next to the last gate in the circuit (which matches the third gate of the pattern) or vice versa. However, there does exist a full match, shown in Figure~\ref{fig_tempMatch2}, that does not match the second gate in the second (even though it could) in order to be able to match more gates later on.

\begin{figure}[htb]
\centering
\begin{subfigure}[b]{0.35\textwidth}
\[
\Qcircuit @C=0.4em @R=0.4em {
&&1&2&3\\ \\
& \qw  & \ctrl{1} & \ctrl{2} & \qw  & \qw  \\
& \qw  & \targ & \qw  & \targ & \qw  \\
& \qw  & \qw  & \targ & \ctrl{-1} & \qw  
}
\]
\subcaption{Pattern}
\end{subfigure}
\begin{subfigure}[b]{0.35\textwidth}
\[
\Qcircuit @C=0.4em @R=0.4em {
&&1&&&2&3\\ \\
& \qw  & \ctrl{1} & \ctrl{2} &\ctrl{1} & \ctrl{2} & \qw  & \qw  \\
& \qw  & \targ & \qw  & \targ & \qw  & \targ & \qw  \\
& \qw  & \qw  & \targ & \ctrl{-1} & \targ & \ctrl{-1} & \qw  
 \gategroup{3}{4}{5}{4}{.5em}{-}
}
\]
\subcaption{Circuit}
\end{subfigure}
\caption{Greedy matching does not always lead to a maximal match. The additional gate that would be matched in a greedy matching process starting at the first gate is marked with a solid box. The numbers above the pattern and circuit show the maximal match that can be found with a non-greedy strategy.}
\label{fig_tempMatch2}
\end{figure}

\item \textbf{Disturbing gates.} 
We consider a pattern and a circuit as given in Figure~\ref{fig_tempMatch3}. The second gate in the circuit ``disturbs'' the match. The maximal match is found by commuting it as far as possible to the left or the right. In the considered case, we can match three gates (instead of two) by commuting the second gate as far as possible to the left, i.e., leaving it where it is, instead of commuting it to the right.
Disturbing gates are difficult to handle in general, since it is a priori unclear whether one should try to move them to the right or to the left in the circuit. If one always considers both options, the time complexity of such an algorithm would be exponential in the number of disturbing gates. 
\begin{figure}[htb]
\centering
\begin{subfigure}[b]{0.35\textwidth}
\[
\Qcircuit @C=0.4em @R=0.4em {
&&1&2&3&4\\\\
& \qw  & \qw  & \ctrl{2} & \qw  & \targ & \qw  \\
& \qw  & \ctrl{2} & \qw  & \ctrl{1} & \qw  & \qw  \\
& \qw  & \qw  & \targ & \targ & \ctrl{-2} & \qw  \\
& \qw  & \targ & \qw  & \qw  & \qw  & \qw  
}
\]
\subcaption{Pattern}
\end{subfigure}
\begin{subfigure}[b]{0.35\textwidth}
\[
\Qcircuit @C=0.4em @R=0.4em {
&&&&2&3&4\\\\
& \qw  & \qw  & \qw  & \ctrl{2} & \qw  & \targ & \qw  \\
& \qw  & \ctrl{2} & \targ & \qw  & \ctrl{1} & \qw  & \qw  \\
& \qw  & \qw  & \qw  & \targ & \targ & \ctrl{-2} & \qw  \\
& \qw  & \targ & \qw  & \qw  & \qw  & \qw  & \qw  \\
& \qw  & \qw  & \ctrl{-3} & \qw  & \qw  & \qw  & \qw  
 \gategroup{4}{4}{7}{4}{.5em}{-}
}
\]
\subcaption{Circuit}
\end{subfigure}
\caption{It might be unclear to which position we should move a disturbing gate (marked by a solid box in the circuit) to find a maximal match. The numbers above the pattern label its gates. The number above the circuit refer to the labels of the gates in the pattern that can be matched with the corresponding gate in the circuit.}
\label{fig_tempMatch3}
\end{figure}

\end{enumerate}

\section{Pattern matching algorithm}\label{sec:temp_match}

We now describe our main contribution, a pattern matching algorithm for quantum circuits. An example of this algorithm was already given in \cref{sec:algo_intro} and we recommend thoroughly understanding the example before proceeding with the pseudocode and correctness proof in this section.

\subsection{Pseudocode for the matching algorithm} \label{sec:pseudo_code}
In this section, we describe the pseudocode of the pattern matching algorithm. We stress that the focus of the code is readability and we do not optimize the constants of the runtime.
We usually think of working with pointers to circuit or graph objects. Hence, an object might be modified by a method call, however it is not given back as an output. As a result, we may sometimes have to copy an object $o$, by calling $o.$\texttt{copy}. 

The algorithm primarily works with the canonical form of quantum circuits (see \cref{sec:canonical_form}). In the pseudocode, we will allow to add different attributes to vertices in the graph representing the circuit, which we access by ``vertex.attribute''. In particular, we always add an attribute ``label'' referring to the gate index corresponding to the vertex. Further, we use $G_i$ to access the vertex with label $i$ in the graph $G$ and we denote the number of vertices in $G$ by $|G|$. Note that we can store the vertices of the graph in an array at positions according to their index to have constant time access to any vertex with known label. Storing incoming and outgoing edges together with each vertex (as pointers to the direct successors and predecessors), we also have constant time access to all direct successors and predecessors of any given vertex. 

We first give the pseudocode for the main algorithm \texttt{PatternMatch}, followed by the two main subroutines \texttt{ForwardMatch} and \texttt{BackwardMatch}.
On the relation between the pseudocode and the example in \cref{sec:algo_intro} we note that the property $\textit{isBlocked}$ in the pseudocode indicates that a vertex has been either right-blocked or left-blocked.
The distinction between right-blocking and left-blocking from the example is useful for intuition, but not necessary for the algorithm.

The algorithm \texttt{PatternMatch} is roughly structured as follows: we loop over all possible starting matches of a gate in the circuit with a gate in the pattern, and over the possible qubit assignments. Then, the algorithms \texttt{ForwardMatch} and \texttt{BackwardMatch} are used to maximally expand the match, under the condition that the starting match must be preserved. Importantly, it is not necessary to always consider the full pattern for matching: if the $i$-th gate of the pattern is matched with a gate in the circuit at the start, it suffices to maximally expand the match for the sub-pattern $(T_i,\dots,T_{|T|})$. The reason for this is that any match that includes a gate $T_j$ from $T_1,\dots,T_{i-1}$ would already have been found in a previous round of the loop where $T_j$ was chosen as the starting gate. This is proven in more detail in Lemma~\ref{thm_correctness}.


\begin{algorithm}[H] 
  \caption{\texttt{PatternMatch}: Pattern matching algorithm }
  \label{algo:TempMatch}
   \begin{algorithmic}[1]
   \State Input: $(C,T)$ \begin{itemize}
   \item Circuit $C$ with $n_C$ qubits and $|C|$ gates
   \item Pattern $T$ with $n_T$ qubits and $|T|$ gates
   \end{itemize}
  \State Initialize a list $W$ to store matches
  \State $G^C \leftarrow \texttt{CreateCanonicalForm}(C)$ \Comment{stored in the adjacency-list format}  \label{TempMatch_canonical_form_C}
  \State $G^T \leftarrow \texttt{CreateCanonicalForm}(T)$ \label{TempMatch_canonical_form_T}
  \State $L_{\textnormal{q}} \leftarrow \{1,2,\dots,n_{C}\}$
  \For{$i \in \{1,2,\dots,|T|\}$}  \Comment{loop through all gate indices of $T$ for starting a match at $T_i$}
   \For{$r \in \{k \in \{1,\dots,|C| \} : C_k \cong T_i \}$ } \Comment{loop through the indices of gates in $C$ with $C_r\cong T_i$}
   \State $L_{\textnormal{q}}^{\textnormal{action}}\leftarrow  \{s \in L_{\textnormal{q}}: \textnormal{the gate $C_r$ is acting non trivially on the qubit with label $s$} \}$
  \For{$L_{\textnormal{q}}^{\textnormal{sel}} \in \{L  \in {L_{\textnormal{q}} \choose n_T}:  L_{\textnormal{q}}^{\textnormal{action}}\subset L \}$}  \Comment{loop through all possible choices of qubits} \label{TempMatch_loop_qubit_choice}
    \State{$L_{\textnormal{q}}^{\textnormal{sel}} \leftarrow$ Sort the qubits in $L_{\textnormal{q}}^{\textnormal{sel}}$ in increasing order according to their labels (store as list)}
  \For{$p \in S_{n_T}$}  \Comment{{\color{gray}loop through all possible qubit orderings}}
  \State{$\tilde{T} \leftarrow$ Label the qubits in $T$ with the labels in $L_{\textnormal{q}}^{\textnormal{sel}}$ using the mapping $t \mapsto  (L_{\textnormal{q}}^{\textnormal{sel}})_{p(t)}$} \label{TempMach_relabeling}
   \If{$C_r=\tilde{T}_i$} \Comment{Check if the qubit permutation is such that $C_r$ matches $\tilde{T}_i$}
    \State {{\color{gray}$\# $ Find the maximal matches of the partial pattern $(\tilde{T}_i,\dots, \tilde{T}_{\tilde{|T|}})$ in the circuit $C$}}
     \State {{\color{gray}$\# $ under the restriction that $\tilde{T}_i$ is matched with $C_r$}}
    \State {{\color{gray}$\# $ We match the maximal part in forward direction of $\tilde{T}_i$:}}
    \State {$\MRight \leftarrow$ \texttt{ForwardMatch}$(C,G^C,\tilde{T}, G^T, L_{\textnormal{q}}^{\textnormal{sel}},r,i)$}  \Comment{Attributs are added to $G^C$}
     \State{  {\color{gray}$\#$  Expand forward match to maximal ones with partial pattern $(\tilde{T}_i,\dots,\tilde{T}_{|\tilde{T}|})$ :}} 
      \State $L' \leftarrow$ \texttt{BackwardMatch}$(C,G^C,\tilde{T},G^T,L_{\textnormal{q}}^{\textnormal{sel}},r,i,\MRight)$
   \State{Add the matches in $L'$ to the list of matches $W$} 
   \EndIf
           \EndFor
       \EndFor
        \EndFor
          \EndFor
   \State Output: the list $W$ of matches 
   \end{algorithmic}
\end{algorithm}

Note that the elements $M$ in the output $W$ in Algorithm~\ref{algo:TempMatch} are sets containing index pairs $(i,j)$ of matched gates, i.e., if the match $M$ contains $(i,j)$, the gate $T_i$ from the pattern was matched with the gate $C_j$ from the circuit. The qubit mapping can then be recovered from the matched gates.

\begin{rmk}
The output of \texttt{PatternMatch} indicates how to match the gates of the pattern with the gates in the circuit. However, the information of how to commute the gates in the circuit to move the matched gates next to each other is not contained for simplicity. This information can be restored efficiently by commuting the gates in between the matched gates to the left or to the right of the match.
\end{rmk}

\begin{algorithm}[H] 
  \caption{\texttt{ForwardMatch}: Find maximal match in forward direction}
  \label{algo:ForwardMatch}
   \begin{algorithmic}[1]
   \State Input: $(C,G^C,T, G^T,L_{\textnormal{q}}^{\textnormal{sel}},r,i)$ 
   \begin{itemize}
   \item Circuit $C$ with canonical form $G^C$
   \item Pattern $T$ with canonical form $G^T$
   \item Ordered list $L_{\textnormal{q}}^{\textnormal{sel}}$ of qubit labels in $C$ (with the first qubit in $L_{\textnormal{q}}^{\textnormal{sel}}$ matched with the first pattern qubit, etc.)
    \item Gate indices $r$ in $C$ and $i$ in $T$ (where we start matching, i.e., the first match is $C_r=T_i$)
   \end{itemize}
     \State{{\color{gray}$\#$ \textbf{Initialization:}}}
     \State{Initialize a set $M\leftarrow \{(i,r)\}$ to store matched gate indices}
     \State{ For all vertices in $G^C$, initialize attributes 
     \begin{itemize}
     \item  $\textit{SuccessorsToVisit} \leftarrow ()$, except set $G^C_{r}.\mathit{SuccessorsToVisit}\leftarrow$ $\big($list containing \texttt{DirectSucc}$(G^C_r,G^C)$, in ascending order of label$\big).$\footnotemark[9]
     \item \textit{matchedWith} $\leftarrow$ \textit{null}, except set $G^C_{r}.\mathit{matchedWith}\leftarrow G^T_i$. 
     \item $\mathit{isBlocked}\leftarrow \mathit{false}$.
     \end{itemize}}
   \State{Initialize a list \textit{MatchedVertexList} and add $G^C_r$ as a first element. The list is ordered in ascending order according to the label of the first element of $\textit{SuccessorsToVisit}$ of each vertex, i.e., $G^C_i$ precedes $G^C_j$ if the label of the first element in $G^C_i.\textit{SuccessorsToVisit}$ is smaller than the label of the first element in $G^C_j.\textit{SuccessorsToVisit}$.\footnotemark[10] }
   \State{{\color{gray}$\#$ \textbf{Forward matching proccess:}}}
   \While{\textit{MatchedVertexList} is not empty}
     \State {$v_0\leftarrow \textit{MatchedVertexList}.\texttt{get}(1)$} \Comment{matched vertex as a root for further matching}
   \If{ $v_0.\mathit{SuccessorsToVisit}$ is empty}
   \State{GoTo ``EndOfWhileLoop''}
\EndIf
  \State{$v \leftarrow v_0.\mathit{SuccessorsToVisit}.\texttt{get}(1)$; $s \leftarrow v.\mathit{label}$}  \Comment{vertex to consider for matching}
\State{\textit{MatchedVertexList}.\texttt{Insert}($v_0$) } \Comment{put  vertex back with modified $\mathit{SuccessorsToVisit}$ } \label{ForwardMatch_insert_v0}
\If{$v.\mathit{isBlocked}$ or $v.\mathit{matchedWith} \neq \mathit{null}$} \Comment{vertex already blocked or matched}
   \State{GoTo ``EndOfWhileLoop''}
\EndIf
   \State{{\color{gray}$\#$  We now try to add the vertex $v$ to the match $M$.}}
       \State{$\mathit{CandidateIndices}\leftarrow \mathit{FindForwardCandidates}(G^T,v_0.\mathit{matchedWith}, M)$} \label{ForwardMatch_find_candidates}
       \If{There exist a $j \in \mathit{CandidateIndices}$ with $C_s=T_j$ } \Comment{\textbf{we found a match with $v$}} \label{ForwardMatch_if} \label{ForwardMatch_matching_case}
       \State{$j \leftarrow$ Choose the  minimal $j \in \mathit{CandidateIndices}$ with $C_s=T_j$}
   \State{$v.\mathit{matchedWith}\leftarrow G^T_j$; Add $(j,s)$ to $M$}
    \State \parbox[t]{\dimexpr\linewidth-\algorithmicindent}{$v.\mathit{SuccessorsToVisit} \leftarrow \{w \in \texttt{DirectSucc}(v,G^C): w.\mathit{isBlocked}=\mathit{false} ;w.\mathit{matchedWith}=\mathit{null} \}$ sorted in ascending order of label } \label{forward_match_set_to_order}
    \State{\textit{MatchedVertexList}.\texttt{Insert}($v$)}
\Else \Comment{\textbf{no match with $v$ was found}}  \label{ForwardMatch_no_matching_case}
  \State{Set the attribute \textit{isBlocked} equal to \textit{true} for the vertex $v$ and all of its successors } \label{algo:ForwardMatch_blocking_successors}
  \EndIf
    \State{Label ``EndOfWhileLoop''}
   \EndWhile
    \State Output: $M$
   \end{algorithmic}
\end{algorithm}
\footnotetext[9]{The method \textit{SuccessorsToVisit}.\texttt{Get}$(i)$   returns the $i$th vertex from \textit{SuccessorsToVisit} and removes it from the list.}

\footnotetext[10]{The method \textit{MatchedVertexList}.\texttt{Get}$(i)$   returns the $i$th vertex from  \textit{MatchedVertexList} and removes it from the list. The method \textit{MatchedVertexList}.\texttt{Insert}$(v)$  adds a vertex $v$ at the correct position according to the ordering of \textit{MatchedVertexList} described in the algorithm.}
\addtocounter{footnote}{+2} 

\begin{algorithm}[H] 
  \caption{\texttt{BackwardMatch}: Find maximal expansions of the forward match in backward direction}
  \label{algo:BackwardMatch}
   \begin{algorithmic}[1]
    \State Input: $(C,G^C,T,G^T,L_{\textnormal{q}}^{\textnormal{sel}},r,i,\MRight)$
     \begin{itemize}
   \item Circuit $C$ with canonical form $G^C$ with attributes ``\textit{matchedWith}'' and ``\textit{isBlocked}'' (which, in the context of the algorithm \texttt{PatternMatch}, have been assigned by \texttt{ForwardMatch})
    \item Pattern $T$ with canonical form $G^T$
   \item List $L_{\textnormal{q}}^{\textnormal{sel}}$ of $n_T$ qubit labels we are matching on
  \item Gate indices $r$ in $C$ and $i$ in $T$ (where we start matching)
   \item Set of matched index pairs  $\MRight$
   \end{itemize}
     \State{{\color{gray}$\#$ \textbf{Initialization:}}}
   \State{$W \leftarrow ()$ }  \Comment{List to store matchings}
   \State{Initialize an attribute ``\textit{isBlocked}=\textit{false}'' and ``\textit{matchedWith}=\textit{null}''  for all vertices in $G^T$}
   \State{Update the attribute \textit{matchedWith} of $G^T$ according to the matched index pairs listed in $\MRight$}
   \State{Set ``\textit{isBlocked}=\textit{true}'' for all successors of $G_i^T$ }
   \State{$\mathit{GateIndices} \leftarrow \{l \in \{1,2,\dots |G^C| \}: G^C_l.\mathit{isBlocked}=\mathit{false} \textnormal{ and } G^C_l.\mathit{matchedWith}=\mathit{null} \}$ } \label{BackWardMatch_GateIndices1}
   \State{$\mathit{GateIndices} \leftarrow$ Order $\mathit{GateIndices}$ in decreasing order (and store as list)}  \label{BackWardMatch_GateIndices2}
   \State{$\mathit{counter} \leftarrow 1 $ \Comment{used to loop through \textit{GateIndices}}}
   \State{$\mathit{numberOfGatesLeftToMatch} \leftarrow |T|-(i-1)-|\MRight|$ \Comment{number of remaining unmatched gates in the sub-pattern $(T_i, \dots, T_{|T|})$}}
     \State{{\color{gray}$\#$ Initialize a stack to save all matching scenarios that should be considered for expansion:}}
      \State{Initialize a stack\footnotemark[11]  \textit{MatchingScenarios} and call \textit{MatchingScenarios}.\texttt{Push}$(G^C,G^T,\MRight,\mathit{counter} )$} 
        \State{{\color{gray}$\#$ \textbf{Start the matching process:}}}
   \While{\textit{MatchingScenarios} is not empty} \label{algo_while_loop}
   \State{($G^C,G^T,M,counter)\leftarrow \mathit{MatchingScenarios}.\texttt{Pop}$ \Comment{consider top matching scenario in stack}}
   \State{$s\leftarrow (\mathit{GateIndices})_\mathit{counter}$} \Comment{consider gate $C_s$ for matching}
   \State{$v \leftarrow G^C_s$}
     \State{{\color{gray}$\#$ \textbf{Trivial cases}}}
     \State{$\MLeft \leftarrow M \setminus \MRight$ \Comment{matches added during backward match so far}}
     \If{$\mathit{counter} = |\mathit{GateIndices}|$ or $|\MLeft|=\mathit{numberOfGatesLeftToMatch}$} \label{algo_counter_check}
     \State{Add $M$ to $W$}
       \State{GoTo ``EndOfWhileLoop''}
     \EndIf
   \If{$v.\mathit{isBlocked}$ }
   \State{{\color{gray}$\#$Blocked gates are skipped by simply incrementing \textit{counter}}}
   \State{\textit{MatchingScenarios}.\texttt{Push}$(G^C,G^T,M,\mathit{counter}+1 )$}
       \State{GoTo ``EndOfWhileLoop''}
   \EndIf
    \algstore{myalg}
  \end{algorithmic}
\end{algorithm}
\footnotetext[11]{The method \textit{MatchingScenarios}.\texttt{Pop} returns the top element of the stack \textit{MatchingScenarios} and removes it from the stack. The method \textit{MatchingScenarios}.\texttt{Push}$(v)$ adds a vertex $v$ at the top of the stack.}
   \footnotetext[12]{\label{footmark:additional_check}Fixed matches are the starting match and matches added in previous rounds of the \texttt{BackwardMatch} algorithm. We do not need to consider scenarios that ``unmatch'' gates matched during previous rounds of \texttt{BackwardMatch} because these are already considered as separate scenarios on the stack (see the proof of \cref{lem_backwardsMatch}). To improve the runtime, we could in addition check if the length of the match $\tilde{M}$ plus the number of gates in the pattern that could possibly be matched in the further backwards matching process, is smaller than the length of the initial forward match $\MRight$. If this is the case, we could ignore this matching scenario because it can not lead to a match that is at least as long as the forward match, and hence not to a maximal match. }
\addtocounter{footnote}{+2}   
   
 \begin{algorithm}[H] 
   \begin{algorithmic}[1]
\algrestore{myalg}
  \State{{\color{gray}$\#$ \textbf{Try to match the gate $C_s$ corresponding to vertex $v$:}}}
    \State{\textit{CandidateIndices} $\leftarrow$  \texttt{FindBackwardCandidates}$(G^T,i)$ }
   \If{$C_{s} \in \{T_{j}: j \in \mathit{CandidateIndices} \}$ } \label{BackwardMatch:if_condition}
   \State{{\color{gray}$\#$ There exists at least one match with the gate $C_s$}} \label{BackwardMatch:case1}
   \State{\textit{I} $\leftarrow \{k \in$ \textit{CandidateIndices} : $T_k=C_s\}$}
    \State{{\color{gray}$\#$ Remove candidates leading to equivalent matches}}
   \State{Successively remove index $i$ from \textit{I} for which there exists $j \in$\label{algo_index_pruning} \textit{I} with $[i,j]_T=0$} 
   \State {{\color{gray}\bf$\#$ Option 1.1: we match gate $C_s$ and add the result to the stack \textit{MatchingScenarios}} }
 \For{$j \in \textit{I}$ \Comment{loop over all possible inequivalent matches for $C_s$}}  \label{algo_backwards_for_loop}
   \State{$\tilde{G}^C \leftarrow G^C.\texttt{copy}$, $\tilde{G}^T \leftarrow G^T.\texttt{copy}$ and  $\tilde{M}\leftarrow M.\texttt{copy}$ } 
     \State{Block all successors of $\tilde{G}^T_j$ in $\tilde{G}^T$ that were not already matched}
     \State{Block all successors of all the blocked vertices in $\tilde{G}^T$}
     \State{Set \textit{matchedWith}=\textit{null} for the blocked vertices in $\tilde{G}^T$ and update $\tilde{M}$ accordingly} \label{algo:blocking}
     \If{$(i,r) \in \tilde{M}$ and $\MLeft \subset \tilde{M}$ }   \Comment{Check if we block ``fixed'' matches$^{\textnormal{\ref{footmark:additional_check}}}$}
     \State{$\tilde{M} \leftarrow \tilde{M} \cup  \{(j,s)\}$}
        \State{$\tilde{v} \leftarrow \tilde{G}^C_s$}
     \State{$\tilde{v}.\mathit{matchedWith} \leftarrow \tilde{G}^T_j$} \Comment{Updates attribute of vertex $\tilde{v}$ in graph $\tilde{G}^C$}    
     \State{\textit{MatchingScenarios}.\texttt{Push}$(\tilde{G}^C,\tilde{G}^T,\tilde{M},\mathit{counter}+1 )$}
     \EndIf
             \EndFor
 \State {{\color{gray}\bf$\#$ Option 1.2: we block the vertex $v$ corresponding to gate $C_s$} }
   \State{$v.\mathit{isBlocked} \leftarrow true$}
 \State{$\mathit{followingMatches} \leftarrow \{w \in \texttt{Succ}(v,G^C):w.\mathit{matchedWith}\neq \mathit{null} \}$ }
  \State {{\color{gray}$\#$ Option 1.2a: right-block or left-block without interfering with previously matched gates.}} 
 \If{$\texttt{Pred}(v,G^C)=\emptyset$ or $\mathit{followingMatches}=\emptyset$}  \label{BackwardMatch_move_to_start_end}
  \State{\textit{MatchingScenarios}.\texttt{Push}$(G^C,G^T,M,\mathit{counter} +1)$}
\Else \Comment{blocking requires unmatching some of the previously matched gates}
     \State {{\color{gray}$\#$ Option 1.2b: we right-block the vertex $v$ corresponding to gate $C_s$ }}  \label{algo:BackwardMatch_option2} 
        \State{$\hat{G}^C \leftarrow G^C.\texttt{copy}$, $\hat{G}^T \leftarrow G^T.\texttt{copy}$ and  $\hat{M}\leftarrow M.\texttt{copy}$ } 
         \State{$\hat{v} \leftarrow \hat{G}^C_s$}
      \State{$\hat{v}.\mathit{isBlocked}\leftarrow \mathit{true}$}
     \State{Block all successors of $\hat{v}$ and set their attribute \textit{matchedWith}=\textit{null} } \label{algo:line_blocking}
     \State{Update $\hat{M}$ if matched gates were blocked in line~\ref{algo:line_blocking}}
     \If{$(i,r) \in \hat{M}$ and $\MLeft \subset \hat{M}$ }\Comment{Check if we blocked ``fixed'' matches.\footnotemark[12]} \label{BackwardMatch:adding_to_stack_condition_1}
       \State{\textit{MatchingScenarios}.\texttt{Push}$(\hat{G}^C,\hat{G}^T,\hat{M},\mathit{counter} +1)$}
        \EndIf
    \algstore{myalg}
     \end{algorithmic}
\end{algorithm}

  \begin{algorithm}[H] 
   \begin{algorithmic}[1]
\algrestore{myalg}
\State {{\color{gray}$\#$ Option 1.2c: we left-block the vertex $v$ corresponding to gate $C_s$. This option is only}}
\State {{\color{gray}$\#$  viable if matching $C_s$ or blocking it without blocking previously matched gates (line~\ref{BackwardMatch_move_to_start_end})}}
\State {{\color{gray}$\#$ is impossible.}}
\label{algo:BackwardMatch_option3}
     \If{every option in $I$ considered in line~\ref{algo_backwards_for_loop} required blocking of gates in line~\ref{algo:blocking}} \label{backwardsMatch_check_if_matches_destroyed}
         \State{Block all predecessors of the vertex $v$}
      \State{\textit{MatchingScenarios}.\texttt{Push}$(G^C, G^T,M,\mathit{counter} +1)$}
    \EndIf
    \EndIf
 \Else
   \State{{\color{gray} $\#$ Option 2.1: We only block (and cannot match)}}
  \State{{\color{gray}$\#$ There is no gate in the pattern that matches the gate $C_s$. Hence, we block $v$.}}  \label{BackwardMatch:case2}
  \State{{\color{gray}$\#$ The options we consider are similar to the ones above where we also blocked $v$.}}
   \State{$v.\mathit{isBlocked} \leftarrow true$} \label{algo_backwardMatch_block}
 \State{$\mathit{followingMatches} \leftarrow \{w \in \texttt{Succ}(v,G^C):w.\mathit{matchedWith}\neq \mathit{null} \}$ }
 \State {{\color{gray}$\#$Option 2.1a}}
 \If{$\texttt{Pred}(v,G^C)=\emptyset$ or $\mathit{followingMatches}=\emptyset$}  \label{BackwardMatch_move_to_start_end2}
\State {{\color{gray}$\#$ We can right-block or left-block without interfering with previously matched gates.}}
  \State{\textit{MatchingScenarios}.\texttt{Push}$(G^C,G^T,M,\mathit{counter} +1)$}
    \Else
     \State {{\color{gray}$\#$ The gate $C_s$ corresponding to $v$ might disturb the expansion of the match. }} 
   \State {{\color{gray}$\#$ We can either right-block or left-block $v$. }} 
           \State {{\color{gray}$\#$ Option 2.1b: right-block $v$}} 
   \State{$\tilde{G}^C \leftarrow G^C.\texttt{copy} $, $\tilde{G}^T \leftarrow G^T.\texttt{copy} $ and  $\tilde{M}\leftarrow M.\texttt{copy}$ }
        \State{$\tilde{v} \leftarrow \tilde{G}^C_s$}
         \State{Block all successors of the vertex $\tilde{v} $ and and set their attribute \textit{matchedWith}=\textit{null} } \label{algo:line_blocking_2}
          \State{Update $\tilde{M}$ if matched gates were blocked in line~\ref{algo:line_blocking_2}}
 \If{$(i,r) \in \tilde{M}$ and $\MLeft \subset \tilde{M}$ }  \label{BackwardMatch:adding_to_stack_condition_2} \Comment{Check if we block ``fixed'' matches$^{\textnormal{\ref{footmark:additional_check}}}$}
       \State{\textit{MatchingScenarios}.\texttt{Push}$(\tilde{G}^C,\tilde{G}^T,\tilde{M},\mathit{counter} +1)$}
        \EndIf
        \State {{\color{gray}$\#$ Option 2.1c: left-block $v$}} 
        \State{Block all predecessors of the vertex $v$}
          \State{\textit{MatchingScenarios}.\texttt{Push}$(G^C,G^T,M,\mathit{counter} +1)$}
          
  \EndIf
   \EndIf
     \State{Label ``EndOfWhileLoop''}
   \EndWhile
   \State{$\mathit{maxLength} \leftarrow \textnormal{max}\{|M|:M \in W \}$}
   \State{Remove all the matches in $W$ that have smaller length than $\mathit{maxLength}$}
      \State Output: $W$
 \end{algorithmic}
\end{algorithm}

\begin{algorithm}[H] 
  \caption{ \texttt{FindForwardCandidates}: Finds the indices of the gates that might match next for forward matching}
  \label{algo:FindForwardCandidates}
   \begin{algorithmic}[1]
   \State Input: ($G$, $v$, $M$)
   \begin{itemize}
       \item Canonical form  $G$
       \item Vertex $v$ in $G$
       \item Set of matched gate indices $M$, where the first index corresponds to the labels in $G$
   \end{itemize}
   \State $Match \leftarrow \{i:i\in \{1, . . . ,|G|\}$ such that there exists a $j$ with $(i, j)\in M\}$
   \State $Block \leftarrow \{\}$
   \For{$l \in Match~\setminus~v.label$}
   \For{$v'\in \texttt{DirectSucc}(l,G) \setminus Match$}
   \State Add labels of $\texttt{Succ}(v',G)$ to $\mathit{Block}$     \Comment{Exclude candidates leading to unconnected matches}
   \EndFor
   \EndFor
   \State $\mathit{CandidatesIndices} \leftarrow \left(\texttt{DirectSucc}(v,G) ~\setminus~\mathit{Match} \right)~\setminus~\mathit{Block}$
   \State Output: $\mathit{CandidatesIndices}$
   \end{algorithmic}
\end{algorithm}

\begin{algorithm}[H] 
  \caption{\texttt{FindBackwardCandidates}: Finds the indices of the gates that might match next for backward matching}
  \label{algo:FindBackwardCandidates}
   \begin{algorithmic}[1]
   \State Input: $(G,i)$ \begin{itemize}
   \item Canonical form $G$
    \item Start index $i$
   \end{itemize}
   \State{$S \leftarrow \{l \in \{i+1,\dots,|T|\}: G_l$ is not a successor of $G_i$ and $G_l.\textit{matchedWith}\neq null  \}$}
   \State{ \textit{CandidateIndices} $ \leftarrow \{l \in S: G_l.\textit{isBlocked}=false \}$}
   \State Output: \textit{CandidateIndices}
   \end{algorithmic}
\end{algorithm}




\subsection{Correctness of the algorithm} \label{sec_correctness}
In this section we formally prove the correctness of~\texttt{PatternMatch} (Algorithm~\ref{algo:TempMatch}), i.e., that for any circuit $C$ and any pattern $T$ the algorithm finds all maximal matches.
Let us first formally define what we consider to be a ``pattern match'' and when it is called maximal.

\begin{definition}[Connected part of a circuit]
We say that  $E=(E_{1},\dots,E_{|E|})$ is a connected part of  a circuit $C=(C_1,\dots,C_{|C|})$ if, by commuting gates pairwise with each other, one can bring the circuit to the form $C \simeq (D, E,  F)$, where the circuits $D$ and $F$ consist of all the gates of the circuit $C$ except the ones listed in $E$.
\end{definition}

In terms of the canonical form, a part $E$ of a circuit $C$ is connected if, and only if, for all the vertices corresponding to the gates in $E$ in the canonical form of $C$ we have the following property: if two vertices are connected by a path, then all the vertices that lie on the path have also to correspond to gates in $E$.
Whether or not a  circuit is connected is not related to whether its canonical graph is connected (in the usual sense of connectedness for graphs).

\begin{definition}[Equivalence of circuits up to qubit relabeling]
A circuit $C$ is equivalent to a circuit $E$ up to qubit relabeling if, and only if, there exists a bijective mapping from the qubit labels in circuit $C$ to the ones in circuit $E$, such that for the resulting circuit $C'$ (that one gets by relabeling the qubits in circuit $C$) we have that $C'=E$ (i.e.,  $C'_i=E_i$ for all $i \in \{1,2,\dots,|C|=|E|\}$).
\end{definition}

\begin{definition} [Pattern match] \label{def:template_match}
We say that a pattern $T$ has a match of length $m$ in a circuit $C$ if there exists a connected part $E^T$ of $T$ of length $|E^T|=m$ that is equal up to qubit relabeling to a connected part $E^C$ of $C$. We refer to such a match $M$ by a set of tuples  of gate indices, where a tuple $(i,j)$ means that we matched the gate $T_i$ with the gate $C_j$. 
\end{definition}

\begin{definition} [Maximal pattern match] \label{def:maximal_template_match}
We say that a match $M$ is maximal if there are no matches $\tilde{M}$ in the circuit such that
\begin{itemize}
\item $|\tilde{M}| >|M|$, and
\item $M \cap \tilde{M} \neq \emptyset$, i.e., $M$ and $\tilde{M}$ have at least one element of matched gate indices $(i,j)$ in common. 
\end{itemize}
Intuitively, a maximal match is one that cannot be extended further (but another larger disjoint match might exist nonetheless).
\end{definition}

\begin{definition}[Equivalence of sub-circuits]
Let us consider a circuit $C$ and two subset of gate indices $A,B\subset \{1,\dots,|C|\}$ with $|A|=|B|$.  We say that the subsets $A$ and $B$ describe equivalent sub-circuits of $C$ if, and only if, there exists a bijective mapping $f:A \mapsto B$, such that for all $i \in A$ we have 
\begin{itemize}
\item $C_{i}=C_{f(i)}$ and 
\item the gates in the circuit $C$ can be commuted, such that in the resulting circuit  we have the gate with index $f(i)$ at the positions $i$.
\end{itemize}
\end{definition}

In other words, two sub-circuits are equivalent if one can swap commuting gates to replace the gates from one sub-circuit with the same gates from the other sub-circuit.

\begin{definition}[Equivalence of pattern matches]
For a match $Q$, let us denote the set of matched indices in the pattern by $Q^T:=\{i:(i,j) \in Q $ for some $j \in \{1,\dots,|Q|\}\}$ and the set of matched indices in the circuit by $Q^C:=\{ j:(i,j) \in Q $ for some $i \in \{1,\dots,|Q|\}\}$.
 We say that two matches $M$ and $\tilde{M}$ are equivalent if, and only if, 
\begin{itemize}
\item $M^T$ and $\tilde{M}^T$ describe equivalent subcircuits of the pattern $T$ and
\item  $M^C$ and $\tilde{M}^C$ describe equivalent subcircuits of the circuit $C$.
\end{itemize}
\end{definition}
We are now ready to state and prove the formal statement ensuring the correctness of \texttt{PatternMatch}. This ensures that Algorithm~\ref{algo:TempMatch} always succeeds, i.e., there are no situations where the algorithm does not deliver the desired output.
\begin{theorem}[Correctness of \texttt{PatternMatch}] \label{thm_correctness}
Given a circuit $C$ and a pattern $T$. Then Algorithm~\ref{algo:TempMatch} finds all maximal pattern matches (up to equivalent matches) of $T$ in $C$.
\end{theorem}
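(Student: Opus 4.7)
The plan is to reduce the theorem to the correctness lemmas for the two main subroutines, \texttt{ForwardMatch} (Lemma~\ref{lem_forwardMatch}) and \texttt{BackwardMatch} (Lemma~\ref{lem_backwardsMatch}), and then show that the outer loops in Algorithm~\ref{algo:TempMatch} enumerate every starting configuration that can give rise to a maximal match. Throughout I would fix an arbitrary maximal match $M$ of $T$ in $C$ and prove that the list $W$ returned by \texttt{PatternMatch} contains a match equivalent to $M$.

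First I would set $i^\ast := \min\{i : (i,j) \in M \text{ for some } j\}$ and pick $r^\ast$ with $(i^\ast, r^\ast) \in M$, and argue that one iteration of the three nested outer loops reaches precisely the configuration $(i,r) = (i^\ast, r^\ast)$ together with a qubit relabeling consistent with $M$. Indeed, $C_{r^\ast} \cong T_{i^\ast}$ by the matching, the selection $L_{\textnormal{q}}^{\textnormal{sel}}$ can be taken as the $n_T$ circuit qubits onto which $M$ maps the pattern qubits (which automatically contain the qubits on which $C_{r^\ast}$ acts, so the condition $L_{\textnormal{q}}^{\textnormal{action}} \subset L_{\textnormal{q}}^{\textnormal{sel}}$ in line~\ref{TempMatch_loop_qubit_choice} is met), and there is a permutation $p \in S_{n_T}$ realising the corresponding qubit bijection. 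For this iteration the test $C_r = \tilde{T}_i$ succeeds. Moreover, by minimality of $i^\ast$, every pattern index in $M$ is $\geq i^\ast$, so $M$ is already a match of the sub-pattern $(T_{i^\ast}, \dots, T_{|T|})$ in $C$, which is exactly what \texttt{ForwardMatch} and \texttt{BackwardMatch} are invoked to match.

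Next, I would split $M \setminus \{(i^\ast, r^\ast)\}$ using Lemma~\ref{lem:Property_CanonicalForm}: the matched vertices of $G^C$ either lie in $\texttt{Succ}(G^C_{r^\ast}, G^C)$ (the forward part $\CRight$) or do not (the backward part $\CLeft$), and analogously on the pattern side. By Lemma~\ref{lem_forwardMatch}, \texttt{ForwardMatch} returns a match $\MRight$ in the forward parts that is maximal subject to containing the starting pair $(i^\ast, r^\ast)$, and any other forward extension of the starting pair is equivalent to $\MRight$ in the sense of Definition~\ref{def:template_match}. Consequently the restriction of $M$ to forward matches is equivalent to $\MRight$, up to an equivalent match. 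I would then invoke Lemma~\ref{lem_backwardsMatch} to conclude that the tree of \textit{MatchingScenarios} constructed by \texttt{BackwardMatch} contains some leaf whose final match is equivalent to $M$; thus $W$ returns a representative of the equivalence class of $M$.

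The main obstacle is precisely to make the last sentence of the previous paragraph rigorous, that is, to show that the various pruning rules used inside \texttt{BackwardMatch} are safe: the equivalence-based pruning in line~\ref{algo_index_pruning} (removing candidate indices $i \in I$ whenever another $j \in I$ satisfies $[i,j]_T = 0$), the decision never to unmatch previously fixed matches in lines~\ref{BackwardMatch:adding_to_stack_condition_1} and~\ref{BackwardMatch:adding_to_stack_condition_2}, and the conditional left-blocking in line~\ref{backwardsMatch_check_if_matches_destroyed} (only triggered when matching would have destroyed a previous match). I expect each to be handled by a case analysis showing that any scenario discarded by a rule is either equivalent to, or strictly dominated in achievable match length by, some scenario that remains on the stack; in particular, any removed branch cannot introduce a new maximal match, and any equivalent branch is absorbed by the ``up to equivalent matches'' qualifier in the theorem statement. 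This is where the combinatorial content of the proof lies, and it is presumably the content of the two lemmas invoked above.
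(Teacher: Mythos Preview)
Your proposal is correct and follows essentially the same approach as the paper: fix an arbitrary maximal match $M$, take $i^\ast$ to be the smallest pattern index appearing in $M$ and $r^\ast$ its partner in $C$, observe that the outer loops of Algorithm~\ref{algo:TempMatch} hit precisely this starting configuration with the correct qubit relabeling, note that by minimality of $i^\ast$ the match $M$ lives entirely in the sub-pattern $(T_{i^\ast},\dots,T_{|T|})$, and then invoke Lemmas~\ref{lem_forwardMatch} and~\ref{lem_backwardsMatch} as black boxes. The paper's proof is in fact slightly terser than yours: it does not carry out the forward/backward decomposition of $M$ explicitly, instead citing Lemma~\ref{lem_backwardsMatch} directly (whose statement already subsumes the combination with \texttt{ForwardMatch}); your final paragraph correctly identifies that the pruning-rule case analyses are the content of those lemmas rather than of the theorem itself.

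One small inaccuracy worth noting: you write that ``the restriction of $M$ to forward matches is equivalent to $\MRight$,'' but Lemma~\ref{lem_partialForwardMatches} only gives that it is a \emph{sub-match} of $\MRight$, not an equivalent match. This does not affect the argument, since Lemma~\ref{lem_backwardsMatch} is stated so as to return all maximal matches of the full sub-pattern given only that $\MRight$ is the maximal forward match, and the reconciliation of a possibly non-maximal forward part of $M$ with $\MRight$ is handled inside the proof of that lemma.
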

We note that not all the matches given as an output of Algorithm~\ref{algo:TempMatch} might be maximal and there may be equivalent matches in the output. We ignored this for simplicity of the pseudocode and since for certain applications, it might be more efficient to work with this output instead of removing the non-maximal and equivalent matches from it.
The proof of Theorem~\ref{thm_correctness} is given in Section~\ref{app_correctness}.


\subsection{Complexity of the algorithm} \label{sec_complexity}
In the previous section we have shown that Algorithm~\ref{algo:TempMatch} is correct in the sense that it always finds all the maximal matches for any given pattern and any given circuit. Thus what remains to be understood is how efficiently Algorithm~\ref{algo:TempMatch} finds these matches. This is settled by the following theorem.

\begin{theorem}[Complexity of \texttt{PatternMatch}] \label{thm_Complexity}
The worst-case time complexity of Algorithm~\ref{algo:TempMatch} for a circuit $C$ and a pattern $T$ is

$\mathcal{O}\left( |C|^{|T|+3} |T|^{|T| + 4} \frac{n_C!}{(n_C - (n_T-1))!} \right)$. 
\end{theorem}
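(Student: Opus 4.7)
The plan is to bound the worst-case running time of Algorithm~\ref{algo:TempMatch} by multiplying the contributions of each nested loop and subroutine. I would structure the argument in three stages: the outer loops of \texttt{PatternMatch}, the cost of \texttt{ForwardMatch}, and the cost of \texttt{BackwardMatch}, with the last stage being the main obstacle.

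First, I would count the outer loops of \texttt{PatternMatch}. The loop over starting gate index $i$ in the pattern contributes a factor $|T|$; the loop over matching circuit indices $r$ with $C_r \cong T_i$ contributes at most $|C|$; and the combined loops over qubit subsets $L_{\textnormal{q}}^{\textnormal{sel}}$ and permutations $p \in S_{n_T}$ contribute at most $\tfrac{n_C!}{(n_C - (n_T-1))!}$. The last factor reflects that the starting gate match fixes at least one qubit of the mapping, so only $n_T - 1$ further pattern qubits need to be mapped injectively into the remaining circuit qubits, yielding $n_C (n_C - 1) \cdots (n_C - n_T + 2)$ as a conservative bound. Preprocessing cost for building the canonical forms via Algorithm~\ref{algo:CreateCanonicalForm} and initializing successor lists via Algorithm~\ref{algo:InitializeSuccessors} is polynomial and absorbed into later terms.

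Next, I would bound the cost of a single call to \texttt{ForwardMatch}. Its \texttt{while}-loop processes each vertex of $G^C$ at most once as the head of \textit{SuccessorsToVisit}, so it runs $\mathcal{O}(|C|)$ iterations. Each iteration calls \texttt{FindForwardCandidates}, which loops through matched vertices (at most $|T|$) and their successors (at most $|C|$), together with $\mathcal{O}(|C|)$ bookkeeping to update \textit{MatchedVertexList} and block successors. This gives a per-call cost of $\mathcal{O}(|C|^2\, |T|)$, which will be dominated by \texttt{BackwardMatch}.

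The hard part is \texttt{BackwardMatch}. I would argue that every scenario pushed onto the \textit{MatchingScenarios} stack is determined by the current set $M$ of matched index pairs and the current \textit{counter} value. Since $M$ is an injective partial function from $\{1,\dots,|T|\}$ to $\{1,\dots,|C|\}$ of size at most $|T|$, there are at most $\sum_{k=0}^{|T|} \binom{|T|}{k}\binom{|C|}{k} k! = \mathcal{O}(|C|^{|T|}\, |T|!)$ such sets $M$; multiplying by $\mathcal{O}(|C|)$ possible \textit{counter} values bounds the total number of scenarios pushed. The pruning on line~\ref{algo_index_pruning} together with the fixed-match checks on lines~\ref{BackwardMatch:adding_to_stack_condition_1} and~\ref{BackwardMatch:adding_to_stack_condition_2} are precisely what prevent the naïve $\mathcal{O}((|T|+3)^{|C|})$ blow-up that one would otherwise expect from the branching factor $|T|+\mathcal{O}(1)$ at each of up to $|C|$ levels of the decision tree. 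Each popped scenario requires $\mathcal{O}(\mathrm{poly}(|C|,|T|))$ work: copying the relevant mutable attributes, running \texttt{FindBackwardCandidates} (cost $\mathcal{O}(|T|^2)$), and updating blocked-successor lists. Finally, I would multiply the outer-loop factor $|T| \cdot |C| \cdot \tfrac{n_C!}{(n_C - (n_T-1))!}$ by the \texttt{BackwardMatch} cost, where the polynomial factors in $|C|$ and $|T|$ coming from per-scenario bookkeeping combine with $\mathcal{O}(|C|^{|T|+1}\, |T|!)$ scenarios to give the claimed $\mathcal{O}\!\left(|C|^{|T|+3}\, |T|^{|T|+4}\, \tfrac{n_C!}{(n_C - (n_T-1))!}\right)$ bound. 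The delicate step — deserving the most care in the formal write-up — is verifying that the pruning logic in \texttt{BackwardMatch} really does charge each pushed scenario to a distinct partial match (up to an $\mathcal{O}(|C|)$ factor for \textit{counter}), so that the polynomial-in-$|C|$ count of partial matches governs the total work rather than the exponential-in-$|C|$ branching-depth bound.
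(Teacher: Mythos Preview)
Your outer-loop count and the \texttt{ForwardMatch} bound are fine and match the paper's approach. The gap is in your \texttt{BackwardMatch} analysis: the claim that ``every scenario pushed onto \textit{MatchingScenarios} is determined by $(M,\textit{counter})$'' is false, and without it your state-counting argument does not bound the number of iterations of the while-loop.

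The scenario on the stack is $(G^C,G^T,M,\textit{counter})$, and the blocked-vertex attributes of $G^C$ are \emph{not} recoverable from $(M,\textit{counter})$. Concretely, suppose the first two backward gates $C_{s_1},C_{s_2}$ are in case~2, are unrelated in $G^C$, and share a single forward-matched successor $C_{r_1}$. On one branch you right-block $C_{s_1}$ (option~2.1b, removing the match at $C_{r_1}$) and then, since $C_{s_2}$ now has empty \textit{followingMatches}, take option~2.1a. On another branch you left-block $C_{s_1}$ (option~2.1c) and then right-block $C_{s_2}$ (option~2.1b, removing the same match at $C_{r_1}$). Both branches arrive at \textit{counter}$=3$ with the same $M$, yet the blocked sets in $G^C$ differ---one has $\texttt{Succ}(C_{s_1})$ blocked, the other has $\texttt{Pred}(C_{s_1})$ blocked---and these two scenarios will subsequently evolve differently. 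So distinct nodes of the option tree can carry the same $(M,\textit{counter})$ label, and since the algorithm does not memoize, counting labels does not bound the number of pushes.

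The paper avoids this by counting \emph{paths} in the option tree rather than states. The key observation (Lemma~\ref{lem:ComplexityBackwardMatch}) is that along any root-to-node path $(s_1,\dots,s_n)$ with $n\leq|C|$, at most $|T|$ of the $s_i$ are ``non-trivial'' (i.e.\ either add a backward match or unmatch a forward-matched gate), because each such step consumes one of the $|T|$ pattern gates permanently; the remaining ``trivial'' steps (options~1.2a/c, 2.1a/c) are then forced given the non-trivial choices. Choosing which $k\leq|T|$ positions are non-trivial and which of $|T|+1$ options each takes yields $\sum_{n\leq|C|}\sum_{k\leq|T|}\binom{n}{k}(|T|+1)^k=\mathcal{O}\bigl((|T||C|)^{|T|+1}\bigr)$ nodes, and multiplying by $\mathcal{O}(|T|^2+|C|)$ per node gives the \texttt{BackwardMatch} bound. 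This is the argument your ``delicate step'' would need to be replaced by.
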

We note that the running time stated in Theorem~\ref{thm_Complexity} may be simplified as
\begin{align}
\mathcal{O}\left(  |C|^{|T|+3} |T|^{|T| + 4} \frac{n_C!}{(n_C - (n_T-1))!}  \right)
\leq \mathcal{O}\left(  |C|^{|T|+3} |T|^{|T| + 4}\, n_C^{n_T-1} \right)  \, .
\end{align}
From this we see immediately that Algorithm~\ref{algo:TempMatch} is efficient (i.e., polynomial) in $|C|$ and $|n_C|$ and inefficient (i.e., exponential) in $|T|$ and $|n_T|$. 

To prove the assertion of Theorem~\ref{thm_Complexity} we first need to understand the running time of the two subroutines \texttt{ForwardMatch} and \texttt{BackwardMatch}. This is done in the following lemmas, where we assume that $|T| \leq |C|$. Furthermore, we assume that we can check if two gates commute in constant time. For a fixed (finite) gate set, one possibility to achieve this is by storing the commutation relations between all the gates in a table. Since we will account time complexity $\mathcal{O}(|D|^3)$ to create the canonical form of a circuit $D$ in line~\ref{TempMatch_canonical_form_C} and line~\ref{TempMatch_canonical_form_T} in Algorithm~\ref{algo:TempMatch}, we can assume constant time access to an ordered list of successors or predecessors of all vertices in the canonical form (see Remark~\ref{rmk:constant_access_to_successors}).

\begin{lemma}[Complexity of \texttt{ForwardMatch}] \label{lem:ComplexityForwardMatch}
The worst-case time complexity of Algorithm~\ref{algo:ForwardMatch} for a circuit $C$ and a pattern $T$ is $\mathcal{O}(|T|^2|C|^2)$, under the assumption that we have constant time access to an ordered list of all successors for any vertex in the canonical forms $G^C$ and $G^T$ of the circuit $C$ and the pattern $T$, respectively. 
\end{lemma}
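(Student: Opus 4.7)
The plan is to decompose the analysis of \cref{algo:ForwardMatch} into three pieces: (i) the number of iterations of the main \textbf{while}-loop, (ii) the work performed inside a single iteration, and (iii) the total blocking work from line~\ref{algo:ForwardMatch_blocking_successors}, which should be handled by an amortized argument rather than multiplied into every iteration. The initialization phase only touches $\mathcal{O}(|C|)$ vertex attributes and is dominated by the loop, so it does not affect the final bound.

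To bound (i), observe that each iteration consumes exactly one element from some list $v_0.\mathit{SuccessorsToVisit}$, and these lists are populated only when a vertex is first matched (line~\ref{forward_match_set_to_order}) or at initialization for the starting vertex. Since every match associates a vertex of $G^C$ with a \emph{distinct} vertex of $G^T$, at most $|T|$ vertices of $G^C$ can ever be matched, and in the canonical form each vertex has at most $n_C\le|C|$ direct successors (at most one next non-commuting gate per qubit). Therefore the total number of $(v_0,v)$ pairs processed by the loop is $\mathcal{O}(|T|\cdot|C|)$.

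For (ii), the dominant cost inside one iteration is the call to \texttt{FindForwardCandidates} (\cref{algo:FindForwardCandidates}). I would argue that, with a simple ``already-visited'' flag on each pattern vertex $v'\in\texttt{DirectSucc}(l,G^T)$, the inner work enumerates at most $|T|$ distinct $v'$, and for each such $v'$ the scan of $\texttt{Succ}(v',G^T)$ costs $\mathcal{O}(|T|)$ by the hypothesis of constant-time access to stored ordered successor lists; combined with the $\mathcal{O}(|T|^2)$ cost of iterating over the edge-pairs $(l,v')$ with these visited-checks, this yields $\mathcal{O}(|T|^2)$ per call. The remaining per-iteration steps --- the sorted \texttt{Insert} into \textit{MatchedVertexList} (of size at most $|T|$), the scan of $\mathit{CandidateIndices}$ for some $j$ with $C_s=T_j$, and the constant-time attribute checks --- are each $\mathcal{O}(|T|)$ and therefore do not dominate.

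For (iii), each execution of line~\ref{algo:ForwardMatch_blocking_successors} marks a vertex and all of its $\mathcal{O}(|C|)$ successors as blocked, but any vertex of $G^C$ can transition from unblocked to blocked at most once, so the cumulative cost of all blocking operations is $\mathcal{O}(|C|^2)$. Putting the pieces together gives a total running time of $\mathcal{O}(|T|\cdot|C|)\cdot\mathcal{O}(|T|^2)+\mathcal{O}(|C|^2)=\mathcal{O}(|T|^3|C|+|C|^2)$, which under the standing assumption $|T|\le|C|$ is bounded by $\mathcal{O}(|T|^2|C|^2)$, as claimed. The main subtlety I expect is the $\mathcal{O}(|T|^2)$ bound for \texttt{FindForwardCandidates}: a naive analysis (scanning $\texttt{Succ}(v',G^T)$ from scratch for every matched $l$) gives $\mathcal{O}(|T|^3)$ per call, which, multiplied by $\mathcal{O}(|T|\cdot|C|)$ iterations, would exceed the claimed bound once $|T|$ is comparable to $|C|$, so care must be taken to avoid re-scanning the same successor list from different matched predecessors.
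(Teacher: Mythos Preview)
Your decomposition matches the paper's: bound the while-loop iterations by $\mathcal{O}(|T|\cdot|C|)$, charge $\mathcal{O}(|T|^2)$ per iteration to \texttt{FindForwardCandidates}, and handle the blocking in line~\ref{algo:ForwardMatch_blocking_successors} separately (your amortized $\mathcal{O}(|C|^2)$ is slightly sharper than the paper's $\mathcal{O}(|T|\,|C|^2)$, but both are absorbed into $\mathcal{O}(|T|^2|C|^2)$). You are in fact more careful than the paper about justifying the $\mathcal{O}(|T|^2)$ cost of \texttt{FindForwardCandidates}; the paper simply asserts it from the two nested loops without addressing the inner successor scan.

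One intermediate claim is wrong, though it does not damage your final bound: a vertex in the canonical form need \emph{not} have at most $n_C$ direct successors. For instance, an $X$ gate on a single qubit followed by many pairwise-commuting diagonal gates ($Z$, $S$, $T$, $R_z(\theta)$, \ldots) on that same qubit has all of them as direct successors, since each can be commuted adjacent to the $X$ but does not commute with it. The safe bound is simply $|C|$ direct successors per matched vertex, which still gives the $\mathcal{O}(|T|\cdot|C|)$ iteration count you need. A smaller omission: you do not account for populating $v.\mathit{SuccessorsToVisit}$ in line~\ref{forward_match_set_to_order}, which costs $\mathcal{O}(|C|)$; since this occurs only in the at most $|T|$ matching iterations, its $\mathcal{O}(|T|\,|C|)$ total contribution is harmless, and the paper tracks it explicitly.
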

\begin{proof}
The while-loop runs at most $|T||C|$ times, since there are at most $|T|$ vertices that can be matched and added to \textit{MatchedVertexList} and for each vertex, there are at most $|C|$ successors to visit. 
The computationally most expensive parts of the algorithm are (i) the insertion of the vertex $v_0$ into the list \textit{MatchedVertexList} in line~\ref{ForwardMatch_insert_v0}; 
(ii)  finding the candidates for further matches by running   \textit{FindForwardCandidates} in line~\ref{ForwardMatch_find_candidates}; 
and (iii) the code in the if-condition starting in line~\ref{ForwardMatch_if}.

The insertion in (i) has worst-case time complexity $\mathcal{O}(\log |T|)$ (since the attribute \textit{SuccessorsToVisit} is an ordered list) and can occur at most $|T||C|$ times in the while-loop, hence it adds a term  $\mathcal{O}(|T| |C| \log |T|)$ to the complexity of the complete algorithm.

Finding the candidates in (ii) has worst-case time complexity $\mathcal{O}(|T|
^2)$, since we have two loops that each run at most $|T|$ times and we have at most $|T|$ successors to consider (and we have constant time access to a list that contains all the successors). Since \textit{FindForwardCandidates} is called at most once per run of the while-loop, it adds a complexity of $\mathcal{O}(|T|^3|C|)$ to the complete algorithm.

Let us now analyze the complexity of (iii) by considering the two cases of the if-condition in line~\ref{ForwardMatch_if} in Algorithm~\ref{algo:ForwardMatch} separately. 
In the case where we found a match (see line~\ref{ForwardMatch_matching_case}), the most expensive part is to create the list appearing in line~\ref{forward_match_set_to_order} that contains at most $|C|$ direct successors that we have left to visit from the point of view of the vertex $v$. Since we have constant time access to an ordered list of all successors, we also have constant time access to an ordered list of  all direct successors (since they must appear in the first part of the ordered list of all successors). Hence, the complexity of creating the list is $\mathcal{O}(|C|)$. Since the case where we found a match can occur at most $|T|$ times, the complexity added by this case  to the complete algorithm is $\mathcal{O}(|T||C|)$. 
In the case where we cannot match (see line~\ref{ForwardMatch_no_matching_case}), we have to block the vertex $v$ and all of its successors. Since we assume constant time access to a list containing all successors of $v$, this can be done in time $\mathcal{O}(|C|)$. The case where we cannot match can occur at most as many times as we have to run the while-loop. Hence, the complexity added by this case  to the complete algorithm is  $\mathcal{O}(|T||C|^2)$.


We conclude that the worst-case complexity of \texttt{ForwardMatch} is given by $\mathcal{O}(|T||C| \log |T| + |T|^3|C| +|T||C| +|T||C|^2) \leq \mathcal{O}(|T|^2|C|^2) $, where we assumed $|T| \leq |C|$.
\end{proof}

\begin{lemma}[Complexity of \texttt{BackwardMatch}] \label{lem:ComplexityBackwardMatch}
The worst-case time complexity of Algorithm~\ref{algo:BackwardMatch} for a circuit $C$ and a pattern $T$ is $\mathcal{O}\left( |T|^{|T| + 3}  |C|^{|T|+2} \right)$, under the assumption that we have constant time access to an ordered list of all successors and predecessors for any vertex in the canonical forms $G^C$ and $G^T$ of the circuit $C$ and pattern, respectively. 
\end{lemma}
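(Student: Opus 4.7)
The plan is to adopt the same overall structure as in \cref{lem:ComplexityForwardMatch}: I will bound separately (i) the total number $N$ of scenarios ever pushed onto the stack \textit{MatchingScenarios}, and (ii) the work $W$ executed per iteration of the outer while-loop at line~\ref{algo_while_loop}, and conclude via $O(N \cdot W)$.

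For $W$ the accounting is routine. A call to \texttt{FindBackwardCandidates} inspects at most $|T|$ pattern vertices in $O(|T|)$ time; inside Options 1.1 and 1.2 we make up to $|I|+2\leq|T|+2$ copies of $G^C$, $G^T$ and $M$ at cost $O(|C|)$ per copy; the blocking of all successors (respectively predecessors) of a selected vertex touches $O(|C|)$ vertices in total thanks to our constant-time adjacency assumption; and the pruning of equivalent candidates in line~\ref{algo_index_pruning} costs at most $O(|T|^2)$. These sum to $W = O(|T|\cdot|C|)$ since $|T|\leq|C|$.

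The hard part will be bounding $N$. I view each push as adding a child in a tree rooted at the initial scenario: the tree has depth at most $|\mathit{GateIndices}|\leq|C|$ because the counter strictly increases on every push, and branching factor up to $|T|+2$ in Option 1.1, so the naive bound $(|T|+2)^{|C|}$ is exponential in $|C|$ and useless. The crucial ingredient is the early-termination test at line~\ref{algo_counter_check}: once $|\MLeft|=\textit{numberOfGatesLeftToMatch}$, the scenario is a leaf, so at most $|T|-(i-1)\leq|T|$ matches can be added along any root-to-leaf path. I will characterise a scenario by its backward-match set $\MLeft$; since $|\MLeft|\leq|T|$, the number of distinct $\MLeft$ profiles is at most
\[
\sum_{k=0}^{|T|}\binom{|T|}{k}\binom{|C|}{k}k! \;\leq\; |T|^{|T|}\,|C|^{|T|}.
\]
To control the remaining branching from Options 1.2b/1.2c and 2.1b/2.1c (which do not add a match), I will argue that left-blocking immediately marks all predecessors of the current vertex $v$ as blocked, forcing subsequent iterations for those smaller-labelled gates into the trivial ``$v.\mathit{isBlocked}$'' case without further branching, while right-blocking either reproduces a scenario already present on the stack or strictly shrinks $M$ by unmatching previously matched successors. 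A careful accounting of how many tree nodes can share a given $\MLeft$, multiplying by at most $O(|T|^2 |C|)$ compatible blocking/counter configurations per profile, then yields $N = O(|T|^{|T|+2}\,|C|^{|T|+1})$.

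Multiplying, the total runtime is $O(N\cdot W) = O(|T|^{|T|+3}\,|C|^{|T|+2})$ as claimed. The main obstacle is clearly the $N$-bound: one must rule out exponential blow-up in $|C|$ from the blocking branches, which requires both the early-termination rule of line~\ref{algo_counter_check} and the pruning effect of left- and right-blocking sketched above. The work-per-iteration bound, by contrast, is essentially the same routine adjacency-based accounting as in the proof of \cref{lem:ComplexityForwardMatch}.
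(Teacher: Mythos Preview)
Your overall decomposition into $N\cdot W$ is correct, and your per-iteration bound $W=O(|T|\cdot|C|)$ is fine. The gap is in the bound on $N$: the claim that at most $O(|T|^2|C|)$ scenario-tree nodes can share a given backward-match profile $\MLeft$ is not justified, and in fact fails. Take $\MLeft=\emptyset$. Along such a path you never execute Option~1.1, but you may still execute right-blocking (Options~1.2b or~2.1b) at up to $|T|$ distinct counter positions in $\{1,\dots,|C|\}$, since each such step removes at least one of the at most $|T|$ forward matches. Choosing these positions already yields $\sum_{k\le|T|}\binom{|C|}{k}$ distinct histories, which is of order $|C|^{|T|}$, not $O(|T|^2|C|)$. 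Your two separate observations---that the test in line~\ref{algo_counter_check} bounds the number of Option-1.1 steps, and that right-blocking ``strictly shrinks $M$''---are individually correct, but you have not combined them into a single count; characterising scenarios by $\MLeft$ alone loses the degrees of freedom from the right-blocking positions.

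The paper's fix is to classify each edge in the scenario tree as \emph{non-trivial} (Options~1.1, 1.2b, 2.1b) or \emph{trivial} (the blocked-skip and Options~1.2a, 1.2c, 2.1a, 2.1c), and to argue two things. First, along any root-to-leaf path there are at most $|T|$ non-trivial edges \emph{in total}: every non-trivial step either permanently adds a backward match to a pattern gate or permanently removes the forward match of a pattern gate, and the (sub-)pattern has at most $|T|$ gates. Second, once the positions and types of the non-trivial edges are fixed, each intervening trivial edge is uniquely determined, because at every node at most one trivial option is pushed. A path of length $n$ is then specified by choosing $k\le|T|$ non-trivial positions and, for each, one of at most $|T|+1$ types; summing over $n\le|C|$ and $k\le|T|$ gives $N=O\bigl((|T||C|)^{|T|+1}\bigr)$, and the lemma follows. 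The point you are missing is precisely step two: that the trivial choices are forced and therefore do not contribute to the count.
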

\begin{proof}
To estimate the complexity of \cref{algo:BackwardMatch}, we first need to upper-bound the number of times the while-loop in line~\ref{algo_while_loop} is executed, i.e., how many different matching scenarios are added to the stack \textit{MatchingScenarios}.
It is easiest to visualize the matching scenarios as a tree: when a particular matching scenario, i.e., a vertex $v$ in this tree, is considered in line~\ref{algo_while_loop}, then a number of options (e.g. matching with different gates, left-blocking, right-blocking) are considered and potentially added to the stack \textit{MatchingScenarios}.
These form the children of the vertex $v$ in the tree of options.
It is easy to see that every vertex in this tree is only ever considered once in the while-loop, so it suffices to upper-bound the number of vertices in this tree. 

Each vertex in the tree can be specified by the tuple $(s_1, \dots, s_n)$ of options that were taken to arrive at this vertex.
Hence, to count the number of vertices, we can count the number of such tuples.
The maximum length of such a tuple is $|C|$, since for each vertex in the option tree, its children had the variable $\textit{counter}$ incremented by 1, and $\textit{counter}$ cannot exceed $|C|$ by line~\ref{algo_counter_check} in \cref{algo:BackwardMatch}.
For each $s_i$, we distinguish two types: the \emph{trivial type}, where we block the gate under consideration without blocking any gates matched during \texttt{ForwardMatch} (options 1.2a, 2.1a or options 1.2c, 2.1c in \cref{algo:BackwardMatch});
and the \emph{non-trivial type}, where we either match the gate, or block it and also need to block successors that were matched during \texttt{ForwardMatch} (options 1.1, 1.2b or 2.1b in \cref{algo:BackwardMatch}).\footnote{Note that line~\ref{BackwardMatch:adding_to_stack_condition_1} ensures that we do not consider options where we would block gates that were previously matched during \texttt{BackwardMatch}.}

We now argue that any tuple $(s_1, \dots, s_n)$ can contain at most $|T|$ non-trivial $s_i$. 
For this, recall that if we match a gate during \texttt{BackwardMatch}, then this match is fixed and cannot be blocked subsequently. On the other hand, if we block a vertex (and its successors) that was matched during \texttt{ForwardMatch}, then this vertex will never be matched again because \texttt{BackwardMatch} does not consider gates in the forward direction for matching.
Since there are $|T|$ possible gates to match, we can match a gate or permanently block a previously matched gate at most $|T|$ times.
Hence, there can be at most $|T|$ non-trivial $s_i$.

With this, we can count the number of possible tuples $(s_1, \dots, s_n)$. 
Consider a fixed $n$ and a fixed number $k$ of non-trivial $s_i$.
We have $n \choose k$ ways of placing the non-trivial $s_i$ in the tuple $(s_1, \dots, s_n)$.
For each of the $k$ non-trivial $s_i$, there are $|T|+1$ possible scenarios that can be added to the stack $\textit{MatchingScenarios}$: 
we could match with at most $|T|$ different gates (option 1.1), or we could right-block (one of the mutually exclusive options 1.2b and 2.1b).
Having fixed choices for the non-trivial $s_i$, there is no further freedom in choosing the trivial $s_i$. This is because for trivial $s_i$, in the case where something can be matched (option 1), \cref{algo:BackwardMatch} always chooses option 1.2a (in \cref{algo:BackwardMatch}) or, if 1.2a cannot be chosen, option 1.2c. 
Analogously, if nothing can be matched (option 2), \cref{algo:BackwardMatch} always chooses option 2.1a (in \cref{algo:BackwardMatch}) or, if 2.1a cannot be chosen, option 2.1c.
Hence, for a fixed choice of circuit, pattern, $n$, and $k$, the non-trivial $s_i$ uniquely determine the trivial ones.

The above reasoning means that for fixed $n$, $k$, there are 
\begin{equation}
{n \choose k} (|T|+1)^{k}
\end{equation}
possible tuples $(s_1, \dots, s_n)$. Summing over the possible values of $n$ and $k$, we obtain the total number of vertices in the option tree (with ${n \choose k} = 0$ for $k > n$):
\begin{align*}
\sum_{n = 1}^{|C|} \sum_{k = 1}^{|T|} {n \choose k} (|T| + 1)^{k} 
&\leq (|T| + 1)^{|T|} |C| \sum_{k = 1}^{|T|} {|C| \choose k} \\
&\leq (|T| + 1)^{|T|} |C| |T| |C|^{|T|} = \mathcal{O}\left( (|T| |C| )^{|T| + 1} \right) \,.
\end{align*}

The complexity of a single round of the while-loop is at most $\mathcal{O} \left( |T|^2 + |C| \right)$, where the leading contributions come from the removal of equivalent matches in line~\ref{algo_index_pruning}\footnote{To see why this requires time $|T|^2$, note that we can first sort the list $I$. Then to check $[i, j]_T = 0$ for fixed $i$ and all $j \in I$, we only need to compare the sorted list of successors of the vertex with label $i$ with the sorted list $I$, which requires time linear in the length of the lists (i.e., at most $\mathcal{O}(|T|)$). Doing this for all $i$, we get a total complexity of $\mathcal{O}(|T|^2)$.} and the blocking of successors in the graph $G^C$ in line~\ref{algo:line_blocking}.
Therefore, the worst-case complexity of \texttt{BackwardMatch} is (somewhat loosely) upper-bounded by
\begin{equation}
\mathcal{O} \left( |T|^{|T| + 3}  |C|^{|T|+2} \right) \,.
\end{equation}

\end{proof}

\begin{proof}[Proof of  Theorem~\ref{thm_Complexity}]
The assertion of Theorem~\ref{thm_Complexity} now follows from Lemma~\ref{lem:ComplexityForwardMatch}  and Lemma~\ref{lem:ComplexityBackwardMatch} and inspection of the loop structure in \texttt{PatternMatch}.   
\end{proof}


\section{Heuristics} \label{sec:heuristics_full}
In this section, we provide a more detailed description of the heuristics mentioned in \cref{sec:heuristics_intro}.
These heuristics allow us to further speed up our algorithm, but using them might lead the algorithm to miss some maximal matches.
Depending on the size of the quantum circuit and the amount of classical computation time available to optimize it, one may choose to either use the exact algorithm, or to use the heuristics by setting ``quality parameters'' (controlling the tradeoff between faster runtimes and less missed matches) to the desired values.


\paragraph{Heuristics for the choice of the qubits.} 
In the exact algorithm \texttt{PatternMatch}, we loop over all possible assignments of pattern qubits to circuit qubits.
If we were able to choose, for a given starting gate, the correct qubits $L_{\textnormal{q}}^{\textnormal{sel}}$ (out of the $n_C$ circuit qubits) that lead to a maximal match, we would save a lot of runtime. 
A simple heuristic is to consider $F$ additional gates around the starting gate. If the number of successors (in the canonical graph) of the starting gates is more than half of the pattern size, then we consider the gates corresponding to the first $F$ successors and add the qubits these gates act on to $L_{\textnormal{q}}^{\textnormal{sel}}$. 
Otherwise, we pick the $F$ gates with the largest label that are not successors of the starting gate and use these to add qubits to $L_{\textnormal{q}}^{\textnormal{sel}}$ accordingly.
If the maximal match indeed contains all the qubits the $F$ explored gates act on, 
we will still find the maximal match; if the maximal match leaves some of these qubits out, we will miss it.
For large enough $F$, such heuristics essentially reduce the term $n_C^{n_T-1}$ in the worst case complexity given in \cref{eq_runntime} to a constant.
The pseudocode for this heuristic is given in Algorithm~\ref{algo:HeuristicsQubit} and based on the two subroutines Algorithm~\ref{algo:ExploreCircuitForward} and Algorithm~\ref{algo:ExploreCircuitBackward}. We define the method \texttt{Qubits}$(v)$, that takes a vertex $v$ of a canonical form as input and returns the qubit labels that the gate corresponding to $v$ acts on. The output of the Algorithm~\ref{algo:HeuristicsQubit} may fix less qubits than the number of qubits $n_T$ contained in the pattern. In this case, we still loop over all possible choices for the remaining qubits in \texttt{PatternMatch}. Finally, we loop trough all possible permutations of the qubits. 

\paragraph{Heuristics for \texttt{BackwardMatch}.}
Another major contribution to the runtime of our algorithm comes from having to consider all possibilities in the tree of matching scenarios. 
We can use the following heuristic to reduce the number of branches that we consider: we  evolve all matching scenarios for $L$ steps, and then prune the tree by choosing only the $S$ branches that have matched the most gates so far for further consideration. This process is repeated until the algorithm terminates. Such a heuristic reduces the worst-case complexity of  \texttt{BackwardMatch} from $\mathcal{O}\left( |T|^{|T| + 3}  |C|^{|T|+2} \right)$ to $\mathcal{O}\left( |C|^{2}  \right)$ for constant $L$ and $S$. 
This procedure can be easily integrated into \cref{algo:BackwardMatch}, and we omit the (rather lengthy) pseudocode for readability. 

The quality parameters controlling the tradeoff between runtime and output quality are $F$ for the qubit assignment heuristic, and $L, S$ for the heuristic to speed up \texttt{BackwardMatch}.

\begin{algorithm}[H] 
\caption{\texttt{HeuristicsQubits}: Constrain the qubits configurations to explore}
  \label{algo:HeuristicsQubit}
   \begin{algorithmic}[1]
   \State Input: $(G^C, G^T, n_T, r,i, L)$
   \begin{itemize}
       \item Canonical form $G^C$ of the circuit 
       \item Canonical form $G^T$ of the pattern
       \item Total number of qubits $n_T$ in the pattern
 \item Gate indices $r$ in $C$ and $i$ in $T$ (where we start matching, i.e., the first match is $G^C_r=G^T_i$)
       \item Length $L$ to explore 
   \end{itemize}
   \If{$|\texttt{Succ}(G^T_i,G^T)| \geq \frac{1}{2}\left( |T|-i+1 \right)$}
   \State $L_{\textnormal{q}}^{\textnormal{heur}} \leftarrow \texttt{ExploreCircuitForward}(G^C, G^T,n_T,r, L,j)$
   \Else
   \State $L_{\textnormal{q}}^{\textnormal{heur}}  \leftarrow \texttt{ExploreCircuitBackward}(G^C, G^T, n_T, r, L,j)$
   \EndIf
   \State Ouput: $L_{\textnormal{q}}^{\textnormal{heur}} $
 \end{algorithmic}
\end{algorithm}

\begin{algorithm}[H] 
\caption{\texttt{ExploreCircuitForward}: Constrain the qubits configurations by exploring in forward direction}
  \label{algo:ExploreCircuitForward}
   \begin{algorithmic}[1]
   \State Input: $(G^C, n_T, r, L)$
   \begin{itemize}
       \item Canonical form $G^C$ for the circuit 
       \item Total number of qubits $n_T$ in the pattern
       \item Start label $r$ in the circuit
       \item Length $L$ to explore 
   \end{itemize}
   \State j=1
    \State $L_{\textnormal{q}}^{\textnormal{forw}}=\texttt{Qubits}(G^C_r)$
    \State $I =$ List of labels of vertices in $\texttt{Succ}(G^C_r,G^C)$ ordered in increasing order
    \While{$|L_{\textnormal{q}}^{\textnormal{forw}} \cup  \texttt{Qubits}(G^C_{I_j})|  \leq n_T$ and $j \leq L$ and $j \leq |I|$}
    \State{$L_{\textnormal{q}}^{\textnormal{forw}} \leftarrow L_{\textnormal{q}}^{\textnormal{forw}} \cup \texttt{Qubits}(G^C_{I_j})$}
    \State $j=j+1$
    \EndWhile
    \State Output: $L_{\textnormal{q}}^{\textnormal{forw}}$
 \end{algorithmic}
\end{algorithm}

\begin{algorithm}[H] 
\caption{\texttt{ExploreCircuitBackward}: Constrain the qubits configurations by exploring in backwards direction}
  \label{algo:ExploreCircuitBackward}
  \begin{algorithmic}[1]
   \State Input: $(G^C, n_T, r, L)$
   \begin{itemize}
       \item Canonical form $G^C$ for the circuit 
       \item Total number of qubits $n_T$ in the pattern
       \item Start label $r$ in the circuit
       \item Length $L$ to explore 
   \end{itemize}
   \State j=1
   \State $L_{\textnormal{q}}^{\textnormal{back}}=\texttt{Qubits}(G^C_r)$
    \State $I =$ List of labels of vertices in $\{G^C_{|C|},\dots,G^C_1\} \setminus \texttt{Succ}(G^C_r,G^C)$ ordered in decreasing order
    \While{$|L_{\textnormal{q}}^{\textnormal{back}} \cup \texttt{Qubits}(G^C_{I_j})|  \leq n_T$ and $j \leq L$ and $j \leq |I|$}
    \State{$L_{\textnormal{q}}^{\textnormal{back}} \leftarrow L_{\textnormal{q}}^{\textnormal{back}} \cup \texttt{Qubits}(G^C_{I_j})$}
    \State $j=j+1$
    \EndWhile
    \State Output: $L_{\textnormal{q}}^{\textnormal{back}}$
 \end{algorithmic}
\end{algorithm}

\section{Numerical analysis} \label{sec:numerics_full}
In this section, we investigate the numerical scaling of our algorithm in more detail. All numerical experiments were implemented in Python and run on an Intel Core i7-9700K (3.60 GHz) processor with 2x16GB DDR4 RAM (on Ubuntu). The numerical performance could be improved by implementing the algorithm in a different program language such as C++, and by parallelizing it, which is straightforward since many of the loops in our algorithm can be performed in parallel. For example, one could easily try all qubit assignments in parallel, rather than in sequence as in our implementation. We leave a high-performance implementation for practical applications as future work.

\subsection{Scaling for creating the canonical form}
\label{sec:numerics_canonical_form}

Our work makes extensive use of the canonical form of quantum circuits, explained in Section~\ref{sec:algo_intro} and introduced in~\cite{Rahman:2014:AQT:2711453.2629537}.
The creation of the canonical form can be separated from our main algorithm in practice (because e.g. the canonical form could be created once, stored, and used many times). 
Therefore, the numerical scaling results in the following sections assume that the canonical form is given as an input.
In this section, we numerically investigate how much time it takes to actually create the canonical form itself.
While the worst-case time complexity to create the canonical form of a circuit $C$ using Algorithm~\ref{algo:CreateCanonicalForm} is $\mathcal{O}(|C|^3)$ (if we also store an ordered list containing all successors and predecessors for each vertex), our numerics suggests that the practical scaling for random circuits is close to linear (see Figure~\ref{fig:runtime_canonical_form}) and does not make a dominant contribution to the total runtime of our algorithm.

\begin{figure}[t]
\centering
\includegraphics[width=0.55\textwidth]{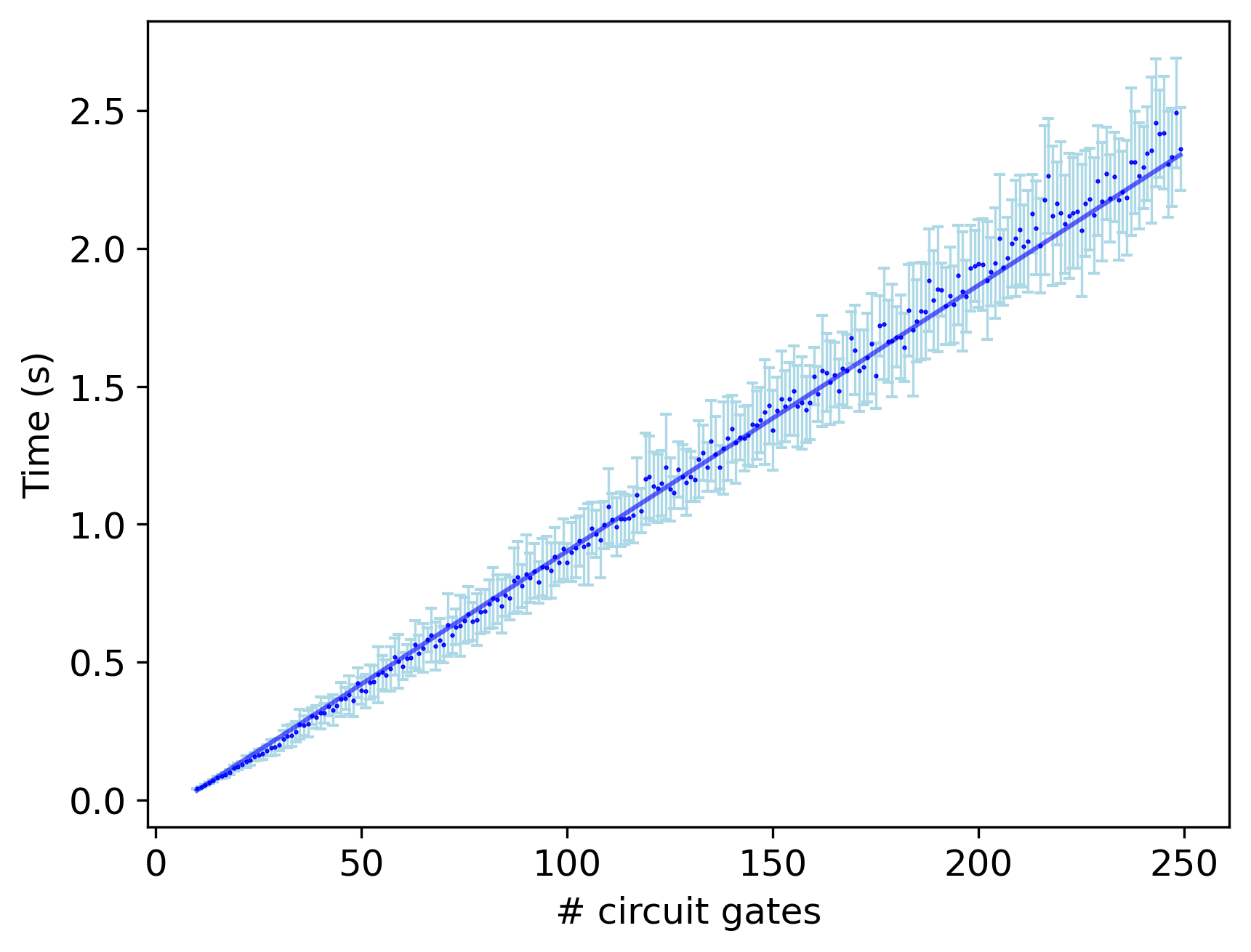}
\caption{We analyse the runtime of Algorithm~\ref{algo:CreateCanonicalForm} that creates the canonical form of a circuit together with an ordered list of all successors and predecessors for each vertex (see Remark~\ref{rmk:constant_access_to_successors}). For each gate count $|C|$, we run Algorithm~\ref{algo:CreateCanonicalForm} on 15 random circuits and calculate the mean and standard deviation of the runtimes, plotted in the figure. We use the same random circuits as for Figure~\ref{fig:comparison_2_mean} consisting of X, \cnot{} and Toffoli gates chosen uniformly at random.}
\label{fig:runtime_canonical_form}
\end{figure}

\subsection{Scaling with the number of gates in the circuit}
The scaling of our algorithm \texttt{PatternMatch} in the number of gates $|C|$ in the circuit was already discussed in Section~\ref{sec:numerics_intro} for the pattern given in Figure~\ref{fig:pattern}. Here, we give a second example, with a different pattern shown in Figure~\ref{fig:pattern_2}. Since only the last \cnot{} gate commutes with the other gates in this pattern, we expect that the subroutine \texttt{ForwardMatch} will be able to match most gates and leave little work to do for the less efficient subroutine \texttt{BackwardMatch}. 
Indeed, we find that using heuristics for the \texttt{BackwardMatch} (Section
~\ref{sec:heuristics_full}) has essentially no effect on the runtime. 
The runtimes for running \texttt{PatternMatch} without heuristics and with heuristics for the qubit choice with $F=1$ are shown in Figure~\ref{fig:numerics_scaling_C_mean_2}. 
The runtimes are lower than the ones shown in Figure~\ref{fig:comparison_2_mean}, mainly since \texttt{ForwardMatch} is more efficient than \texttt{BackwardMatch}. Further, the loss of matches due to the heuristic to choose the qubits is lower than for the pattern given in Figure~\ref{fig:pattern}, since for the given pattern, guessing which qubits to choose is easier due to the fact that nearly no gates commute.

\begin{figure}[H]
\begin{subfigure}[b]{0.3\textwidth}
\centering
\scalebox{0.85}{
$$
    \Qcircuit @C=1.0em @R=0.2em @!R {
	 	 & \qw & \ctrl{1} & \qw & \ctrl{1} & \ctrl{2} & \qw & \qw\\
	  & \ctrl{1} & \targ & \ctrl{1} & \targ & \qw & \qw & \qw\\
	 	 & \targ & \qw & \targ & \qw & \targ & \qw & \qw\\
	 }
$$
}
\subcaption{Pattern}
\label{fig:pattern_2}
\end{subfigure}
\hfill
\begin{subfigure}[b]{0.65\textwidth}
\centering
\includegraphics[width=1\textwidth]{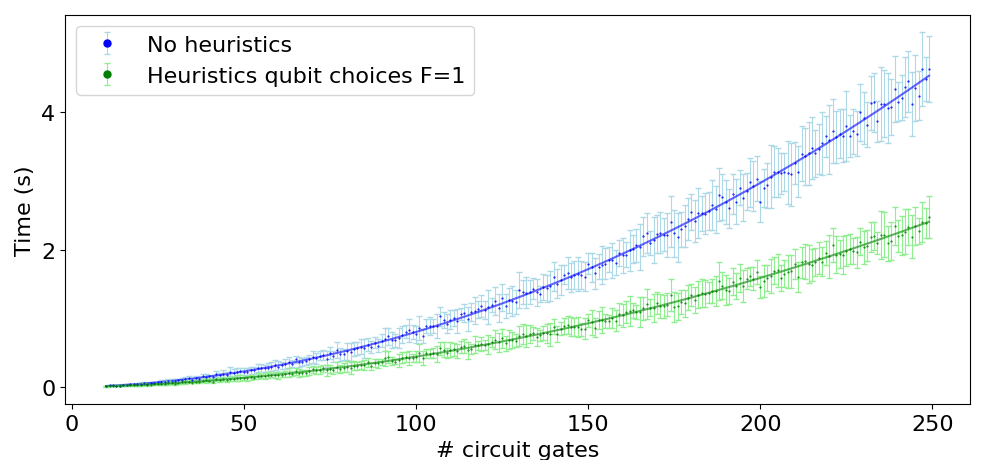}
\subcaption{Runtimes dependent on circuit size $|C|$}
\label{fig:numerics_scaling_C_mean_2}
\end{subfigure}
\caption{ The figure shows the runtimes of \texttt{PatternMatch} for finding maximal matches of the pattern~\ref{fig:pattern_2} ($n_T=3$, $|T|=5$) in randomly generated circuits with $|C| \in [10,250]$ gates on $n_C=6$ qubits consisting of X, \cnot{} and Toffoli gates. For each gate count, 15 random quantum circuits are generated, and the mean and the standard deviation are plotted. A polynomial fit of the means yields $6.8 \cdot 10
^{-5}(|C|^2 + 18.38|C| - 130)$. If we include a cubic term in the fit, the ratio of the cubic to the quadratic coefficient is 1/4300, so the cubic term can be safely dropped. Further, we run our algorithm with heuristics for the qubit choice, which leads to lower runtimes (but similar scaling) and causes the algorithm to miss $\sim$26\% of matches longer than half the pattern.}
\label{fig:numerics_scaling_C_2}
\end{figure}

\subsection{Scaling with the number of qubits in the circuit} \label{sec:scaling_qubit_number}

The worst-case time complexity given in \cref{eq_runntime} scales with the number of qubits as $n_C^{n_T-1}$, where $n_C$ and $n_T$ are the number of qubits in the circuit and the pattern, respectively. This term arises due to the fact that the algorithm \texttt{PatternMatch} loops trough all possible assignments of qubits in the pattern to qubits in the circuit. The $``-1''$  arises due to the fact that at least one qubit is fixed by matching the starting gate. More generally, we have a worst-case time complexity of $n_C^{n_T-q}$, where $q$ denotes the minimal number of qubits that are fixed by the starting match. Since the algorithm \texttt{PatternMatch} always loops trough all these assignments, we expect to see the worst-case complexity also in the practical implementation of the algorithm. 
For example, for the pattern given in Figure~\ref{fig:template_scaling_qubit}\,, we have $n_T=4$ and since there are starting matches with single-qubit gates, we have $q=1$. Hence we expect cubic scaling in the number of qubits, which is indeed the case in Figure~\ref{fig:comparison_2_mean_qubits}.

We also investigate the effect of the heuristics for choosing the qubit assignment with different exploration lengths $F=1,2,3$ (Section~\ref{sec:heuristics_full}). As shown in Figure~\ref{fig:comparison_2_mean_qubits}, the runtime improves for $F = 1$ and $F = 2$, but $F=3$ brings hardly any further speedup.
This is due to the structure of the pattern: for each starting match, there are at most two gates to explore in the forward as well as in the backward direction. Since matches are particularly useful if more than half of the pattern is matched (since such matches can be used to reduce the gate count with pattern matching if the pattern implements the identity), we consider only the number of missed matches that would have been  at least $|T|/2$ long. For the pattern given in Figure~\ref{fig:template_scaling_qubit}, we find that for $F=1$, the mean proportion of missed matches is around $45\%$, and for $F>1$ around $72\%$. 
For example, for $F=1$ and $n_C=25$, there is a $90\%$ decrease for the runtime, and an average of $45\%$ of matches are missed. This means that we still find $55\%$ of the matches (longer than half the size of the pattern) within 10\% of the runtime, which might be a good tradeoff for larger-scale applications.

\begin{figure}[H]
\centering
\begin{subfigure}[b]{0.2\textwidth}
\centering
\begin{equation*}
    \Qcircuit @C=0.5em @R=0.0em @!R {
	  & \targ & \qw & \qw & \qw \\
	 	 & \ctrl{-1} & \gate{X} & \ctrl{1} & \qw \\
	  & \gate{X} & \ctrl{1} & \targ & \qw \\
	 	 & \qw & \targ & \qw & \qw \\
	 }
\end{equation*}
\subcaption{Pattern}
\label{fig:template_scaling_qubit}
\end{subfigure}
\centering
\begin{subfigure}[b]{0.75\textwidth}
\centering
\includegraphics[width=1\textwidth]{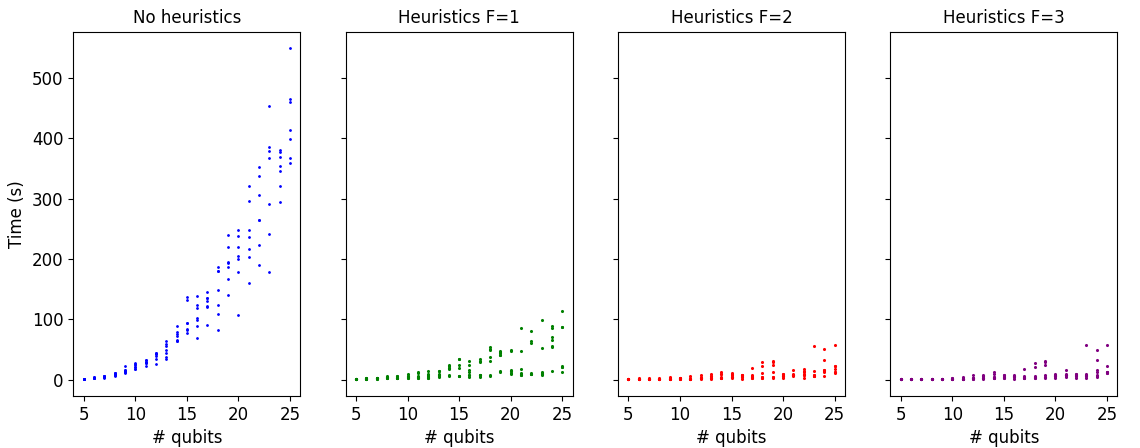}
\subcaption{Scaling with the number of qubits $n_C$}
\label{fig:comparison_2_mean_qubits}
\end{subfigure}
\caption{The figure shows the time scatter plots for running the pattern matching algorithm with the pattern shown in (a) on random circuits with $|C|=50$ gates (X, \cnot{} and Toffoli gate chosen uniformly at random) with increasing number of qubits $n_C$ in the circuit (for a fixed number of gates, i.e., the circuit becomes wider and shallower). We run the algorithm for 7 random circuits for each choice of $n_C$. Further, the results for different choices of the parameter $F=1,2,3$ for the heuristics to choose the qubits (see Section~\ref{sec:heuristics_full}) are shown. 
For $F=1$ the mean proportion of missed matches larger than half the pattern size is around $45\%$ and for $F>1$ roughly constant around $72\%$.
Fitting a cubic polynomial to  the mean values of the runtimes without heuristics yields $2.56\cdot 10^{-2}\left(n_C^3+2.8 n_C^2-59 n_C+146\right)$. Including a degree-4 term, we find that the factor between the quartic and the cubic coefficient is around 166, suggesting that the scaling in the number of qubits is approximately cubic. T
his is expected from the worst-case time complexity $n_c ^{n_T-1}$, since we have $n_T=4$ here.}
\label{fig:qubit_number_time_scatter}
\end{figure}

\subsection{Circuit optimization} \label{sec:circuit_opt}
Pattern matching is a useful tool for different optimization techniques for quantum circuits. In this section, we demonstrate two such techniques: template matching and peephole optimization. There are other useful applications for pattern matching: for example, we could search for a pattern that appears often in a larger circuit, optimize the pattern for a specific hardware architecture with brute-force methods, and replace the found instances of the pattern with the optimized version. 

\subsubsection{Template matching}
Template matching is explained in Section~\ref{sec:prev_work}. The results of applying it for the optimization of \emph{random} circuits are shown in Table~\ref{tab:optimization_random}. Here we apply template matching to benchmark circuits that represent relevant computations for quantum algorithms and that have already been heavily optimized using state-of-the-art techniques~\cite{Nam_2018}. The results are summarized in Table~\ref{tab:optimization_bench} and show that we can further optimize some of these circuits with the application of a single simple template shown in Figure
~\ref{fig:template_first_type}.
This shows the potential that using an \emph{exact} pattern matching algorithm that finds all matches of a pattern has for optimizing quantum circuits beyond previous state-of-the-art methods.

\begin{figure}[H]
\centering
\begin{equation*}
    \Qcircuit @C=1.0em @R=0.2em @!R {
	 	\lstick{ \textnormal{qubit 1} :  } & \qw & \ctrl{1} & \qw & \ctrl{1} & \ctrl{2} & \qw & \qw\\
	 	\lstick{ \textnormal{qubit 2} :  } & \ctrl{1} & \targ & \ctrl{1} & \targ & \qw & \qw & \qw\\
	 	\lstick{ \textnormal{qubit 3} :  } & \targ & \qw & \targ & \qw & \targ & \qw & \qw\\
	 }
\end{equation*}
\caption{Template $T$ that is used to optimize the benchmark circuits optimized in~\cite{Nam_2018}. We use this template because it is especially suited to reduce the \cnot{} count in a circuit, which is important because the \cnot{} gate is usually more challenging to implement experimentally than single-qubit gates. Further, the benchmark circuits were not already optimized with this template. }
\label{fig:template_first_type}
\end{figure}


\begin{table}[H]
\centering
\scalebox{0.89}{
\begin{tabular}{c|ccc|ccc}
\toprule
{} &\multicolumn{3}{c|}{Before template matching \cite{Nam_2018}} & \multicolumn{3}{c}{After template matching}\\ \midrule
Circuit name & \# gates in circuit $|C|$ & \# \cnot{} & \# qubits $n_C$ & \# gates in circuit $|C|$ & \# \cnot{} & time (s)\\ \toprule
$\text{Mod5}_4$ & 51 & 28 & 5 & 49 & 26 & 0.41  \\
$\text{VBE-Adder}_3$ & 89 & 50 & 10 & 84 & 45 & 3.26 \\
$\text{CSLA-MUX}_3$ & 155 & 70 & 15 & 153 & 68 & 17.45  \\
$\text{QCLA-Com}_7$ & 284 & 132 & 24 & 280 & 128 & 71.30 \\
$\text{QCLA-Mod}_7$ & 624 & 292 & 26 & 614 & 282 & 341.43 \\
$\text{QCLA-Adder}_{10}$ & 399 & 183 & 36 & 393 & 177 & 269.30 \\
$\text{Adder}_8$ & 606 & 291 & 24 & 598 & 285 & 384.22\\
$\text{GF}(2^{4})\text{-Mult}$ & 187 & 99 & 12 & 183 & 95 & 18.88  \\
$\text{GF}(2^5)\text{-Mult}$ & 296 & 154 & 15 & 289 & 147 & 55.83 \\
$\text{GF}(2^6)\text{-Mult}$ & 403 & 221 & 18 &  392 & 210 & 124.52 \\
$\text{GF}(2^7)\text{-Mult}$ & 555 & 300 & 21 & 539 & 284 & 282.95 \\
$\text{GF}(2^8)\text{-Mult}$ & 712 & 405 & 24 & 687 & 380 & 508.21 \\
$\text{GF}(2^9)\text{-Mult}$ & 891 & 494 & 27 & 862 & 465 & 930.93\\
$\text{GF}(2^{10})\text{-Mult}$ & 1070 & 609 & 29 & 1033 & 572 & 1487.34 \\
$\text{GF}(2^{16})\text{-Mult}$ & 2707 & 1581 & 48 & 2588 & 1462 & 3147\\ \midrule
$\text{CSUM-MUX}_9$ & 266 & 140 & 30 & 266 & 140 & 85.22\\
$\text{RC-Adder}_6$ & 140 & 71 & 14 & 140 & 71 & 7.65 \\
$\text{Mod-Red}_{21}$ & 180 & 77 & 11 & 180 & 77 & 7.11\\
$\text{Mod-Mult}_{55}$ & 91 & 40 & 9 & 91 & 40 & 2.41 \\
$\text{Toff-Barenco}_{3}$ & 40 & 18 & 5 & 40 & 18 & 0.19 \\
$\text{Toff-NC}_{3}$ & 35 & 14 & 5 &  35 & 14 & 0.21 \\
$\text{Toff-Barenco}_{4}$ & 72 & 34 & 7 & 72 & 34 & 0.67\\
$\text{Toff-NC}_{4}$ & 55 & 22 & 7 & 55 & 22 & 0.53\\
$\text{Toff-Barenco}_{5}$ & 104 & 50 & 9 & 104 & 50 & 1.76\\
$\text{Toff-NC}_{5}$ & 75 & 30 & 9 & 75 & 30 & 1.11\\
$\text{Toff-Barenco}_{10}$ & 264 & 130 & 19 & 264 & 30 & 26.72  \\
$\text{Toff-NC}_{10}$ & 175 & 70 & 19 & 175 & 70 & 10.86  \\
\bottomrule
\end{tabular}
}
\caption{The table shows the results of applying template matching based on $\texttt{PatternMatch}$ (without any heuristics) on benchmark circuits using only one template given in  Figure~\ref{fig:template_first_type}. The top half of the table shows circuits where applying template matching reduced the \cnot{} count. The second part of the table shows circuits where no reduction in \cnot{} count could be achieved.}
\label{tab:optimization_bench}
\end{table}
\newpage
\subsubsection{Peephole optimization} \label{sec:peephole}

\begin{figure}[t]
\captionsetup{font=footnotesize,labelfont=footnotesize, width=\textwidth}
   \begin{subfigure}[b]{0.45\textwidth}
    \centering
   \begin{equation*}
   \scalebox{0.75}{
    \Qcircuit @C=0.5em @R=1.2em @!R {
         &1&2&3&4&5&6&7&8&9&10&11&12\\   
	 	\lstick{ \textnormal{qubit 1} :  } & \qw & \qw & \qw & \qw & \qw & \ctrl{1} & \ctrl{2} & \qw & \qw & \qw & \qw & \qw & \qw & \qw\\
	 	\lstick{ \textnormal{qubit 2}  :  } & \qw & \ctrl{1} & \gate{H} & \ctrl{1} & \gate{H} & \targ & \qw & \qw & \targ & \qw & \targ & \gate{X} & \qw & \qw\\
	 	\lstick{ \textnormal{qubit 3}  :  } & \gate{X} & \targ & \qw & \targ & \qw & \qw & \targ & \gate{H} & \ctrl{-1} & \gate{H} & \ctrl{-1} & \qw & \qw & \qw
		\gategroup{3}{2}{4}{6}{0.6em}{--}
		\gategroup{3}{9}{4}{13}{0.6em}{--}
    	 }
    	 }
    \end{equation*}
	 \subcaption{Longest gate sequence found by adapted \texttt{PatternMatch}}
	 \label{fig:circuit_chain}
    \end{subfigure}
\begin{subfigure}[b]{0.45\textwidth}
    \centering
   \begin{equation*}
   \scalebox{0.7}{
    \Qcircuit @C=0.5em @R=1em @!R {
     &7&&&&&&&6\\   
	 	 & \ctrl{2} & \qw & \qw & \qw & \qw & \qw & \ctrl{1} & \qw & \qw\\
	 	 & \qw & \gate{U_3(\frac{\pi}{2},\frac{\pi}{2},\frac{3\pi}{2})} & \ctrl{1} & \gate{U_3(\frac{\pi}{2},\frac{\pi}{2},\pi)} & \ctrl{1} & \gate{U_3(\frac{\pi}{2},0,\frac{3\pi}{2})} & \targ & \qw & \qw\\
	 	 & \targ & \gate{U_3(\pi,0,\pi)} & \targ & \gate{U_3(\frac{\pi}{2},0,\frac{-3\pi}{2})} & \targ & \gate{U_3(0,2.38,2.33)} & \qw & \qw & \qw\\
	 }
	 }
	     \end{equation*}
	 \subcaption{Circuit (a) after applying peephole optimization}
	 \label{fig:chain_opt}
    \end{subfigure}
    \caption{\textbf{Peephole optimization.} 
    We use the adapted \texttt{PatternMatch} algorithm to find the longest (connected) sequences of gates on all subsets of two qubits in the circuit given in (a). The longest sequence found on qubits 2 and 3 is marked. Note that the marked gates can be commuted next to each other by commuting the gate $C_6$ to the end of the circuit and the gate $C_7$ to the start. Then, the sequence consisting of the marked gates is represented as a two-qubit unitary and optimally decomposed into a sequence of single-qubit rotations and \cnot{} gates using the synthesizing method from~\cite{shende_recognizing_2004}. The resulting circuit is shown in (b) and saves two \cnot{} gates compared to the original circuit provided in (a).}
\end{figure}

The goal of peephole optimization is to find longest sequences of gates acting on a small subset of qubits, optimize them, and then replace the sequences with their optimized versions. To optimize the gate sequences on a small number of qubits, one can for example use re-synthesizing methods: we multiply all the unitaries corresponding to the gates in the sequence and then use general methods to decompose the unitary again into a sequence of gates, trying to minimize the gate count used for the decomposition. The best known synthesizing methods for arbitrary isometries are provided as a software package that was introduced in~\cite{iten_introduction_2019} and are based on
~\cite{iten_quantum_2016} and references therein. For the special case of two-qubit unitaries, optimal decomposition methods are known~\cite{shende_recognizing_2004}. In the special case of optimizing Clifford circuits, re-synthesizing methods were proposed in~\cite{kliuchnikov_optimization_2013} and successfully applied to reduce the gate count by  about 50\% with peephole optimization, using a heuristic algorithm introduced in~\cite{prasad_data_2006} to find the longest gate sequences.

Finding longest gate sequences can be considered as a special case of our algorithm \texttt{PatternMatch}, where we do not fix a specific pattern, but instead consider every gate in the circuit to be a match as long as it only acts on the selected qubits. Since the correctness proof of our algorithm works for any pattern, we are still guaranteed to find all longest sequences in a circuit. 
To show how finding longest sequences acting on a few qubits can help with circuit optimization, an example for a longest sequence on two qubits found by \texttt{PatternMatch} is given in Figure~\ref{fig:circuit_chain}, and the optimized circuit using peephole optimization is shown in Figure~\ref{fig:chain_opt}. In the example, one has to cleverly commute gates in the middle of the circuit, which are not part of the longest sequence of gates that is found. The algorithm proposed in~\cite{prasad_data_2006} would not consider this option and hence would not find the longest sequence for the given example. Having found a longest sequence using \texttt{PatternMatch}, one can then use optimal synthesizing methods for two qubit unitaries~\cite{shende_recognizing_2004} to reduce the number of \cnot{} gates from $4$ to $2$. Apart from \cnot{} gates, the optimized circuit contains arbitrary single-qubit gates $U_3$ parameterized by the Euler angles $\theta$, $\lambda$ and $\phi$ and defined by
\begin{equation*}
U_3\left(\theta,\phi,\lambda\right) = 
\begin{pmatrix}
\cos{\frac{\theta}{2}} & -e^{i\lambda} \sin{\frac{\theta}{2}}  \\
e^{i\phi}\sin{\frac{\theta}{2}} & e^{i(\phi+\lambda)}\cos{\frac{\theta}{2}} \\
\end{pmatrix}.
\end{equation*}

\begin{figure}[t]
    \centering
    \includegraphics[width=0.7\textwidth]{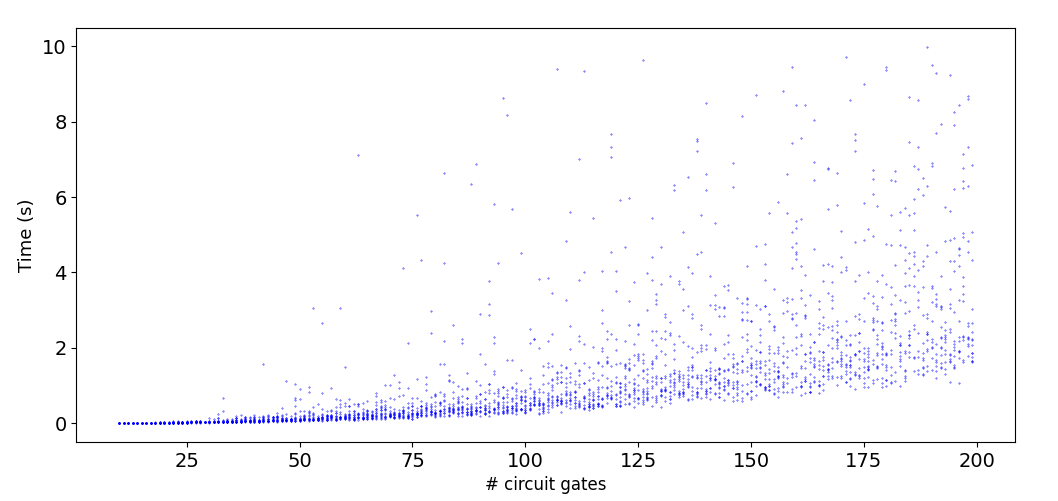}
    \caption{ The figure shows the runtime of \texttt{PatternMatch} adapted for searching the longest (connected) gate sequences on all subsets of two qubits in randomly generated circuits with $|C| \in [10,200]$ gates on $n_C=6$ qubits consisting of X, \cnot{} and Toffoli gates. We create 15 random circuits for each choice of $|C|$. The plot shows that the runtimes generally scale amenably with the circuit size, but have a strong dependence on the structure of the (randomly sampled) circuits, since runtimes vary a lot even for fixed circuit size.
    }
    \label{fig:scaling_chain}
\end{figure}


We note that while using our algorithm for peephole optimization works well in practice, we loose our bound on the worst-case time complexity if we search for longest sequences instead of a specific pattern, since the size of the ``pattern'' is now unbounded. However, we provide numerical results for the runtime of our algorithm for finding maximal sequences on two qubits in Figure~\ref{fig:scaling_chain}.




\appendix

\section{Correctness (proof of Theorem~\ref{thm_correctness})} \label{app_correctness}
To prove the assertion of Theorem~\ref{thm_correctness} we need a couple of lemmas that ensure the correctness of the individual subroutines used by the \texttt{PatternMatch} algorithm.


\begin{lemma}[Correctness of \texttt{ForwardMatch}
(Algorithm~\ref{algo:ForwardMatch})] \label{lem_forwardMatch}
Consider a circuit $C$ and a pattern $T$ with a fixed assignment of pattern qubits to circuit qubits (given by $L_{\textnormal{q}}^{\textnormal{sel}}$ in \cref{algo:ForwardMatch}) and fixed starting gates $C_r = T_i$.
Split the pattern as $T = (\TLeft, \TRight)$, where $\TRightI{1} = T_i$ is the starting gate and the forward part $\TRight$ is defined by 
\begin{align} \label{eq:ForwardMatch_template_property}
\comuCirc{1}{j}{\TRight}\neq 0 \quad \textnormal{ for all $j\in \{2,3,\dots,|\TRight| \}$} \, ,
\end{align}
and $\TLeft$ contains the remaining gates.
Then, \cref{algo:ForwardMatch} finds a maximal match (up to equivalence) of the partial pattern $\TRight$ in $C$.
Furthermore, \texttt{ForwardMatch} only blocks vertices in the canonical form of the circuit $C$ that  correspond to gates  that could never be matched with gates in the full pattern $T$ (or which would lead to equivalent matches).
\end{lemma}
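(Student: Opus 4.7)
The plan is to decompose the proof into a structural analysis of the search space and an inductive optimality argument for the greedy rule.

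First, I will exploit the assumption $\comuCirc{1}{j}{\TRight}\neq 0$ via \cref{lem:Property_CanonicalForm}: it implies that every vertex of the canonical form of $\TRight$ with label at least $2$ is a successor of the starting vertex. Consequently, any match of $\TRight$ into $C$ extending the fixed pair $(i,r)$ corresponds, in the canonical forms $G^T$ and $G^C$, to subgraphs contained in the descendants of the respective starting vertices, with matched vertices forming connected subgraphs under the direct-successor relation. This ``forward cone'' structure is what makes a BFS-style greedy strategy correct.

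Next, I will verify that \texttt{ForwardMatch} enumerates exactly this space. The list $\mathit{MatchedVertexList}$ together with the per-vertex attribute $\mathit{SuccessorsToVisit}$ implements a BFS through $G^C$ starting at $G^C_r$, and at each round it examines a direct successor $v$ of a previously matched circuit vertex $v_0$. In parallel, \texttt{FindForwardCandidates} returns exactly the direct successors (in $G^T$) of $v_0.\mathit{matchedWith}$ that are neither already matched nor would cause the matched pattern subgraph to become disconnected. These are precisely the pattern vertices that could legally extend the current partial match at $v$.

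The heart of the proof is optimality of the greedy rule, which I will establish by induction on the while-loop iteration, maintaining the invariant that the current partial match extends to some maximal match $M^*$ of $\TRight$ in $C$ (up to equivalence). For the inductive step at a candidate circuit vertex $v$ with label $s$ and non-empty candidate set, I distinguish three cases. If $M^*$ leaves $v$ unmatched, then by \cref{lem:Property_CanonicalForm} no descendant of $v$ in $G^C$ is matched in $M^*$ either, so I can modify $M^*$ to include the pair $(j,s)$ with the minimum-index candidate $T_j$, possibly swapping out a different pair involving $T_j$ (which must match some $C_{s'}$ with $C_{s'} = T_j = C_s$). If $M^*$ already matches $v$ with some $T_{j'}$, then picking the minimum-index $T_j$ yields an equivalent maximal match by an analogous swap, since $T_j$ and $T_{j'}$ perform the same gate at positions that are structurally interchangeable in the canonical form. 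If the candidate set is empty, then $M^*$ cannot match $v$ consistently with the connectivity constraint, and by \cref{lem:Property_CanonicalForm} no descendant of $v$ in $G^C$ is matched in $M^*$ either, justifying the blocking of $v$ and all its successors in line~\ref{algo:ForwardMatch_blocking_successors}. This last case also establishes the second claim of the lemma.

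The main obstacle will be making the equivalence-preserving swap rigorous when both $v$ and $T_j$ are ``occupied'' by different partners in $M^*$, requiring one to chase through a potential chain of reassignments while showing that the pruning performed by \texttt{FindForwardCandidates} (which discards candidates that would disconnect the match) never eliminates a candidate corresponding to a genuinely non-equivalent maximal extension. A secondary subtlety is confirming that the specific BFS order imposed by the attribute orderings never obstructs finding a maximal match, which reduces to the observation that direct-successor relations respect the partial order on the canonical form.
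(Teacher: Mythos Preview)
Your plan is essentially the paper's: both arguments first use \cref{lem:Property_CanonicalForm} together with property~\eqref{eq:ForwardMatch_template_property} to confine any match of $\TRight$ to $\texttt{Succ}(G^C_r,G^C)$, then verify that the while-loop visits exactly this set, and finally justify the greedy choice. Your exchange-invariant phrasing (``the current partial match extends to some maximal $M^*$'') is a repackaging of the paper's direct case split rather than a different route. One useful simplification you should take from the paper: the ``chain of reassignments'' you flag as the main obstacle does not occur. If $M^*$ matches $T_j$ to some later $C_t$ with $t>s$, then since $C_t=T_j=C_s$ and $G^C_s\in\texttt{DirectSucc}(v_0,G^C)$, the gates $C_{v_0.\mathit{label}}$ and $C_t$ do not commute; hence $G^C_t$ is a successor of $v_0$, and for the match to stay connected it must in fact be a \emph{direct} successor of $v_0$. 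So $C_s$ and $C_t$ are interchangeable in a single swap, and the two matches are equivalent---no chain is needed.

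There is one real gap. In the blocking case you argue only that $M^*$, a maximal match of $\TRight$, cannot use $v$ or its descendants, and then assert that this ``also establishes the second claim of the lemma.'' But the second claim is about the \emph{full} pattern $T$: you must also rule out matching the blocked vertex $v$ with some $T_k\in\TLeft$. The missing step is short but not automatic: $v$ is a direct successor of a vertex $v_0$ matched to some $T_{j_0}\in\TRight$, so $T_{j_0}$ and $T_k$ do not commute; since $T_k\in\TLeft$ is not a successor of $T_i$ (hence not of $T_{j_0}$), we must have $G^T_{j_0}\in\texttt{Succ}(G^T_k,G^T)$, forcing $T_k$ before $T_{j_0}$ in any connected sub-pattern, which contradicts $C_s$ coming after $C_{v_0.\mathit{label}}$ in the circuit. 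Without this argument, the later reliance of \texttt{BackwardMatch} on the set of blocked vertices is unjustified.
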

\begin{proof}
Let us denote the canonical form of the circuit $C$  and of the pattern $T$ by $G^C$ and $G^T$, respectively. We have only to consider gates $C_s$ with $G^C_s \in  \texttt{Succ}(G^C_r,G^C)$ for matching with $\TRight$. Indeed, let us show this rigorously.

\textbf{Matching gates must correspond to vertices in $G^C$ that are successors of $G^C_r$}: Assume by contradiction that there exists a pattern match (see Definition~\ref{def:template_match}) where a gate $T_j \in \TRight$  is matched with a gate $C_s$ (with $s \neq r)$, such that $G^C_s \notin  \texttt{Succ}(G^C_r,G^C)$. 
By~\eqref{eq:ForwardMatch_template_property} and Lemma~\ref{lem:Property_CanonicalForm}, we have $G^T_j \in  \texttt{Succ}(G^T_i,G^T)$. Choose a path $p$ from the vertex $G^T_i$ to $G^T_j$. Since a pattern match is connected, we find that all the gates with labels corresponding to the vertices on the path $p$ must also be matched with gates in the circuit. Therefore, we find a corresponding path from  $G^C_r$ to $G^C_s$ in the circuit, and hence $G^C_s\in  \texttt{Succ}(G^C_r,G^C)$, which contradicts the initial assumption.

Let us now go on with the proof of the correctness of \texttt{ForwardMatch}. By construction, the algorithm goes through all the successors of $G^C_r$ in the while-loop, and hence it visits all  gates that possibly could be matched with gates in $\TRight$. Indeed, the successors of $G^C_r$ are either considered as a direct successor of a matched vertex that was inserted into the list  \textit{MatchedVertexList}, or as a successor of a vertex that is blocked in a cycle of the while-loop (see line~\ref{algo:ForwardMatch_blocking_successors} in Algorithm~\ref{algo:ForwardMatch}). Moreover, the order of considering the successor vertices of $G^C_r$ that might be matched corresponds to the order of the corresponding gates in the circuit (from left to right). This is ensured by always considering the successors with minimal label that we have to visit first in the while-loop.

It remains to be shown that \texttt{ForwardMatch} handles each successor of $G^C_r$ in a way that leads to one of the (equivalent) longest matches. 
Since a pattern match must be connected and we start matching $C_r$ with $\TRightI{1}$ from the left to the right in the circuit, the indices \textit{CandidateIndices} of the gates in the pattern that could be matched  next must be direct successors of the already matched gates.  Further, direct successors are only valid candidates if matching them does not make the match in the pattern unconnected. Specifically, we have to ensure the following: a direct successor $v$ of an already matched gate is a candidate if and only if there is no unmatched vertex $v'$ that is a predecessor of $v$ and a successor of an already matched gate. Indeed, also note that in such a case, the unmatched vertex $v'$ could not be matched later on in the forward matching proccess, since we would never consider predecessors of the vertex in the circuit that is matched with $v$ later on in the matching process (because predecessors would have lower labels and we are looping trough the circuit from left to the right with increasing labels).
Therefore, all valid candidates are found in Algorithm~\ref{algo:FindForwardCandidates}. 
We have the following two cases in the while-loop for a matched root vertex $v_0$ and a direct successor $v$ of $v_0$ with label $s$ that we consider for matching in the canonical form $G^C$ of the circuit $C$.
\begin{enumerate}
\item \label{proof_forward:case1} \textbf{Gate $C_s$ matches with a gate $T_j$ with $j\in \mathit{CandidateIndices}$.} In the following we show that the optimal strategy in this case is indeed to greedily match the two gates. There are two cases: 
\begin{enumerate}
\item If the vertex $v=G^C_s$ is the only successor of $G^C_r$ with label bigger or equal $s$, such that $C_{s}=T_j$, we should match the two gates, since matching them will not disturb any possible further matches (but can only lead to longer matches). Indeed, not matching would reduce the possible candidates in the pattern for further matches and might block vertices in $G^C$ that are successors of the unmatched gate. Note that the order in which the gates are visited in the while-loop ensures that all predecessors of $v$ that are also successors of the starting vertex $G^C_r$ are already matched if we end up in line~\ref{ForwardMatch_matching_case} in Algorithm~\ref{algo:ForwardMatch}, and hence, matching leads to a connected match.

\item However if we have a further vertex $G^C_t \in \texttt{Succ}(G^C_r,G^C)$ with $t>s$ and with $C_{t}=T_j$, it could a priori happen that we should not match, because matching $G^C_t$ instead of $v$ could lead to a longer match  (see point~\ref{problems:additional_gates} in Section~\ref{sec:possible_difficulties}). In the following we show that such a vertex $G^C_t$ would have to be a direct successor of the vertex $v_0$ to be properly matched  (i.e., to lead to a connected match including the starting gate $G^C_r$), and hence, matching $G^C_t$ instead of $v$ would lead to equivalent matches. Indeed, if $G^C_t \notin \texttt{DirectSucc}(v_0,G^C)$, it has to be a successor of $v_0$ with at least one vertex in between (because $\comu{C_{v_0.\mathit{label}}}{C_{t}}\neq 0$, since $C_t=T_j=C_s$, $G^C_s\in  \texttt{DirectSucc}(v_0,G^C)$ and $s<t$). In this case, the vertex in between cannot be matched, and hence this scenario would not lead to a connected match (as long as we do not block the matched predecessors of $G^C_t$, which is not allowed since it would also block the starting vertex $G^C_r$).
\end{enumerate}

\item \textbf{Gate $C_s$  does not match with a gate $T_j$ with $j\in \mathit{CandidateIndices}$.} 
Similarly as above, one can see that matching the gate $C_s$ with any gate in $\TRight$ or in $\TLeft$ cannot lead to a connected match. Since  $G^C_s$ is a successor of the starting vertex  $G^C_r$, all the successors of $G^C_s$ can also not be matched, because matching them would not lead to a connected match. Hence, we only block vertices that could never be matched with any gate in $\TRight$ or in $\TLeft$, which shows the last statement of the lemma.
 \end{enumerate}

This finishes the proof.
\end{proof}

For the proof of the correctness of  \texttt{BackwardMatch}, we will further use the following fact related to the construction of the algorithm \texttt{ForwardMatch}.

\begin{definition}
A match $S$ (i.e., a set $S$ of matched gate indices) is called sub-match of a match $M$, if there exists a match $\tilde{S}$ that is equivalent to the match $S$ and such that $\tilde{S} \subset M$.
\end{definition}

\begin{lemma}[Characterization of non-maximal forward matches] \label{lem_partialForwardMatches}
Consider a circuit $C$, a pattern $\TRight$ that satisfies the property~\eqref{eq:ForwardMatch_template_property}, a qubit assignment specified by $L_{\textnormal{q}}^{\textnormal{sel}}$, and a start index $r$ in $C$.
Then, all possible (not necessarily maximal) matches $M$ of the pattern $\TRight$ in $C$ (with the specified qubit assignment and the starting gate $C_r$ matched with $\TRightI{1}$) are sub-matches of the maximal match $M^{\textnormal{max}}$ found by \texttt{ForwardMatch}.
\end{lemma}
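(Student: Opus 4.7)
The plan is to establish the claim by induction on the order in which \texttt{ForwardMatch} processes the vertices of $G^C$. Fix an arbitrary match $M$ of $\TRight$ in $C$ respecting the given qubit assignment and starting match $C_r \leftrightarrow \TRightI{1}$. I will build, step by step, a sub-match $\tilde S$ that is equivalent to $M$ and contained in the output $M^{\max}$ of \texttt{ForwardMatch}. As a preliminary observation, the same reasoning used at the start of the proof of Lemma~\ref{lem_forwardMatch} shows that every circuit gate appearing in $M$ corresponds to a vertex in $\texttt{Succ}(G^C_r, G^C) \cup \{G^C_r\}$: this uses property~\eqref{eq:ForwardMatch_template_property}, Lemma~\ref{lem:Property_CanonicalForm}, and connectedness of $M$. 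In particular, every such vertex is eventually visited by the while-loop of \cref{algo:ForwardMatch}, so the induction has something to say about every pair in $M$.

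For the inductive step, suppose the vertex $v$ with label $s$ is the next one considered. If $M$ contains a pair $(j, s)$, then by the inductive hypothesis every predecessor of $v$ matched in $M$ is already represented by an equivalent entry in $\tilde S$; hence $T_j$, or a gate of $T$ commutation-equivalent to $T_j$, lies in the set \textit{CandidateIndices} returned by \texttt{FindForwardCandidates}. Therefore the if-branch at line~\ref{ForwardMatch_if} of \cref{algo:ForwardMatch} fires and adds some $(j', s)$ with $T_{j'} = T_j$ to $M^{\max}$, which I append to $\tilde S$. If $M$ does not match $C_s$, no update to $\tilde S$ is needed, but one must still check that the action taken on $v$ does not spoil later inductive steps. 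The last assertion of Lemma~\ref{lem_forwardMatch} guarantees that any vertex the algorithm blocks could never appear in a valid match, hence not in $M$; and the greedy-is-safe analysis of case~1 in that proof shows that when \texttt{ForwardMatch} matches $v$ while $M$ leaves it unmatched, the two alternatives only differ by matches that are equivalent on all subsequently visited vertices.

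The main obstacle is the bookkeeping for commutation-equivalent choices: whenever several gates of $T$ could be matched to $C_s$, or several gates of $C$ could be assigned to the same gate of $T$, the tiebreak rules (``minimal label'') used in \cref{algo:ForwardMatch} may pick a different representative than $M$ does. My plan is to absorb all such ambiguities into the equivalence relation on matches, reusing the sub-case analysis in the proof of Lemma~\ref{lem_forwardMatch} (in particular the argument that any two such alternative assignments lie in the same equivalence class of pattern matches, since they differ only by commuting gates in $T$ or in $C$). Iterating the inductive step until every vertex in $\texttt{Succ}(G^C_r, G^C) \cup \{G^C_r\}$ has been processed then yields $\tilde S \subseteq M^{\max}$ with $\tilde S$ equivalent to $M$, which is precisely what Lemma~\ref{lem_partialForwardMatches} asserts.
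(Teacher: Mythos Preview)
Your proposal is correct and follows essentially the same approach as the paper's own proof: both arguments walk through the vertices of $G^C$ in the order \texttt{ForwardMatch} visits them and invoke the same two facts from the proof of Lemma~\ref{lem_forwardMatch} (greedy matching never precludes future non-equivalent matches, and blocked vertices can never be part of any valid match). The paper phrases this as a short proof by contradiction---find the first $C_s$ in $M$ that is not matched in $M^{\max}$ and observe that \texttt{ForwardMatch} would have matched it---whereas you unroll the same idea into an explicit inductive construction of $\tilde S$; the two are logically identical reformulations of one another.
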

\begin{proof}
Assume by contradiction that there is a match $M$ of $\TRight$ in the circuit $C$ (matching $\TRightI{1}$ with $C_r$), such that there is no sub-match in  $M^{\textnormal{max}}$ that is equivalent to $M$. 
Since the algorithm \texttt{ForwardMatch} goes through all gates that could possibly be matched with $\TRight$ (see the proof of Lemma~\ref{lem_forwardMatch}), there is a first gate $C_s$ in $M$ considered in this process that  is not matched  in  the maximal match  $M^{\textnormal{max}}$ found by \texttt{ForwardMatch} (and there is also no equivalent match). 
Since matching a gate in this process can never disturb future matches (apart from equivalent ones) as shown in case~\ref{proof_forward:case1} in the proof of Lemma~\ref{lem_forwardMatch}, we conclude that the gate $C_s$ could also be matched in the forward matching process. However, according to the algorithm \texttt{ForwardMatch}, the gate would then indeed be matched, leading to a contradiction with the assumption that $C_s$  is not matched  in  the maximal match  $M^{\textnormal{max}}$ found by \texttt{ForwardMatch}.
\end{proof}

\begin{lemma}[Correctness of \texttt{BackwardMatch} (Algorithm~\ref{algo:BackwardMatch})] \label{lem_backwardsMatch}
Consider a pattern $T$ and a starting index $i \in \{1,\dots,|T|\}$, and split the sub-pattern $(T_i,T_{i+1},\dots,T_{|T|})$ as $(T_i,T_{i+1},\dots,T_{|T|}) \simeq (\TLeft,\TRight)$, with $\TRight$ defined by Property~\eqref{eq:ForwardMatch_template_property}.
Suppose we are further given a circuit $C$ and a maximal match $\MRight$ of $\TRight$ in $C$, with a qubit assignment specified by $L_{\textnormal{q}}^{\textnormal{sel}}$ and a starting gate $C_{r}$ matched with $\TRightI{1}$. 
Then, Algorithm~\ref{algo:BackwardMatch}  finds all maximal matches of $T$ with qubit assignment $L_{\textnormal{q}}^{\textnormal{sel}}$ and starting gate $C_r = \TRightI{1}$.
\end{lemma}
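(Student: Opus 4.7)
The plan is to view \texttt{BackwardMatch} as a tree search whose nodes are the entries of the stack \textit{MatchingScenarios}, each tagged with a partial match $M$, a graph state, and a counter. I would establish two assertions: \emph{soundness}, that every match eventually added to the output list $W$ is a valid match of $T$ in $C$ extending $\MRight$ and containing the starting pair $(i,r)$; and \emph{completeness}, that every maximal match $M^\star$ of $T$ with qubit assignment $L_{\textnormal{q}}^{\textnormal{sel}}$ and starting pair $(i,r)$ is equivalent to a match produced at some leaf of the tree. Soundness is essentially by inspection: the guards in lines~\ref{BackwardMatch:adding_to_stack_condition_1} and~\ref{BackwardMatch:adding_to_stack_condition_2} enforce $(i,r)\in M$ and $\MLeft\subseteq M$, while \texttt{FindBackwardCandidates} excludes successors of $G^T_i$, so the matched region stays connected.

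The substance is completeness. Fix $M^\star$ and view it, after suitable commutations, as splitting $C$ into a left unmatched region, a matched middle, and a right unmatched region. I would argue by induction on the counter that after the $k$-th pop from the stack, some scenario on the stack has recorded decisions on $C_{s_1},\dots,C_{s_k}$ (the first $k$ entries of $\mathit{GateIndices}$, processed in decreasing label order) that agree with $M^\star$ up to equivalence. The inductive step analyses the current gate $C_s$: in $M^\star$ this gate either (a) lies in the matched middle and is paired with some $T_j\in\TLeft$, (b) lies in the right unmatched region, or (c) lies in the left unmatched region. Case~(a) is covered by the for-loop of line~\ref{algo_backwards_for_loop}; cases~(b) and~(c) are covered respectively by the right-block branches (Options~1.2a/1.2b/2.1a/2.1b) and the left-block branches (Options~1.2a/1.2c/2.1a/2.1c). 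The decreasing label order ensures that when $C_s$ is considered, all of its successors in $G^C$ have already been assigned, so blocking propagation is consistent with the inductive hypothesis.

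The two delicate points are the pruning of equivalent candidates in line~\ref{algo_index_pruning} and the guard on left-blocking in line~\ref{backwardsMatch_check_if_matches_destroyed}. For the former, if $j_1,j_2\in I$ with $[j_1,j_2]_T=0$, then by \cref{lem:Independence_CanonicalForm} the two scenarios obtained by pairing $C_s$ with $T_{j_1}$ versus $T_{j_2}$ yield equivalent matches, so dropping one representative is harmless. For the latter, an exchange argument in the spirit of \cref{lem_partialForwardMatches} shows that if some matching branch of the for-loop in line~\ref{algo_backwards_for_loop} is clean (i.e.\ does not trigger the blocking step of line~\ref{algo:blocking}), then any maximal match that left-blocks $C_s$ can be transformed into an at-least-as-long match that instead matches $C_s$, since matching $C_s$ leaves all of its predecessors available to later rounds whereas left-blocking removes them from the candidate set. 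The principal obstacle in turning this sketch into a rigorous proof is the careful bookkeeping around the shared graph objects versus the copies $(\tilde{G}^C,\tilde{G}^T)$ produced by $\texttt{copy}$, and verifying that the induction hypothesis is faithfully maintained across both the branching and the non-branching pushes. Once this is spelled out, the final step that retains only sub-maximal entries of $W$ yields exactly the maximal matches of $T$ extending $\MRight$, up to equivalence.
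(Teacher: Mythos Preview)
Your proposal is essentially the paper's own argument, recast more explicitly as a soundness/completeness split with an induction on the counter; the paper instead phrases the same content as a per-iteration case analysis showing that every option which could lead to a maximal match is pushed onto the stack, but the underlying reasoning (the three cases match/right-block/left-block, the exchange argument justifying the guard in line~\ref{backwardsMatch_check_if_matches_destroyed}, and the pruning of commuting candidates in line~\ref{algo_index_pruning}) is identical.

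Two small points of precision are worth tightening. First, your soundness statement that each output match ``extends $\MRight$'' is not literally true: \texttt{BackwardMatch} may unmatch forward gates when right-blocking, so the outputs contain $(i,r)$ and the current $\MLeft$, but only a \emph{subset} of $\MRight$. Second, and more substantively, your induction is over the gates in $\mathit{GateIndices}$, which by construction excludes all of $\CRight$; hence your case analysis never touches the forward side of $M^\star$. To close the argument you must explicitly invoke \cref{lem_partialForwardMatches} (as the paper does in its opening paragraph) to conclude that the forward part of any maximal $M^\star$ is, up to equivalence, a sub-match of $\MRight$, so that the initial scenario pushed onto the stack is already compatible with $M^\star$ on $\CRight$. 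Without this step, your induction has no valid base case for a target $M^\star$ whose forward portion differs from $\MRight$.
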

Let us first give an overview of the proof idea. Since we have already given a maximal match of $\TRight$ in $C$, it remains to verify how many gates of $\TLeft$ can be matched. In general, it can happen that a gate that disturbs the match of $\TLeft$ can be moved to the right in $C$. This may allow us to match further gates in $\TLeft$, but may also disturb the maximal match of $\TRight$. To handle this tradeoff, we have to consider both possibilities: (i) moving the disturbing gate as far as possible to the right, and (ii) moving it as far as possible to the left. These options correspond to blocking the successors or the predecessors of the vertex corresponding to the disturbing gate in the canonical form of the circuit $C$. We then have to keep track of both possibilities and go on with matching in both cases building up a tree of possible matching scenarios that could lead to a maximal match. Let us now give a detailed proof of the correctness of the algorithm.

\begin{proof}[Proof of Lemma~\ref{lem_backwardsMatch}]
 The matching process works with the canonical representations $G^C$ of the circuit $C$, where the matched gates listed in $\MRight$ are marked as matched in $G^C$, and the remaining gates in the forward part are blocked (corresponding to the state after running \texttt{ForwardMatch}). 
Let us first show that the strategy of first finding the maximal forward match, and then maximally extending it in the backward direction, indeed leads to all possible maximal matches of the full pattern. 
A priori, it could happen that a non maximal forward match could lead to a maximal match of the full pattern. However, by Lemma~\ref{lem_partialForwardMatches}, different forward matches could only be sub-matches of the maximal match. Since in the process of backwards matching (as analyzed in detail below) further matched gates on the right of the starting gate $C_r$ with gates in $\TRight$ can only lead to longer matches, we conclude that the strategy of first finding the maximal forward match is indeed optimal.\\

Let us know consider the backward matching process. The while-loop of the matching process in Algorithm~\ref{algo:BackwardMatch} goes through all the vertices in the canonical form $G^C$ of the circuit that are not blocked or already matched (as long as there are gates left to consider in $\TLeft$ that could possibly be matched). Since the blocked gates will never match (see Lemma~\ref{lem_forwardMatch}), we loop through all vertices corresponding to gates that could possibly be matched with $\TLeft$. The indices of these vertices are stored in a list $\mathit{GateIndices}$ in decreasing order (see line~\ref{BackWardMatch_GateIndices1} and line~\ref{BackWardMatch_GateIndices2} in Algorithm~\ref{algo:BackwardMatch}). The variable $\mathit{counter}$ keeps track of the number of gates with indices listed in $\mathit{GateIndices}$ that we have already considered in the backwards matching process. During the matching, we create a stack of possible matching scenarios that may lead to a maximal match. All of these scenarios are then considered for further matching in the next steps of the while-loop. Therefore, it remains to show that each step in the while-loop with parameters $(G^C,M,\mathit{counter})$ puts all the matching possibilities for the gate with index  $s=(\mathit{GateIndices})_\mathit{counter}$ that might lead to a maximal match on the stack $\mathit{MatchingScenarios}$ for further investigation. The indices  \textit{CandidateIndices} of the gates in the pattern that could possibly match are found by Algorithm~\ref{algo:FindBackwardCandidates}. To show the correctness of one cycle of the while-loop, let us consider two cases separately. 
\begin{enumerate}
\item \textbf{Gate $C_s$ matches with at least one gate in the pattern.} \label{item:BackwardMatch_case_1}
We have to consider all possible cases that could lead to a maximal match:
\begin{enumerate}
    \item Match the gate $C_s$ with any candidate  $T_j=C_s$ with $j \in \mathit{CandidateIndices}$,\label{case_1_backward_match_proof}
    \item right-block the vertex $C_s$ (i.e., block the vertex and move it as far as possible to the right in the circuit $C$),\label{case_2_backward_match_proof}
    \item left-block the vertex $C_s$ (i.e., block the vertex and move it as far as possible to the left in the circuit $C$).\label{case_3_backward_match_proof}
\end{enumerate}
Instead of always adding all of these options to the stack $\mathit{MatchingScenarios}$, we can save runtime by ignoring options that can not lead to maximal matches:

In case~\ref{case_1_backward_match_proof}, there might be successors of the matched gate in the pattern that were not matched so far. In this case, we block these vertices and all of its successors in the pattern, since they could never be matched later on (since we traverse trough the gates from right to left in the circuit). If we block the initial match of $C_r$ with $T_i$, we do not have to consider this matching scenario, since we consider the initial match to be fixed. Further, if we block a gate $T_l$ that was matched during \texttt{BackwardMatch}, we do also not have to consider this matching scenario, since it is put on the stack $\mathit{MatchingScenarios}$ during another interation of the while-loop where $T_l$ was chosen to be not matched. 

The reasoning for case~\ref{case_2_backward_match_proof}, is analogous to case~\ref{case_1_backward_match_proof}, but considering blocking in the circuit instead of in the pattern.

The case~\ref{case_3_backward_match_proof} only has to be considered if there is no candidate in case~\ref{case_1_backward_match_proof} whose matching causes no blocking of already matched gates. Indeed, let us show this in detail and assume that there is a candidate $T_j$ that matches $C_s$ without disturbing any already matched gates. We need to show that blocking the gate $C_s$ and moving it as far as possible to the left cannot lead to a maximal match (that is not equivalent to another one that is found by the algorithm).
Indeed, moving the gate to the left (and ignoring it for the further matching process) could lead to one of the following two cases:
\begin{enumerate}
\item we match the gate  $T_j$ later on in the matching process with a different gate $C_{t}$ with $t<s$ in the circuit, \label{it_label_a}
\item we never match the gate $T_j$.
\end{enumerate}
It can be verified that case~(a) leads to matches that are equivalent to the ones where we match $C_s$ with $T_{j}$. Indeed, since we block all the predecessors of the vertex $G^C_s$, the vertex $G^C_{t}$ cannot be a predecessor of $G^C_s$. Hence, by Lemma~\ref{lem:Property_CanonicalForm}, the gate $C_t$ can be moved next to $C_s$, and we end up with equivalent matches.

In case (b), not matching $T_j$ with $C_s$ can only disturb the backwards matching process later on, since in this case gates that do not commute to the right of $C_s$ are blocked and the additional matching candidate $T_j$ will never be matched by assumption.
Hence, we can ignore the case where we move $C_s$ as far as possible to the left.

\item \textbf{Gate $C_s$ does not match with any gate in the pattern with index $s \in \mathit{CandidateIndices}$.} One can see that the gate $C_s$ can never match and may disturb the matching. 
We consider both options: moving it as far as possible to the left and to the right in the circuit, and blocking the predecessors or successors, respectively. If we can move the gate $C_s$ to the right of all the matched gates in $\MRight$ or to the start of the circuit $C$ (see the condition in line~\ref{BackwardMatch_move_to_start_end} in Algorithm~\ref{algo:BackwardMatch}) without blocking any previously matched gates, doing so will lead to a maximal match, since the gate $C_s$ does not disturb the following matching process or the current match. Hence, in this case we only add this possibility to the stack $\mathit{MatchingScenarios}$. 
Otherwise, we do not know whether left-blocking or right-blocking will turn out to be the better choice overall, so we have to add both options.
As in case~\ref{item:BackwardMatch_case_1}, we do not have to add matching conditions to the stack if we disturb the initial match or any match with gates in $\TLeft$.
\end{enumerate}
This finishes the proof.
\end{proof}

We are finally ready to prove Theorem~\ref{thm_correctness}.
\begin{proof}[Proof of Theorem~\ref{thm_correctness}]
Consider a maximal match $M$ of a pattern $T$ in the circuit $C$. We need to show that this match is found by Algorithm~\ref{algo:TempMatch}. The algorithm loops over all possible assignments of pattern qubits to circuit qubits, formalised by considering patterns $\tilde{T}$ obtained from the original pattern $T$ by qubit permutation.
Looping over all possible assignments ensures that there is a run of the loop with a pattern $\tilde{T}$, such that for all the  index pairs $(j,s) \in M$, we have $\tilde{T}_j=C_s$ (which in particular means that the two gates are acting on qubits with the same labels).
Assume that the indices of the gates $\tilde{T}_1,\tilde{T}_2,\dots,\tilde{T}_k$ are not listed (as a first entry of an element) in $M$, but $\tilde{T}_{k+1}$ is the first matched gate in the pattern, i.e., there is a tuple $(k+1,r) \in M$ for some $r \in \{1,2,\dots,|C| \}$.
Then, once in the loop of the algorithm \texttt{PatternMatch}, we set the start index of the pattern $i:=k+1$ and the start index in the circuit to $r$. By the correctness of \texttt{ForwardMatch} and \texttt{BackwardMatch} (see Lemma~\ref{lem_forwardMatch} and Lemma~\ref{lem_backwardsMatch}), all the maximal matches (up to equivalent ones) of $(\tilde{T}_{i},\dots,\tilde{T}_{|\tilde{T}|})$ in the circuit $C$ are found starting with matching the gate $T_i$ with $C_r$. Since the gates  $\tilde{T}_1,\tilde{T}_2,\dots,\tilde{T}_{i-1}$ are not contained in the match we are searching for, \texttt{PatternMatch} hence finds the maximal match $M$ (or an equivalent one).
\end{proof}

\section{Combinatorial algorithm} \label{app:combinatorial_algo}

Robin Kothari~\cite{kothari_private} pointed out a simple combinatorial algorithm for pattern matching, which also has polynomial scaling in the circuit size for a fixed pattern size. 
For the sake of completeness, we give a short overview of the main idea here and estimate the worst-case complexity.
As shown in \cref{fig:numerics_scaling_C}, while the worst-scaling is comparable to that of our algorithm, the practical performance is not.

The idea of the algorithm is to search for \emph{complete} matches, i.e., matches where every gate in the pattern is matched, by simply looping through all possible assignments of gates in the pattern to gates in the circuit, and checking whether any such assignment constitutes a complete match. Given such an algorithm for complete matches, one can then search for all maximal (possibly incomplete) matches of a pattern in a circuit by running the algorithm for all possible sub-patterns, i.e., any connected sub-sequence of gates in the pattern.

Clearly, such an algorithm is computationally quite expensive. 
However, the worst-case time complexity is still polynomial in the circuit size (for a fixed pattern size). 
To see this, first we note that there are $2^{|T|}$ different subsets of gates that can be chosen from the $|T|$ gates in the pattern. 
If we create the canonical for of the whole pattern once at the start (in time $\mathcal{O}(|T|^3)$), then checking if a subset of the pattern is connected can be done in time $\mathcal{O}(|T|^2)$  (by checking for each vertex in $G^T$ not contained in the chosen subset whether it has both successors and predecessors  that are contained in the subset; if this is the case, the subset is not connected; otherwise, it is a valid sub-pattern). 
Hence, all the sub-patterns can be found in time $\mathcal{O}(2^{|T|}|T|^2)$. 

For the remainder of the algorithm we need the canonical forms of all sub-patterns. 
Creating them requires $\mathcal{O}(|T|^3)$ time per sub-pattern\footnote{Instead of creating the canonical form for each sub-pattern from scratch, we could also cut out the canonical graphs for the sub-patterns from $G^T$, and update the lists of successors and predecessors in time $\mathcal{O}(|T|^2 \log{|T|})$, using binary search to remove the labels of the vertices that do not appear in the sub-pattern from the ordered lists. For simplicity, here we just assume that we create the canonical from scratch.}. This takes time $\mathcal{O}(2^{|T|}|T|^3)$ in total for all sub-patterns. 

For each sub-pattern, we consider all assignments of gates in the sub-pattern to gates in the circuit, of which there are
$$\mathcal{O}\left({|C| \choose |T|} |T|! \right) \, .$$

Each gate assignment was done without considering restrictions imposed by the structure of the circuit, i.e., disregarding the order of the gates, and which qubits they act on. Now, for each assignment, we have to check whether it describes a valid match. More concretely, we have to check whether
\begin{enumerate}
    \item  the gates in the pattern can be reordered to match the order of the gates they have been assigned to in the circuit, \label{item:check_order}
    \item the gates in the circuit that have had a pattern gate assigned to them form a connected sub-circuit, and \label{item:check_con}
    \item  there is a consistent global assignment of pattern qubits to circuit qubits such that all gates act on the requisite qubits. \label{item:check_qubit_ass}
\end{enumerate}

Let us consider a canonical form $\tilde{G}^T$ of a sub-pattern together with an assignment to gates in the circuit $C$. We denote the vertex in $G^C$ that is assigned to a vertex $v \in \tilde{G}^T$ by $v^{C}$. To check condition~\ref{item:check_order}, for each vertex $v \in \tilde{G}^T$  we check whether its successors are also assigned with successors of $v^{C}$ in $ G^C$. For each vertex in $\tilde{G}^T$, this can be done in time $\mathcal{O}(|T|\log(|C|))$, by looping trough the list of successors of $v$ (which is stored as an ordered list as part of $\tilde{G}^T$), and searching for the gates assigned to them using binary search in the ordered list of successors of $v^C$. Repeating this for all $|T|$ vertices, checking~\ref{item:check_order} requires time  $\mathcal{O}(|T|^2\log(|C|))$ in total. 

Checking condition~\ref{item:check_con}  can be done analogously to checking whether a subset of gates of the pattern is connected (which we do as part of checking condition~\ref{item:check_order}), and hence requires time $\mathcal{O}(|C|^2)$.

 Finally, checking~\ref{item:check_qubit_ass} can be done by collecting the qubit labels of all the qubits for which there is a  matched gate in the circuit (i.e., one that is assigned to one in the sub-pattern) that acts non trivially on it. We directly stop this proccess if more than $n_T$ qubits are collected, since in this case, there cannot be a valid qubit assignment. Hence, this can be done in time $\mathcal{O}(n_T |T|)$, where $n_T$ is the number of qubits in the pattern. If the number of collected qubits corresponds to the number of qubits in the sub-pattern, we loop trough all permutations of the qubit assignments, of which there are at most $n_T!$, and check in time $\mathcal{O}(n_T |T|)$ whether the qubit assignment is valid. Hence, checking~\ref{item:check_qubit_ass}, takes time $\mathcal{O}(n_T |T| + n_T! n_T |T|) = \mathcal{O}(n_T! n_T |T|)$.\footnote{In our numerical experiments, we use a somewhat different implementation that is more efficient in practice. For simplicity of describing our algorithm, let us assume that there are no gates in the pattern that have the same (non-trivial) action on different qubits (so we assume that e.g. the circuit contains no Toffoli gates, since the Toffoli gate acts in the same non-trivial way on its two control qubits). In this case, we can just loop trough the gates in the pattern and assign the corresponding pattern qubits to the qubits in the circuit. If this does not lead to contradicting assignments, we have found a valid qubit assignment. If gates like the Toffoli gate are present, one needs to check the different possible assignments separately, which is only efficient if the pattern contains a small number of such gates.}

Therefore, checking all three conditions requires time $\mathcal{O}(|T|^2\log(|C|) + |C|^2 + n_T! n_T |T|)$

We conclude that the worst case complexity of the combinatorial algorithm is given by
\begin{align}
&\mathcal{O}\left(2^{|T|}|T|^3{|C| \choose |T|} |T|! \left \{|T|^2\log(|C|) + |C|^2 + n_T! n_T |T|\right\} \right ) \\
\leq & \mathcal{O}\left(2^{|T|}|T|^3 |C|^{|T|} \left \{|C|^2\log(|C|) n_T! n_T |T|\right\} \right )
\\
\leq & \mathcal{O}\left(n_T!n_T2^{|T|}|T|^4 |C|^{|T|+3}  \right ) \, ,
\end{align}
where we chose a somewhat loose upper bounded to simplify the expression.
The worst-case complexity is polynomial in the circuit size for a fixed pattern size and has a similar scaling in $|C|$ as our algorithm (see \cref{eq_runntime}). However, crucially, the worst-case complexity of the combinatorial algorithm is close to the average-case complexity, since it really has to loop trough all possible assignments, whereas our algorithm performs much better in practice as demonstrated in Figure~\ref{fig:comparison_2_mean}. 
We also note that the scaling of the combinatorial algorithm is independent of $n_C$ (whereas our algorithm scales with $n_C^{n_T - 1}$); however, in practice we typically have that $|C| \gg n_C$, so this is of little use for most applications.

\newpage


\bibliographystyle{alpha}
\bibliography{template_bib}

\end{document}